\setlist[description]{leftmargin=15pt,labelindent=15pt}
\definecolor{darkgreen}{rgb}{0,0.5,0}
\newcommand{\bone}{\mathbf{1}}
\newcommand{\bzero}{\mathbf{0}}
\newcommand{\bv}{\bar{v}}
\newcommand{\hv}{\hat{v}}
\newcommand{\Cov}{\mathsf{Cov}}
\newcommand{\CIsub}{\mathcal{I}_{\textsf{block}}}
\newcommand{\zl}[1]{z_{\text{long}, #1}}
\newcommand{\zs}[1]{z_{\text{short}, #1}}
\newcommand{\bmu}{\mathbf{\mu}}
\newcommand{\bCov}{\mathbf{\Sigma}}
\newif\ifnotes\notestrue
 \definecolor{mygrey}{gray}{0.50}
 \newcommand{\notename}[2]{{\textcolor{red}{\footnotesize{\bf (#1:} {#2}{\bf ) }}}}
 \newcommand{\notename}[2]{{}}
\newtheorem{theorem}{Theorem}[section]
\newtheorem{claim}[theorem]{Claim}
\newtheorem{proposition}[theorem]{Proposition}
\newtheorem{lemma}[theorem]{Lemma}
\newtheorem{corollary}[theorem]{Corollary}
\newtheorem{conjecture}[theorem]{Conjecture}
\newtheorem{remark}[theorem]{Remark}
\newtheorem{definition}[theorem]{Definition}
\newtheorem{open}[theorem]{Open Problem}
\newcommand{\R}{\mathbb{R}}
\newcommand{\BR}{\mathbb{R}}
\newcommand{\BE}{\mathbb{E}}
\newcommand{\BP}{\mathbb{P}}
\newcommand{\CP}{\mathfrak{P}}
\newcommand{\pmone}{\{\pm1\}}
\newcommand{\CI}{\mathcal{I}}
\newcommand{\CE}{\mathcal{E}}
\newcommand{\polylog}{\mathsf{polylog}}
\newcommand{\poly}{\mathsf{poly}}
\newcommand{\infnorm}[1]{\left\| #1 \right\|_{\infty}}
\newcommand{\eps}{\epsilon}
\newcommand{\disc}{\mathsf{Disc}}
\newcommand{\CS}{\mathcal{S}}
\newcommand{\CT}{\mathcal{T}}
\newcommand{\calT}{\mathcal{T}}
\newcommand{\p}{\mathbb{P}}
\newcommand{\herdisc}{\mathsf{HerDisc}}
\newcommand{\LP}{\mathsf{LP}}
\newcommand{\diag}{\mathsf{diag}}
\newcommand{\B}{\mathbb{B}}
\renewcommand{\S}{\mathbb{S}}
\newcommand{\predisc}{\mathsf{PrefixDisc}}
\newcommand{\prefix}{\mathsf{prefix}}
\newcommand{\pathdisc}{\mathsf{DAGDisc}}
\newcommand{\cdisc}{\mathsf{CombDisc}}
\renewcommand{\CT}{\calT}
\renewcommand{\CP}{\mathcal{P}}
\newcommand{\gauss}{\mu_d}
\newcommand{\altqedhere}{%
  \ifmeasuring@\else\sbox0{\popQED}\fi
  \tag*{\qedsymbol}%
}
\newcommand{\haotian}[1]{{\color{blue} \textbf{Haotian:} #1}}
\newcommand{\ignore}[1]{{}}
\newtheorem{thm}{Theorem}[section]
\newtheoremstyle{TheoremNum}
        {\topsep}{\topsep}             
        {}                    
        {}                             
        {\bfseries}                  
        {}                            
        { }                           
        {\thmname{#1}\thmnote{ \bfseries #3}}
    \theoremstyle{TheoremNum}
\title{Prefix Discrepancy, Smoothed Analysis, \\ and  Combinatorial Vector Balancing}
\author{ Nikhil Bansal\thanks{University of Michigan, Ann Arbor, \texttt{bansal@gmail.com}. Supported in part by the  NWO VICI grant 639.023.812. The work was done when the author was at CWI, Amsterdam.} \and
 Haotian Jiang\thanks{Paul G. Allen School of CSE, University of Washington, Seattle, \texttt{jhtdavid@cs.washington.edu}.
Supported in part by the National Science Foundation, Grant Number CCF-1749609, CCF-1740551, DMS-1839116.
} \and
 Raghu Meka\thanks{Department of Computer Science, University of California, Los Angeles, \texttt{raghum@cs.ucla.edu}. Supported by NSF Grant CCF-1553605. 
} \and
Sahil Singla\thanks{Georgia Tech, \texttt{ssingla@gatech.edu}. The work was partly done when the author was at Princeton University.} \and 
 Makrand Sinha\thanks{Simons Institute and UC Berkeley, \texttt{makrand@berkeley.edu}. Work done while at CWI Amsterdam and supported by the NWO VICI grant 639.023.812.}
}
\date{}
\begin{document}

\maketitle
\pagenumbering{roman}
A well-known result of Banaszczyk in discrepancy theory concerns the \emph{prefix discrepancy} problem (also known as the {\em signed series} problem): given a sequence of $T$ unit vectors in $\R^d$,  find $\pm$ signs for each of them such that the signed sum vector along any prefix has a small $\ell_\infty$-norm? 
This problem is central to proving upper bounds for the Steinitz problem, and 
the popular  Koml\'os problem is a special case where one is only concerned with the final signed sum vector instead of all prefixes. 

Banaszczyk gave an $O(\sqrt{\log d+ \log T})$ bound for the  prefix discrepancy problem.  We investigate the tightness of Banaszczyk's bound and consider natural generalizations of prefix discrepancy:

\begin{itemize}
    \item We first consider a \emph{smoothed analysis} setting, where a small amount of additive noise perturbs the input vectors. We show an exponential improvement in $T$ compared to Banaszczyk's bound. Using a primal-dual approach and a careful chaining argument, we show that one can achieve a bound of $O(\sqrt{\log d+ \log\!\log T})$ with high probability in the smoothed setting.
    Moreover, this smoothed analysis bound is the best possible without further improvement on Banaszczyk's bound in the worst case.
    
    \item We also introduce a generalization of the prefix discrepancy problem to arbitrary DAGs. Here, vertices correspond to unit vectors, and the discrepancy constraints correspond to paths on a DAG on $T$ vertices --- prefix discrepancy is precisely captured when the DAG is a simple path. We show that an analog of Banaszczyk's $O(\sqrt{\log d+ \log T})$ bound continues to hold in this setting for adversarially given unit vectors and that the $\sqrt{\log T}$ factor is unavoidable for DAGs. We also show that unlike for prefix discrepancy, the dependence on $T$ cannot be improved significantly in the smoothed case for DAGs. 
    
    \item We conclude by exploring a more general notion of vector balancing, which we call \emph{combinatorial vector balancing}. In this problem, the discrepancy constraints are generalized from paths of a DAG to an arbitrary set system.  We obtain near-optimal bounds in this setting, up to poly-logarithmic factors.
\end{itemize}

\ignore{
A well-known result of Banaszczyk in discrepancy theory concerns the prefix discrepancy problem (also known as the signed series problem): given a sequence of $T$ unit vectors in $\R^d$, can one find $\pm$ signs such that signed sum vector along any prefix  has a small $\ell_\infty$-norm? 
\haotian{(Do we want to add the following sentence here?) The prefix discrepancy problem has major applications to the Steinitz problem.}
The popular  Koml\'os problem is a special case where one is only concerned with the final signed sum vector, instead of all prefixes. Banaszcyck gave an $O(\sqrt{\log d+ \log T})$ bound for the  prefix discrepancy problem. 

We investigate the tightness of Banaszcyck's bound and consider natural generalizations of prefix discrepancy:


\begin{itemize}
    \item We first consider a \emph{smoothed analysis} setting, where one is allowed to perturb the input vectors by adding a small noise and show that the dependence on $T$ in Banaszcyzk's bound can be exponentially improved. Using a primal-dual approach and chaining arguments, we show that one can achieve a bound of $O(\sqrt{\log d+ \log\!\log T})$ with high probability in the smoothed setting. 
    
    \item We  study a notion of prefix discrepancy for DAGs --- here vertices correspond to unit vectors and the discrepancy constraints correspond to paths on a DAG on $T$ vertices (this captures prefix discrepancy when the DAG is the directed path on $T$ vertices). We show that an analog of Banaszczyk's $O(\sqrt{\log d+ \log T})$ bound continues to hold in this setting for adversarially given unit vectors. We also show that the dependence on $T$ cannot be improved exponentially in the smoothed case for DAGs, unlike in the case of prefix discrepancy. 
    
    \item We conclude by exploring a more general notion of vector balancing, which we call \emph{combinatorial vector balancing}, where the discrepancy constraints correspond to an arbitrary set system. \haotian{Do we want to say the combinatorial vector balancing problem captures the prefix discrepancy problem and for DAGs?} We obtain near-optimal bounds in this setting, up to poly-logarithmic factors.
\end{itemize}
}

\setcounter{tocdepth}{1}

 
{\small
\begin{spacing}{0}
   \tableofcontents
\end{spacing}
}

\newpage

\pagenumbering{arabic}
\setcounter{page}{1}
\allowdisplaybreaks

\section{Introduction}
  

Given a sequence of $T$ vectors $v_1, \ldots, v_T$ in the Euclidean unit ball $\B_2^d$ (i.e., of $\ell_2$-length at most $1$), the \emph{prefix discrepancy} problem, also widely known as the \emph{signed series} problem, asks to find a coloring\footnote{A coloring is also referred to as a signing in the literature. We will use these two words interchangeably.}  $x \in \pmone^T$ to minimize $\max_{\tau \in [T]} \| \sum_{t\leq \tau} x_t v_t \|_\infty$. The classical Koml\'os problem is a special case of this problem where we want to minimize only the final discrepancy $\| \sum_{t\leq T} x_t v_t \|_\infty$. The prefix discrepancy problem was introduced by Spencer~\cite{Spencer77} who showed that there always exists a coloring with a discrepancy bound that only depends on $d$ (i.e., independent of $T$). This was later improved to $O(d)$ by Barany-Grinberg~\cite{BaranyGrinberg81}. Since then, the prefix-discrepancy problem has been greatly studied and has found several important applications. For instance, it implies a bound on the classical Steinitz problem~\cite{Chobanyan94,Steinitz-16} on the rearrangement of vector sequences, it appears in online discrepancy problems where one is interested in bounding the discrepancy at all times~\cite{BJSS20,BJMSS-SODA21,ALS-STOC21}, and it also implies the best known bound for Tusn\'ady's problem~\cite{Nikolov-Mathematika19}. Additionally, the Steinitz problem itself has many applications, including faster algorithms for solving integer programs~\cite{BMMP12,EW18,JR-ITCS19} and scheduling \cite{Barany81, Sevastjanov94}.

In a remarkable result, Banaszczyk~\cite{B12} showed that the prefix-discrepancy problem admits a coloring of $O(\sqrt{\log d + \log T})$ discrepancy, thereby exponentially improving the dependency on $d$ in the Barany-Grinberg bound, but incurring a $\sqrt{\log T}$ dependence on the number of vectors $T$. 

\begin{theorem}[\cite{B12}]
\label{thm:Banaszczyk12}
Given $v_1, \cdots, v_T \in \B_2^d$, there always exists a signing $x \in \pmone^T$ such that  
\[ \max_{\tau \in [T]} \Big\| \sum_{t\leq \tau} x_t v_t \Big\|_\infty = O(\sqrt{\log d+ \log T}) \enspace. \]
\end{theorem}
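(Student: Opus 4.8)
The plan is to deduce the statement from Banaszczyk's theorem on balancing vectors inside a convex body measured by Gaussian mass, applied to a cube. The form of that theorem I would invoke is: there is a universal constant $c_0 > 0$ such that, for any convex symmetric body $K \subseteq \R^d$ with $\gamma_d(K) \ge 1 - \tfrac{1}{2T}$ (here $\gamma_d$ is the standard Gaussian measure on $\R^d$) and any $u_1, \dots, u_T \in \R^d$ with $\|u_i\|_2 \le c_0$, there exist signs $\eps \in \pmone^T$ with $\sum_{i \le \tau} \eps_i u_i \in K$ for \emph{every} $\tau \in [T]$. The crucial feature, and the source of the $\log T$ term, is that \emph{all} partial sums --- not merely the last one --- are required to stay in $K$; forcing this over a horizon of length $T$ is what makes one take $K$ ``deep'', i.e.\ with Gaussian mass close to $1$ rather than just at least $\tfrac12$ (which would suffice for the Koml\'os problem).

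Granting this tool, the proof is short. Put $r := \sqrt{2\ln(2dT)} = O(\sqrt{\log d + \log T})$ and $K := r \cdot [-1,1]^d$. Each coordinate of a standard Gaussian lies outside $[-r,r]$ with probability at most $e^{-r^2/2}$, so by a union bound over coordinates $\gamma_d(K) \ge 1 - d\, e^{-r^2/2} \ge 1 - \tfrac{1}{2T}$ by our choice of $r$. Apply the tool with $u_i := c_0 v_i$, which is legitimate since $v_i \in \B_2^d$ gives $\|u_i\|_2 \le c_0$: we get $\eps \in \pmone^T$ with $c_0 \sum_{i \le \tau} \eps_i v_i \in r\,[-1,1]^d$ for all $\tau$, that is, $\big\| \sum_{i \le \tau} \eps_i v_i \big\|_\infty \le r/c_0 = O(\sqrt{\log d + \log T})$, as claimed. (Equivalently one can relax the Gaussian-mass requirement along the way, demanding $\gamma_d(K_\tau) \ge 1 - \tfrac{1}{2(T - \tau + 1)}$ for a shrinking family of cubes $K_\tau$; this does not change the final bound.)

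Thus all the work sits in the black box. To prove Banaszczyk's theorem I would follow Banaszczyk's own route: construct, for a convex symmetric $K$ with $\gamma_d(K)$ close to $1$, a ``subgaussian-type'' potential $F \colon \R^d \to [0,\infty)$ --- morally a Gaussian-smoothed version of $\mathbf 1[x \notin K]$, built from convolutions of the indicator of $K^c$ with Gaussian densities at a suitable scale --- having three properties: (i) $F(x) \ge 1$ whenever $x \notin K$; (ii) $F(0) < 1$, quantitatively once $1 - \gamma_d(K) = O(1/T)$; and (iii) a one-step supermartingale inequality under the sign choice, e.g.\ $\tfrac12\big(F(x+u) + F(x-u)\big) \le F(x)$ for all $x$ and all $\|u\|_2 \le c_0$, hence $\min\{F(x+u), F(x-u)\} \le F(x)$. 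Picking the signs $\eps_i$ greedily so as never to increase the potential then gives $F\big(\sum_{i \le \tau} \eps_i u_i\big) \le F(0) < 1$ for all $\tau$, so by (i) every partial sum lies in $K$. I expect property (iii) --- the analytic heart, proved via log-concavity of the Gaussian and convexity of $K$, controlling how a smoothed Gaussian potential changes under a small perturbation --- to be the main obstacle; property (ii) is where the ``depth'' of $K$, i.e.\ the horizon $T$, enters, essentially a union bound over the $T$ steps absorbed into the potential.

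Finally, I would flag that the $\log T$ term obtained this way is precisely the term whose necessity (for the path, i.e.\ honest prefix discrepancy) is not known --- it is the subject of the later sections --- so one should not expect to shave it off here by a cleverer choice of body; known algorithmic alternatives (a Gram--Schmidt-type walk, or a self-balancing random walk) reach the same $O(\sqrt{\log d + \log T})$ guarantee but likewise do not improve the dependence on $T$.
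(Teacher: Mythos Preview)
Your reduction is exactly what the paper does: it cites the convex-body form of Banaszczyk's theorem (a body $K$ with $\gauss(K)\ge 1-\tfrac{1}{2T}$ traps all partial sums after scaling the vectors by a constant) and then instantiates $K$ as the cube $O(\sqrt{\log dT})\cdot \B_\infty^d$. So at the level of ``black box plus cube'' you match the paper verbatim; the paper in fact proves the black box only in the more general tree setting (Lemma~\ref{lem:prefix_trees}), which specializes to the path to yield Theorem~\ref{thm:Banaszczyk12}.

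Where you diverge is inside the black box. Banaszczyk's actual argument, and the one the paper reproduces, is \emph{not} a forward greedy potential: it is a backward construction of convex bodies via the symmetrization operator $K*u$ (Definition~\ref{defn:ban_symmetrization}). One sets $K_T:=K$ and $K_{t-1}:=K\cap(K_t*v_t)$, and the key analytic fact is Lemma~\ref{lem:ban_symmetrization}: if $\gauss(K_t)\ge \tfrac12$ and $\|v_t\|_2\le \tfrac15$ then $\gauss(K_t*v_t)\ge \gauss(K_t)$, so a union bound gives $\gauss(K_0)\ge 1-T\delta\ge \tfrac12$, hence $0\in K_0$, and one then \emph{reads off} the signs forward. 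Your proposed potential $F$ with the pointwise supermartingale property $\tfrac12(F(x+u)+F(x-u))\le F(x)$ for all $x$ and all short $u$ is a different (and stronger) object; for a general convex $K$ it is not obvious how to build such an $F$ from Gaussian-smoothed indicators, and you are right to flag (iii) as the obstacle. The symmetrization route sidesteps this by never asking for a single scalar potential --- it tracks an entire convex body at each step --- and this is also what makes the extension to trees in Lemma~\ref{lem:prefix_trees} go through cleanly (intersect over children). If you want to pursue the potential formulation, the closest rigorous incarnation is still the $K*u$ lemma, with $\gauss(K_t)$ playing the role of ``$1-F$''.
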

Given the many applications of prefix discrepancy, a natural question is whether we can improve this dependency on $T$ in \Cref{thm:Banaszczyk12} while still being poly-logarithmic in $d$. Optimistically, one could even hope for removing the dependence on $T$ altogether. Variants and special cases of this question have previously been asked~\cite{Spencer86Prefix,B12} 
and remain open (discussed in \Cref{sec:open}). Besides being a natural (and beautiful) question at the border of combinatorics and geometry, further motivation to study prefix discrepancy comes from the Koml\'os problem: there is currently no known separation between the prefix discrepancy and the Koml\'os problems\footnote{Note that for the Koml\'os problem one can always assume that $T\le d$ using a standard linear algebraic trick, while this might not be possible if one considers all prefixes of the signed sum vector.}. Thus, obtaining better upper or lower bounds for prefix discrepancy will also help us understand the avenues for making progress on the Koml\'os  conjecture.  

In this work, we strengthen Banaszczyk's result for the prefix discrepancy problem in two natural ways. First, we exponentially improve the dependence on $T$ in Banaszczyk's  prefix discrepancy bound in a \emph{smoothed setting}. Next, we extend the prefix discrepancy problem beyond \emph{prefixes of a path}, to \emph{prefixes of a given DAG}, as well as to more general combinatorial settings.

\subsection{Smoothed Analysis of Prefix Discrepancy}

Our first result concerns the prefix discrepancy problem in a \emph{smoothed analysis} setting. The motivation is that for several important problems, although the worst-case algorithmic bounds are bad, these worst-case instances could be brittle. 
The study of algorithmic problems in a smoothed model was initiated by Spielman and Teng~\cite{ST-JACM04}.  They showed that although the popular Simplex method can take an exponential time to solve an adversarially chosen linear programming instance, a slight perturbation of this instance allows the Simplex method to run in polynomial time. Since for many applications the input already contains some measurement noise, the analysis of algorithms for slight perturbations of the input is a natural model. Since the work of Spielman and Teng on linear programs, smoothed analysis has been successfully used for analyzing several other problems such as Local Search, Mechanism Design, and computing Pareto curves (see, e.g., the book~\cite{Roughgarden-Book20}).

In the smoothed setting of prefix discrepancy\footnote{A recent independent work has also introduced a smoothed model for  discrepancy minimization~\cite{HRS-arXiv21}. However, they consider an \emph{online} setting, and hence their model and results are incomparable.}, an adversary first chooses $T$ vectors $\bv_1, \ldots, \bv_T \in \B_2^d$ and then each vector $\bv_t$ is perturbed by a random independent noise $\hv_t \in \B_2^d$ with covariance $\Cov(\hv_t) \succcurlyeq \eps^2 I_d$. The goal of the algorithm is to solve the prefix discrepancy problem on input vectors $v_1, \ldots, v_T$ where $v_t := \bv_t + \hv_t$. In other words,  find a signing $x \in \pmone^T$ that minimizes 
\[ 
\predisc(v_1, \cdots, v_T) := \min_{x \in \pmone^T} \max_{\tau \in [T]}  \Big\|\sum_{t \le \tau} {x_t v_t} \Big\|_\infty . 
\] 
This smoothed setting also generalizes  stochastic discrepancy models where the input vectors are drawn i.i.d. from a given distribution, i.e., when we take all vectors $\bv_t=0$ and the noise vectors $\hv_t$ are drawn i.i.d.

In the above smoothed  setting, we exponentially improve the dependency on $T$ in Banaszczyk's bound.

\begin{restatable}[Smoothed analysis upper bound]{theorem}{smooth} \label{thm:smoothed_upper_bound}
 Suppose for each $t \in [T]$ the noise $\hv_t \in \B_2^d$ is sampled independently and satisfies covariance $\Cov(\hv_t) \succcurlyeq \eps^2 I_d$ for some $\epsilon \ge 1/{\poly(d,\log T)}$. Then, the prefix discrepancy in the smoothed setting is $O(\sqrt{\log d + \log\!\log T})$  with probability at least $1-1/\poly(T)$.
\end{restatable}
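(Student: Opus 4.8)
The plan is to use a primal-dual approach based on Banaszczyk's theorem, combined with a chaining argument that exploits the noise to control the relevant Gaussian width. Recall that Banaszczyk's theorem (\Cref{thm:Banaszczyk12}) applied to the $T$ vectors in $\R^d$ gives a signing with prefix discrepancy $O(\sqrt{\log d + \log T})$; the $\sqrt{\log T}$ term arises because there are $T$ prefix constraints, each of which is a halfspace-type constraint, and a union bound over them costs a $\sqrt{\log T}$ factor in the relevant concentration estimate. The key insight is that in the smoothed setting, \emph{consecutive} prefix sums differ by a genuinely random vector $v_t = \bar v_t + \hat v_t$ with $\Cov(\hat v_t)\succcurlyeq \eps^2 I_d$, so the partial-sum process, viewed as a stochastic process indexed by $\tau\in[T]$, is far from arbitrary: it has increments with a nondegenerate random component. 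This should let us replace the crude union bound over $T$ prefixes by a chaining bound where the "effective" number of scales is only $\log\log T$.

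Concretely, I would first set up the problem via a convex-geometric (Gluskin-Giannopoulos style) argument: by Banaszczyk's transference theorem, it suffices to bound the Gaussian width (or the analogous Gaussian measure quantity) of the convex body $K = \{w : \|P_\tau w\|_\infty \le C\sqrt{\log d + \log\log T} \text{ for all } \tau\}$, where $P_\tau$ is the prefix projection. The dual formulation turns this into bounding, for a standard Gaussian $g$, a maximum of the form $\max_{\tau}\max_{i\in[d]} |\langle g, \text{(column } i \text{ of prefix matrix up to }\tau)\rangle|$ — but where the relevant vectors are now perturbed by the noise. I would group the $T$ prefixes dyadically into $\log T$ blocks, and within a block of size $2^k$ further exploit that the contribution of the noise across a block of length $m$ concentrates at scale $\eps\sqrt{m}$ in $\ell_2$; a careful chaining over the resulting multi-scale net — with the noise providing the necessary "smoothing" so that nearby prefixes have highly correlated discrepancy vectors — collapses the iterated logarithm. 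The quantitative requirement $\eps \ge 1/\poly(d,\log T)$ enters precisely to ensure that the noise term dominates the analysis at every scale down to the finest one, so that the adversarial part $\bar v_t$ never forces us back to the worst-case $\sqrt{\log T}$.

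The main obstacle I anticipate is making the chaining argument handle the interaction between the adversarial vectors $\bar v_t$ and the noise $\hat v_t$ correctly: the adversary can try to arrange the $\bar v_t$'s so that the partial sums of the adversarial part wander in a way that, combined with a bad realization of the noise, concentrates mass on a particular coordinate for many prefixes simultaneously. Controlling this requires showing that at each dyadic scale the "bad event" (too many correlated prefixes being large) has probability decaying fast enough to survive a union bound over the $O(\log\log T)$ scales and the $d$ coordinates, which is where the $1/\poly(T)$ failure probability and the lower bound on $\eps$ get used. A secondary technical point is that we need $\predisc$ to be achieved by an \emph{algorithm} (the theorem asserts the bound holds "in the smoothed setting" with high probability), so the primal-dual/Banaszczyk machinery should be invoked in its algorithmic (Bansal-Dadush-Garg-Lovett type) form, or else we accept the existential statement; either way the core estimate to be proved is the Gaussian-width bound above. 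Once that estimate is in hand, the theorem follows by applying the (algorithmic) version of Banaszczyk's transference result to conclude a valid signing exists with high probability.
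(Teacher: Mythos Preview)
Your proposal has a genuine gap: the ``Gaussian width + chaining over prefixes'' plan does not match the structure of the problem and, as stated, cannot be carried out. Banaszczyk's prefix theorem (\Cref{thm:Banaszczyk12}) does \emph{not} reduce to a single Gaussian-measure bound on one convex body $K$; it is proved via an iterated symmetrization $K*u$ along the path, and there is no transference statement that says ``if some body $K\subseteq\R^T$ has small Gaussian width then a low-prefix-discrepancy signing exists.'' Moreover, your chaining intuition treats $\tau\mapsto\sum_{t\le\tau}x_tv_t$ as a stochastic process whose increments you can smooth with the noise, but the signing $x$ is the object we are \emph{choosing}, not a source of randomness --- so chaining over $\tau$ for a fixed signing says nothing about the existence of a good signing, and the noise in $v_t$ is not the same Gaussian that appears in a Gaussian-width computation. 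The adversary can in fact make consecutive prefix sums differ arbitrarily (the $\bar v_t$'s are unconstrained in direction), so ``nearby prefixes are highly correlated'' is not something the noise alone buys you.

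The paper's mechanism is quite different and more concrete. It partitions $[T]$ into blocks of length $n=\poly(d,\log T)$, so that within a block Banaszczyk's worst-case bound is already $O(\sqrt{\log d+\log n})=O(\sqrt{\log d+\log\log T})$. The noise is used, via LP duality, to show that with high probability each block admits a \emph{fractional} coloring $x\in[-1,1]^n$ that (i) exactly cancels the discrepancy vector accumulated from prior blocks and (ii) keeps all within-block prefix sums bounded by $O(\Delta)$; this fractional coloring is then rounded to $\pm1$ by a bit-by-bit procedure that loses only another $\Delta$. The ``chaining'' that actually appears is in the feasibility argument for this LP: a naive union bound over the dual variables $(y,\{\alpha_I\})$ fails because there are too many prefix intervals, so one replaces the prefix family by a two-level block decomposition with only $o(n)$ ``long'' intervals, union-bounds over those, and handles all ``short'' intervals simultaneously by a direct argument exploiting $\Cov(\hat v_t)\succcurlyeq\eps^2 I_d$. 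Your proposal is missing the block-and-cancel idea entirely, and that is where the exponential improvement in $T$ comes from.
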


Note that the assumption on noise in \Cref{thm:smoothed_upper_bound} is quite mild and captures many natural perturbation models. 
For instance, the theorem holds even if each vector is \emph{unchanged} with probability $1-\epsilon$ for $\epsilon = 1/\poly(d, \log T)$ and we perturb a \emph{single random} coordinate by an additive $\pm \epsilon$ with probability $\epsilon$. 
 This is in contrast to the recent line of work that studies discrepancy minimization  for stochastic inputs~\cite{EzraL19,BansalMeka-SODA19,HobergRothvoss-SODA19,Potukuchi-ICALP20}, where the input stochasticity is used very heavily.
 
 \Cref{thm:smoothed_upper_bound} is also interesting since it suggests that if there is an adversarial $\Omega(\sqrt{\log d + \log T})$ lower bound instance for the prefix discrepancy problem then such an instance must be quite brittle: a slight perturbation of the instance will admit much better upper bounds. 
 Our proof of the theorem relies on a primal-dual approach and a careful chaining argument (see \Cref{sec:overview}). We note that it does not give us an efficient algorithm since it relies on \Cref{thm:Banaszczyk12}, which we do not know how to do algorithmically for prefixes~(see \Cref{sec:open}). However, one can get a slightly weaker  bound of $O(\log d + \log\!\log T)$ in polynomial time using a recent algorithm of~\cite{ALS21-arxiv} instead of \Cref{thm:Banaszczyk12}. 
 
 We also show that \Cref{thm:smoothed_upper_bound} is tight, unless there is a breakthrough for adversarial prefix discrepancy.
 
 
\begin{theorem}[Informal \Cref{thm:stochastic_lower_bound}]
The bound in \Cref{thm:smoothed_upper_bound} is the best possible for the smoothed setting assuming the tightness of Banaszczyk's $O(\sqrt{\log d + \log T})$ bound in the adversarial setting.
\end{theorem}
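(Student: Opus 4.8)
The plan is to establish the matching lower bound by a \emph{planting} argument. Assuming Banaszczyk's bound in \Cref{thm:Banaszczyk12} is tight, fix, for the relevant parameters, an adversarial prefix-discrepancy instance $u_1, \ldots, u_n \in \B_2^{d_n}$ with $\predisc(u_1, \ldots, u_n) \ge c\sqrt{\log d_n + \log n}$ for an absolute constant $c > 0$. The idea is to cook up a \emph{stochastic} instance (i.e.\ the smoothed model with all base vectors equal to $\bzero$ and the noise drawn i.i.d.) in which this hard instance appears \emph{exactly} as a contiguous block of the random input vectors with high probability; an exact copy is important, since approximating the hard instance only up to the noise scale $\eps$ would cost a spurious factor of $\eps$ in the bound.

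Concretely, I would take the noise distribution $\CD$ on $\B_2^{d_n}$ to be uniform over the symmetric multiset $S = \{\pm u_1, \ldots, \pm u_n\} \cup \{\pm \tfrac{1}{\sqrt 2} e_1, \ldots, \pm \tfrac{1}{\sqrt 2} e_{d_n}\}$ of size $m \le 2(n + d_n)$, and draw $v_1, \ldots, v_T$ i.i.d.\ from $\CD$. Symmetry of $S$ forces $\E[v] = \bzero$, so $\Cov(v) = \E[vv^\top] \succcurlyeq \tfrac1m \sum_{j \le d_n} e_j e_j^\top = \tfrac1m I_{d_n}$; thus the covariance hypothesis of \Cref{thm:smoothed_upper_bound} holds with $\eps = 1/\sqrt m$. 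The filler vectors $\pm e_j / \sqrt 2$ are included purely to supply this covariance without having to rescale (and thereby weaken) the planted hard instance.

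I would then choose $T = m^{2n}$ and set $d := d_n$. Any fixed window of $n$ consecutive inputs equals $(u_1, \ldots, u_n)$ with probability $m^{-n}$, so a union bound over the $\lfloor T/n\rfloor$ \emph{disjoint} windows shows that, with probability $1 - 1/\poly(T)$, there is an index $s$ with $v_{s+i} = u_i$ for all $i \in [n]$. On this event, for \emph{any} signing $x \in \pmone^T$, expressing the window sum $\sum_{i \le j} x_{s+i} v_{s+i}$ as the difference of the prefix sums up to $s+j$ and up to $s$ yields $\max_{j \in [n]} \big\| \sum_{i \le j} x_{s+i} v_{s+i} \big\|_\infty \le 2\,\predisc(v_1, \ldots, v_T)$; since $(x_{s+1}, \ldots, x_{s+n})$ is itself a signing of $(u_1, \ldots, u_n)$, the left-hand side is at least $\predisc(u_1, \ldots, u_n) \ge c\sqrt{\log d_n + \log n}$. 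Hence $\predisc(v_1, \ldots, v_T) \ge \tfrac{c}{2}\sqrt{\log d_n + \log n}$ with high probability, which is the desired conclusion once we check that this matches $\Omega(\sqrt{\log d + \log\log T})$.

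The remaining work is a parameter check, which is also where the only real subtlety lies. With $d = d_n$ and $\log T = 2n\log m$ where $\log m = \Theta(\log(n + d_n))$, one verifies that $\log d + \log\log T = \Theta(\log d_n + \log n)$ in every regime: when $d_n$ is large relative to $n$ the $\log d$ term alone carries the bound, and when $n$ is large relative to $d_n$ the $\log\log T$ term does. Simultaneously, since $n \le \tfrac12\log T$, we have $m \le 2(n + d_n) \le \poly(d, \log T)$, so $\eps = 1/\sqrt m \ge 1/\poly(d, \log T)$ is a legitimate noise level. The main obstacle is thus to make these two scales cooperate at once --- forcing $T$ to be roughly $m^{\Theta(n)}$ so that $\log\log T$ is of order $\log n$, while keeping $\eps$ polynomially bounded below --- together with the conceptual point already noted, namely planting an exact copy of the hard instance rather than an $\eps$-scaled approximation. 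Degenerate cases (e.g.\ $n$ or $d_n$ bounded) should be dispatched separately, where the statement is either vacuous or follows from the trivial $\Omega(1)$ lower bound.
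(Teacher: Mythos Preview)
Your argument is correct and follows the same planting/block structure as the paper's proof of \Cref{thm:stochastic_lower_bound}: partition $[T]$ into independent windows of length $n$, show the hard adversarial instance appears in some window with high probability, and then use the prefix-of-prefix identity $\sum_{i\le j} x_{s+i}v_{s+i} = \sum_{t\le s+j} x_t v_t - \sum_{t\le s} x_t v_t$ to transfer the lower bound to the full sequence, followed by a parameter match showing $\log d_n + \log n = \Theta(\log d + \log\log T)$.

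The only substantive difference is the noise model. The paper works with the canonical distribution --- vectors drawn i.i.d.\ uniformly from $\S^{d-1}$ --- and therefore needs a robustness step: via a spherical-cap estimate it shows that some window lands within $\ell_2$-distance $1/n$ of the hard instance $(w_1,\ldots,w_n)$, and observes that such a perturbation preserves prefix discrepancy $\Omega(\sqrt{\log d + \log n})$. You instead tailor a discrete distribution supported on $\{\pm u_i\}\cup\{\pm e_j/\sqrt 2\}$, which lets you plant the hard instance \emph{exactly} and sidestep any perturbation analysis, at the cost of a less natural (instance-dependent) noise. Both routes prove the informal theorem; the paper's version is a slightly stronger statement because it holds even for the natural isotropic noise, while yours is a bit cleaner because exact planting avoids the robustness argument. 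One minor terminological slip: the step combining the disjoint windows is not a ``union bound'' but the independence computation $(1-m^{-n})^{T/n}\le \exp(-m^{n}/n)$, which is indeed $\le 1/\poly(T)$ for your choice $T=m^{2n}$.
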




\subsection{Prefix Discrepancy for DAGs}


We next consider a substantial generalization of the prefix discrepancy problem to arbitrary directed acyclic graphs (DAGs). To motivate it, let us consider a slightly different view of the prefix discrepancy problem with input vectors $v_1,v_2,\ldots,v_T \in \B_2^d$. Consider the directed path, $\mathsf{Path}_T$, on $T$ vertices and suppose that we have a vector $v_t$ at every vertex $t$ on the path. The prefix discrepancy problem then corresponds to assigning signs to the vertices of $\mathsf{Path}_T$ and looking at the signed-sums of vectors associated with the vertices on any path starting from the root in $\mathsf{Path}_T$\footnote{The restriction of involving the root is only notational and can be relaxed at the cost of a factor of two.}. 

We investigate the case where the \emph{base graph} instead of being a simple path is an arbitrary DAG. 
In particular, given a DAG $G = ([T],E)$ with vertices $[T]:= \{1,\ldots, T\}$, we denote by $\prefix(G)$ the following set family on the ground set $[T]$: First order the vertices of $G$ topologically; now a set $S \subseteq [T]$ is included in $\prefix(G)$ iff $S$ is a path in $G$ that starts at the topologically ordered root vertex. For the generalization, given $T$ vectors in $\B_2^d$ corresponding to the $T$ vertices, our goal is to find a signing of the vertices (equivalently, the vectors) to minimize the maximum discrepancy with respect to any \emph{prefix constraint} in $\prefix(G)$. In other words, given input vectors $v_1,\ldots, v_T \in \B_2^d$,  find a signing $x \in \pmone^T$ that  minimizes 
\[ \pathdisc_G(v_1, \cdots, v_T) := \min_{x \in \pmone^T} \max_{S \in \prefix(G)}  \Big\|\sum_{t \in S} {x_t v_t} \Big\|_\infty. \]

Clearly, if $G$ is the simple path on $T$ vertices, we recover prefix discrepancy. 

We prove the following generalization of \Cref{thm:Banaszczyk12} in terms of the hereditary discrepancy, $\herdisc$, of the set system $\prefix(G)$, which is the largest possible discrepancy of any ``sub-DAG'' (see \Cref{def:herdisc}). This is a natural parameter since even for $d=1$ (i.e., scalars), $\herdisc$ is a lower bound on $\pathdisc_G$ if we take the vectors $\{v_t\}_{t \in [T]}$ to be $0/1$-valued scalars. 

\begin{restatable}[Prefix discrepancy for DAGs]{theorem}{dag}
\label{thm:DAGs_prefix}
Given a $T$-node DAG $G$ and input vectors $v_1, \ldots,v_T \in \B_2^d$, we have $
\pathdisc_G(v_1,\ldots,v_T) \leq \herdisc(\prefix(G)) \cdot O(\sqrt{\log d+ \log T})$. Additionally, if $d=1$, then $\pathdisc_G(v_1,\ldots,v_T) = O(\herdisc(\prefix(G)))$. 
\end{restatable}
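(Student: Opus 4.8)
The plan is to treat the base case $d=1$ and the general case $d\geq 1$ separately; the general case generalizes the argument behind \Cref{thm:Banaszczyk12}, with the implicit constant $1$ there replaced by $\herdisc(\prefix(G))$ (so that a single path, where $\herdisc(\prefix(G))=O(1)$, recovers \Cref{thm:Banaszczyk12}).

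\textbf{The case $d=1$.} Here $v_t\in\B_2^1=[-1,1]$. For $0/1$-valued scalars $v\in\{0,1\}^T$ the bound is immediate from the definition of hereditary discrepancy: writing $U:=\{t:v_t=1\}$, we get $\pathdisc_G(v_1,\ldots,v_T)=\disc\big(\{S\cap U:S\in\prefix(G)\}\big)\leq\herdisc(\prefix(G))$. For general $v\in[-1,1]^T$ we may assume (after flipping signs) that $v\in[0,1]^T$; substituting $x_t=1-2b_t$ with $b_t\in\{0,1\}$ gives
\[
\pathdisc_G(v_1,\ldots,v_T)\;=\;2\min_{b\in\{0,1\}^T}\ \max_{S\in\prefix(G)}\ \Big|\sum_{t\in S}v_t\big(b_t-\tfrac12\big)\Big|,
\]
which is twice the linear discrepancy of $\prefix(G)$ with columns reweighted by the $v_t\in[0,1]$. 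Since such reweighting changes hereditary discrepancy by at most a constant factor, and linear discrepancy is within a constant factor of hereditary discrepancy by the Lov\'asz--Spencer--Vesztergombi relation, this is again $O(\herdisc(\prefix(G)))$.

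\textbf{The case $d\geq 1$.} Set $h:=\herdisc(\prefix(G))$ and $\lambda:=\Theta\!\big(h\sqrt{\log d+\log T}\big)$. I would process the vertices of $G$ in topological order and fix the signs $x_1,x_2,\ldots$ one at a time, maintaining the invariant that the signed sum of \emph{every} already-completed root-path lies in the box $[-\lambda,\lambda]^d$. When vertex $w$ is processed, the newly completed root-path sums are exactly $\{p_S+x_w v_w : S\in\prefix(G) \text{ ending at an in-neighbour of }w\}$, and we must pick the \emph{single} sign $x_w$ keeping all of these in the box. For one active partial sum this is precisely the core Gaussian-measure step in Banaszczyk's proof of \Cref{thm:Banaszczyk12}: if a point $p$ lies in a symmetric convex body of Gaussian measure at least $\tfrac12$ and $\|v_w\|_2$ is small, one of $p\pm v_w$ stays in a body of comparable measure; iterating over the $T$ vertices while keeping the box complement of Gaussian measure $\leq 1/\poly(dT)$ drives the radius down to $O(\sqrt{\log d+\log T})$.

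The crux --- and what I expect to be the main obstacle --- is the \emph{multiplicity}: a vertex can have exponentially many active root-paths through it, and one sign must be good for all of them. This is where the combinatorial parameter $h$ should enter. The $d=1$ analysis shows that in any fixed coordinate direction the active partial sums at a vertex cannot be forced to spread by more than $O(h)$, and the goal is to leverage this to lower-bound the Gaussian measure of the DAG-dependent ``good body'' (the signs keeping every active root-path sum in $[-\lambda,\lambda]^d$) by $\tfrac12$ once $\lambda=\Theta(h\sqrt{\log d+\log T})$, with the $d$ coordinates and the $T$ vertices supplying the $\sqrt{\log d+\log T}$ factor exactly as in Banaszczyk's proof. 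Making this measure estimate quantitative in $h$ and carrying the induction through all $T$ vertices is the technical heart of the argument. I do not expect a black-box reduction to \Cref{thm:Banaszczyk12}: its $\sqrt{\log T}$ term is tied to the adaptive, vertex-by-vertex choice of signs, and a static (Gaussian) signing --- or one built from a path cover of $G$ --- already loses a $\poly(T)$ factor.
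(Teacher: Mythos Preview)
Your general-$d$ plan has a genuine gap at exactly the point you flag as the ``technical heart'': you correctly identify that a vertex may have exponentially many active root-path partial sums and a single sign must serve all of them, but you give no mechanism for resolving this. The claim that your $d=1$ analysis bounds the ``spread'' of these sums is not right --- it produces one good coloring, not a structural bound valid for the partial colorings built so far --- and even granting a spread bound, there is no route in sight to a Gaussian-measure lower bound for the intersection of exponentially many shifted bodies. You also rule out decomposition approaches (``a path cover of $G$ already loses a $\poly(T)$ factor''), but a decomposition is exactly what the paper does successfully. The missing idea is a purely combinatorial reduction from DAGs to trees (\Cref{lem:reduction_DAGs_to_trees}): every DAG $G$ contains a rooted spanning tree $\CT$ such that every root-path in $G$ uses at most $O(\herdisc(\prefix(G)))$ non-tree edges; the proof goes via a ``chain structure'' certificate that lower-bounds $\herdisc(\prefix(G))$, together with a tree-building procedure in reverse topological order. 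Banaszczyk's leaves-to-root symmetrization then runs cleanly on $\CT$ --- each vertex has a unique parent, so the body $K_t=K\cap\bigcap_j(K_j*v_j)$ over the children $j$ of $t$ is well-defined with inductively controlled Gaussian measure --- giving $O(\sqrt{\log d+\log T})$ for every tree prefix. Any root-path in $G$ then decomposes into $O(h)$ tree sub-paths (each a difference of two tree prefixes), and summing gives the bound.

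Your $d=1$ argument has a smaller gap: the assertion that reweighting columns of the incidence matrix by $v_t\in[0,1]$ preserves hereditary discrepancy up to a constant factor is not a standard fact, and would in particular yield $\cdisc(\CS,1)=O(\herdisc(\CS))$ for \emph{every} set system $\CS$ --- stronger than what the paper proves for general $\CS$ (\Cref{thm:general-setsystem} loses polylogarithmic factors via $\gamma_2$). The paper handles $d=1$ via the same tree reduction together with the trivial greedy bound $\pathdisc_\CT\le 1$ for scalars on a tree.
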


\Cref{thm:DAGs_prefix} is a generalization of \Cref{thm:Banaszczyk12} because $\herdisc(\prefix(G)) = 1$ when $G$ is a path on $[T]$, as in Banaszczyk's setting. We prove the above theorem by showing a novel structural result about DAGs, which allows us to approximate the given DAG with rooted-trees with ``distance'' given by $\herdisc(\prefix(G))$ (see \Cref{lem:reduction_DAGs_to_trees}). Next, to handle any rooted-tree $\calT$,  we use techniques inspired from \cite{B12} and show that for any given vectors $v_1, \cdots, v_T \in \B_2^d$, we have  $\pathdisc_\calT(v_1,\cdots, v_T) = O(\sqrt{\log d + \log T})$ \footnote{It is easy to see that for any rooted tree $\calT$, we have that $\herdisc(\prefix(\calT))=1$, as in the case of prefixes of a path.}.

It is an intriguing open problem whether the dependence of $\sqrt{\log T}$ in Banaszcyzk's bound can be improved.  We show, however, that the $\sqrt{\log T}$  in \Cref{thm:DAGs_prefix} cannot be improved even for rooted-trees and even in the two-dimensional case. This gives an interesting contrast between trees and paths, even though they have the same 
$O(\sqrt{\log d + \log T})$ upper bound.

\begin{restatable}[Prefix Discrepancy Lower Bound]{theorem}{treeslb} \label{thm:prefix_lower_bound_trees}
For any integer $T > 0$, there exists a $T$-node rooted tree $\calT$ and input vectors $v_1,\ldots,v_T \in \B_2^2$ such that $\pathdisc_\calT(v_1,\cdots,v_T) = \Omega(\sqrt{\log T})$.  
\end{restatable}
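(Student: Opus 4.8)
The plan is to take $\calT$ to be the complete binary tree of depth $k$ equipped with a ``rotating'' vector assignment, and to show $\pathdisc_\calT = \Omega(\sqrt k)$; since this tree has $2^{k+1}-1$ nodes, so $k = \Theta(\log T)$, the theorem follows (for a general $T$ take the largest such $k$ and append $T-(2^{k+1}-1)$ dummy nodes carrying the zero vector, which does not decrease the discrepancy). Encode each node by the left/right pattern $b=(b_1,\dots,b_j)\in\pmone^{j}$ of the path from the root reaching it (the root being the empty pattern), and give it the unit vector $v_b := (\cos\theta_b,\sin\theta_b)\in\B_2^2$ where $\theta_b := \alpha\,(b_1+\cdots+b_j)$ and $\alpha := 1/\sqrt k$. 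I will show that for \emph{every} signing $x$ there is a root-path whose signed sum has $\ell_\infty$-norm $\Omega(\sqrt k)$.

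Fix the signing and let an adversary descend $\calT$ from the root, maintaining the signed sum of the path chosen so far but always expressed in the \emph{local frame} of the current node --- the frame in which that node's own vector is $e_1$. Moving from a node to its left (resp.\ right) child rotates this frame, hence the running sum, by $-\alpha$ (resp.\ $+\alpha$). So if $z_i\in\R^2$ is the running sum after $i$ steps, then $z_0=0$, $z_i = R_{r_i\alpha}(z_{i-1}+x_i e_1)$ for $i=1,\dots,k$ and $z_{k+1}=z_k+x_{k+1}e_1$, where $R_\phi$ is rotation by $\phi$, $r_i\in\pmone$ is the adversary's left/right choice at step $i$, and $x_i\in\pmone$ is the fixed sign of the $i$-th node of the path; crucially $\|z_i\|_2$ equals the true $\ell_2$-norm of the $i$-th prefix sum of the descended path. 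It therefore suffices to exhibit left/right choices forcing $\max_i\|z_i\|_2=\Omega(\sqrt k)$.

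Write $z_i=(c_i,y_i)$ and $a_i:=|c_{i-1}+x_i|$, set $\beta:=\cos\alpha$, $\gamma:=\sin\alpha$, and have the adversary always pick $r_i:=\operatorname{sign}(c_{i-1}+x_i)$. Then $y_i=\beta y_{i-1}+\gamma a_i$ and $|c_i|\le \beta a_i+\gamma|y_{i-1}|$, so with $M:=\tfrac1{10}\sqrt k$, and assuming (else we are done) that $\|z_i\|_2<M$ for all $i$, we get $y_k=\gamma\sum_{i=1}^k\beta^{k-i}a_i$ while each $a_i$ satisfies both $a_i\ge\bigl|\,|c_{i-1}|-1\,\bigr|$ (for either sign $x_i$) and $a_i\ge|c_i|-\gamma M$. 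The heart of the argument is the one-dimensional inequality $|t-1|+t\ge1$ for $t\ge0$: averaging the two bounds on $a_i$, re-indexing, and pairing gives $\sum_{i=1}^k a_i\ge\tfrac12(k-1)-\tfrac12 k\gamma M=\Omega(k)$ for these parameters; since $\beta^k=\cos^k(1/\sqrt k)=\Theta(1)$ and $\gamma=\Theta(1/\sqrt k)$, this forces $\|z_k\|_2\ge|y_k|\ge\gamma\beta^k\sum_i a_i=\Omega(\sqrt k)$, which for large $k$ exceeds $M$ --- contradicting $\|z_k\|_2<M$. Hence in every case some prefix has $\ell_2$-norm $\ge M$, so $\ell_\infty$-norm $\ge M/\sqrt2=\Omega(\sqrt k)=\Omega(\sqrt{\log T})$.

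The crux is the averaged ``cancellation is impossible'' estimate $\sum_i|c_{i-1}+x_i|=\Omega(k)$: it says the colorer cannot keep the $e_1$-aligned component small immediately after adding $\pm e_1$ on more than a constant fraction of steps, since doing so leaves that component near $0$, forcing it to be near $\pm1$ one step later. This is exactly where the second dimension is used: the $\pm\alpha$ rotations turn these forced unit-sized aligned components into a drift of size $\Theta(\alpha k)=\Theta(\sqrt k)$ in the orthogonal direction, whereas over $\R$ there is no such direction and indeed every tree has $O(1)$ prefix discrepancy. The delicate (but routine) parts are choosing $\alpha\asymp1/\sqrt k$ so that the total $\cos\alpha$-contraction over $k$ levels stays a constant factor while the drift is still $\Theta(\sqrt k)$, controlling the cross-coordinate error terms, and fixing the constants so that the forced drift strictly beats the threshold $M$, which is what makes the dichotomy clean.
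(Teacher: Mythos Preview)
Your argument is correct, but it takes a markedly different route from the paper's. The paper also uses the complete binary tree, but it constructs the vectors \emph{adaptively}: the root gets an arbitrary unit vector, and the vector at each child is chosen to be a unit vector orthogonal to the ``canonical'' discrepancy vector at its parent (the signed sum along the root-to-parent path under the coloring dictated by the left/right pattern). For any signing $x$ there is then a leaf $u$ whose canonical coloring agrees with $x$, and along the root-to-$u$ path each new vector is exactly orthogonal to the running discrepancy; Pythagoras gives $\|\,\cdot\,\|_2^2$ increasing by $1$ at every level, hence $\ell_\infty$-discrepancy $\Omega(\sqrt h)$ in two lines.

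Your construction is non-adaptive and explicit (the rotating assignment $\theta_b=\alpha\sum b_i$ with $\alpha=1/\sqrt k$), which is pleasant, but the price is the more delicate drift analysis in the local frame. Conceptually the two proofs exploit the same mechanism --- the branching lets an adversary steer so that the newly added unit vector is (nearly) orthogonal to the current sum --- but the paper enforces exact orthogonality by tailoring the vectors, while you enforce approximate orthogonality by tailoring the path. The paper's proof is shorter and avoids the parameter-tuning ($\alpha\asymp 1/\sqrt k$, the $\beta^k=\Theta(1)$ balance, the $|t-1|+t\ge1$ pairing); your proof gives a concrete, formula-defined instance. One minor quibble: your stated sign of the frame rotation when moving to a child is flipped relative to your angle convention, but since the adversary only needs to be able to choose the sign of the rotation this is harmless.
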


Note that for $d=2$, \cite{BaranyGrinberg81} gives a bound of $O(1)$ on the prefix discrepancy problem (irrespective of $T$). Further, for the one-dimensional case, the discrepancy of any rooted tree is at most $1$.

One might wonder if our improvement to $\sqrt{\log d + \log\!\log T}$ in the smoothed setting (\Cref{thm:smoothed_upper_bound}) could also be generalized to DAGs. We show that such an improvement is impossible, in particular there is an $\Omega(\sqrt{\log T / \log\!\log T})$ lower bound even for trees in a completely stochastic setting. 

\begin{restatable}[Prefix Discrepancy Lower Bound in the Random Setting]{theorem}{treeslbrandom}
\label{thm:trees_random_lower_bound}
Let the dimension $d=2$. There exists a $T$-node rooted tree $\calT$ such that when unit vectors $v_t \in \mathbb{R}^d$ are drawn i.i.d. uniformly from $\S^{d-1}$, then w.h.p. every $\pm 1$-coloring satisfies  $\pathdisc_\calT(v_1, \cdots, v_T) = \Omega(\sqrt{\log T / \log\!\log T})$.
\end{restatable}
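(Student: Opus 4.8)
I want to construct a rooted tree $\calT$ on $T$ vertices together with i.i.d. uniform unit vectors such that, no matter the signing, some root-to-node path has $\ell_2$-norm (hence $\ell_\infty$-norm, up to a constant in $d=2$) of order $\sqrt{\log T/\log\log T}$. The natural candidate is a complete $\ell$-ary tree of depth $h$, where I will pick $\ell$ to be polylogarithmic in $T$ (say $\ell = \Theta(\log T)$) and $h = \Theta(\log T/\log\log T)$ so that $\ell^h \approx T$. The point of making $\ell$ large is a union-bound/entropy tradeoff: the number of distinct signings is $2^T$, but the number of root-to-leaf paths is only $\ell^h = T$, and along a single path the signs form a sequence of length $h$ that is tiny compared to $\log T$. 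The hope is that the branching gives the adversary so many ``fresh'' random directions at each level that no single choice of signs can keep all paths balanced.

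**Key steps.** First, fix a signing $x\in\pmone^T$ and a leaf $\ell^*$; the prefix sums along the root-to-$\ell^*$ path are partial sums of $h$ i.i.d.\ uniform unit vectors with signs, and the final partial sum $\sum_{t} x_t v_t$ is a sum of $h$ i.i.d.\ mean-zero vectors each of norm $1$, so it has expected squared norm $h$ and, by anti-concentration for sums of independent isotropic-ish vectors in $\R^2$, its norm is $\Omega(\sqrt h)$ with constant probability — in fact with probability bounded below by an absolute constant $c_0>0$, and this event is ``$\Theta(\sqrt h)$-large'' with probability $\ge c_0$ uniformly. The real work is to boost this to high probability over the choice of vectors for \emph{all} signings simultaneously. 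Here I would do a level-by-level / recursive argument: think of choosing signs greedily from the root down; at an internal node $u$ at depth $j$ with accumulated signed sum $S_u$, the subtree below $u$ has $\ell$ children, and among the $\ell$ subtrees, the adversary must commit to \emph{one} set of signs — but the child vectors are independent of everything above, so with probability $\ge 1 - (1-c_0)^{\ell} \ge 1 - 1/\poly(T)$ (after the union bound over all $\le T$ internal nodes and, crucially, over all possible values of the incoming partial-sum ``state'' up to the relevant discretization) at least one child direction is ``bad'' for the chosen sign in the sense that the partial sum grows. Iterating this down $h$ levels should force a path along which the squared norm grows by a constant each step, giving $\Omega(\sqrt h) = \Omega(\sqrt{\log T/\log\log T})$.

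**Making the union bound work.** The subtle point is that ``the chosen sign'' below a node depends on the full signing, so I cannot just fix it; instead I should phrase the bad event as: \emph{for every node $u$ and every candidate incoming state $S$ in a fine net $\mathcal{N}$ of the relevant ball, and every sign choice at $u$'s children, at least one child subtree forces discrepancy $\Omega(\sqrt{\text{depth remaining}})$}. The net $\mathcal{N}$ has size $\poly(T)$ as long as we only need to track the partial sum up to accuracy $1$ in a ball of radius $O(\sqrt{\log T})$ (beyond which we are already done), and there are $\le 2^{O(\ell)} = \poly(T)$ sign patterns on the children of a node when $\ell = O(\log T)$. So the total number of events in the union bound is $T \cdot \poly(T) \cdot \poly(T) = \poly(T)$, each failing with probability $\le (1-c_0)^{\Omega(\ell)} = 1/\poly(T)$ with a large enough constant — balancing these two $\poly(T)$'s is exactly why $\ell = \Theta(\log T)$ with a suitable constant is the right choice, and why the depth, and hence the lower bound, degrades to $\sqrt{\log T/\log\log T}$ rather than $\sqrt{\log T}$.

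**Main obstacle.** The hard part is the recursive/adaptive coupling: making precise the statement ``whatever signs the adversary picks, a bad path survives'' when the adversary's choices can depend on the realized vectors. I expect to handle this by the net argument above, but one must be careful that the ``state'' $S_u$ at a node genuinely lies in a bounded region — if $\|S_u\|$ is already $\Omega(\sqrt{\log T})$ we are trivially done, so we may condition on $\|S_u\| = O(\sqrt{\log T})$ throughout and the net is finite. A secondary technical point is the anti-concentration / growth step: I need that for \emph{any} fixed unit vector $S$ (the incoming state, rescaled) and a sum $W$ of $k$ signed i.i.d.\ uniform unit vectors in $\R^2$, $\Pr[\|S + W\|^2 \ge \|S\|^2 + \Omega(k)]$ (or at least $\|S+W\| \ge \Omega(\sqrt k)$ when $\|S\|$ is not already huge) is at least an absolute constant; this follows from the rotational symmetry and a second-moment plus Paley–Zygmund argument, since $\E\|W\|^2 = k$ and $\E\|W\|^4 = O(k^2)$ in two dimensions. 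Once these two ingredients are in place, the induction on depth delivers the claimed $\Omega(\sqrt{\log T/\log\log T})$ bound w.h.p.
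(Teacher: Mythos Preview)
Your overall architecture (a complete $\ell$-ary tree of depth $h \approx \log T/\log\log T$, together with a level-by-level growth argument) is the right one, but the central probabilistic estimate is wrong and this breaks the parameter choice. You assert that for a fixed incoming state $S_u$ and a single random child vector $v$, the event ``the partial sum grows regardless of the adversary's sign'' holds with an \emph{absolute} constant probability $c_0$. It does not. The adversary will always set the child's sign to $-\mathrm{sign}\,\langle S_u, v\rangle$, so the squared norm changes by $1 - 2|\langle S_u, v\rangle|$; this is positive only when $|\langle S_u, v\rangle| < 1/2$, and for $v$ uniform on $\S^1$ that event has probability $\Theta(1/\|S_u\|_2)$, which is as small as $\Theta(1/\sqrt{h})$ on the states you must handle. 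Consequently, $(1-c_0)^\ell$ should be replaced by roughly $(1-c/\sqrt{h})^\ell$, and for this to be $1/\poly(T)$ you need $\ell = \Theta(\sqrt{h}\,\log T)$, not $\ell = \Theta(\log T)$. Your additional union bound over the $2^{O(\ell)}$ sign patterns on the children then becomes fatal: with the corrected $\ell$ it is super-polynomial in $T$ and cannot be absorbed. (In fact this extra union bound is unnecessary --- once you ask for a child that is near-orthogonal to $S_u$, both signs are bad for the adversary simultaneously.)

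The paper's proof sidesteps the net-over-states machinery entirely. It builds, with high probability over the random vectors alone and before any coloring is considered, a complete \emph{binary} subtree $\mathcal{T}'\subseteq\mathcal{T}$ of depth $h$: at each node already placed in $\mathcal{T}'$ one computes the two ``canonical'' discrepancy vectors $d_t^{\pm}$ (these are the specific signed sums arising from the deterministic lower-bound construction for binary trees) and, among the $\ell$ children in $\mathcal{T}$, finds one with $|\langle v_{j^-},d_t^-\rangle|\le 1/4$ and one with $|\langle v_{j^+},d_t^+\rangle|\le 1/4$. Since $\|d_t^{\pm}\|_2\le\sqrt{h}$, each such child exists with probability $1-(1-\Theta(1/\sqrt{h}))^{\ell/2}$, and a plain union bound over the $O(T)$ nodes suffices when $\ell = C\sqrt{h}\,\log T$. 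Once $\mathcal{T}'$ is in hand, the deterministic argument applies verbatim: for any coloring $x$, follow the canonical leaf in $\mathcal{T}'$ and the squared $\ell_2$-norm increases by $\Omega(1)$ at every level, yielding $\Omega(\sqrt{h})=\Omega(\sqrt{\log T/\log\log T})$. The gain over your outline is that the randomness is spent constructing a single hard sub-instance, after which no union bound over signings or states is needed at all.
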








\subsection{Combinatorial Vector Balancing}

Finally, we generalize the vector balancing problem beyond prefixes and introduce the \emph{Combinatorial Vector Balancing} problem --- here the discrepancy constraints come from arbitrary set systems. In particular, given a set family $\CS$ on the  ground set $[T]$ and input vectors $v_1, \ldots, v_T \in \B_2^d$,  find a signing $x \in \pmone^T$ that minimizes
\[ \cdisc_\CS(v_1, \cdots, v_T) := \min_{x \in \{-1,+1\}^n} \max_{S \in \mathcal{S}} \Big\| \sum_{i \in S} x_i v_i \Big\|_\infty.\]

Notice that $\cdisc_\CS(v_1, \cdots, v_T)$ captures $\pathdisc_G(v_1, \cdots, v_T)$ when the set system $\mathcal{S}$ is given by $\prefix(G)$, and therefore also generalizes the prefix discrepancy problem. Moreover, $\cdisc_\CS(v_1, \cdots, v_T)$ can be viewed as a natural generalization of the combinatorial discrepancy of the set system $\CS$, where $v_1, \cdots, v_T$ are all scalars and equal to $1$. Combinatorial discrepancy of set systems has been well-studied in many different contexts and we refer the reader to the books \cite{Chazelle-Book01,Matousek-Book09} for an overview.

We also define the quantity
\begin{align*}
    \cdisc(\CS,d) := \sup_{v_1, \cdots, v_T \in \B_2^d} \cdisc_\CS(v_1, \cdots, v_T) \enspace ,
\end{align*}
which can  be viewed as a generalization of hereditary discrepancy of the set system $\CS$. 
In particular, for one dimensional input vectors, we have that $\herdisc(\CS) \le \cdisc(\CS, 1)$ by taking the corresponding scalars $\{v_t\}_{t \in [T]}$ to be $0/1$-valued. 

Our results from the previous section imply that if $\CS = \prefix(G)$ for some DAG $G$, then  $\cdisc(\CS,1)=\Theta(\herdisc(\CS))$ and $\cdisc(\CS,d) \le \herdisc(\CS)\cdot O(\sqrt{\log d + \log T})$. A natural question that comes up is whether an analogous relation also holds for a general set system $\CS$. We answer this question,  up to poly-logarithmic factors.



\begin{restatable}[Combinatorial Vector Balancing Upper Bound]{theorem}{combvecbal}
\label{thm:general-setsystem}
Let $\mathcal{S}$ be a set system on the ground set $[T]$. Given input vectors $v_1, \cdots, v_T \in \B_2^d$, we have that
\begin{align*}
   \cdisc_\CS(v_1, \cdots, v_T) \leq \herdisc(\mathcal{S})   \cdot \log  \big( \min\{ |\mathcal{S}|, T\} \big) \cdot \sqrt{\log d + \log |\mathcal{S}|} \enspace .
\end{align*} 
Moreover, the upper bound is constructive in the sense that there is a polynomial time algorithm to  find such a coloring. 
\end{restatable}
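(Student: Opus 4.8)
The plan is to reduce the combinatorial vector balancing problem on $\CS$ to a sequence of $O(\log\min\{|\CS|,T\})$ rounds, each of which is an instance of a ``rounding'' step that uses Banaszczyk-type transference/partial-coloring tools, and to control the error accumulated per round by $\herdisc(\CS)$. More concretely, I would set up the standard partial coloring framework: starting from the all-ones fractional assignment $x^{(0)} = \bone$, I would iteratively produce fractional colorings $x^{(0)}, x^{(1)}, \ldots$ such that (i) at least a constant fraction of the still-fractional coordinates become $\pm 1$ (get rounded) in each round, so that after $r = O(\log\min\{|\CS|,T\})$ rounds all coordinates are in $\pmone$; and (ii) the incremental discrepancy $\max_{S \in \CS}\|\sum_{i \in S}(x^{(k)}_i - x^{(k-1)}_i)v_i\|_\infty$ incurred in round $k$ is $O(\herdisc(\CS)\cdot\sqrt{\log d + \log|\CS|})$. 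Summing (ii) over the $r$ rounds and using the triangle inequality then gives the claimed bound. The number of rounds is $O(\log T)$ trivially since each round rounds a constant fraction of coordinates; to get $O(\log|\CS|)$ instead when $|\CS| < T$, one should first argue that coordinates not ``distinguished'' by any set in $\CS$ can be colored arbitrarily, so effectively $T$ can be taken to be $O(|\CS|)$ (or the VC-type combinatorial dimension), reducing the ground set before running the iteration.

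For the per-round bound (ii), the key input is a vector-balancing partial-coloring lemma of the flavor used in Banaszczyk's and Bansal's work: given fractional $x$ with $m$ fractional coordinates, there is an update $\delta$ supported on the fractional coordinates with a constant fraction of entries reaching $\pm1$, such that for any linear functional the Gaussian-measure obstruction is avoided — concretely, $\|\sum_i \delta_i v_i\|$ is small in any norm whose unit ball has large Gaussian measure. To handle the $\ell_\infty$ constraints over all prefixes $S \in \CS$ simultaneously, I would encode the constraint ``$|\sum_{i \in S} x_i \langle v_i, e_j\rangle| \le \Delta$ for all $S \in \CS, j \in [d]$'' and invoke the combinatorial discrepancy machinery: the point is that the hereditary discrepancy of the $\{0,1\}$ set system $\CS$ controls how much a single $\{0,1\}$-combination can move, and Banaszczyk's theorem converts the $d$-dimensional $\ell_\infty$ requirement into a $\sqrt{\log d + \log|\CS|}$ loss (the $\log|\CS|$ coming from a union bound over the $|\CS|$ many sets, each giving a half-space constraint in the relevant Gaussian-measure argument). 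The constructive claim follows by replacing the existential partial-coloring steps with their algorithmic counterparts — e.g.\ the Lovett–Meka style random-walk partial coloring, or the more recent algorithmic versions of Banaszczyk's theorem — which produce such $\delta$ in polynomial time.

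The main obstacle I anticipate is making the $\herdisc(\CS)$ factor appear cleanly in the per-round bound rather than something like $\cdisc(\CS,1)$ or a $\gamma_2$-type quantity. The subtlety is that within a single partial-coloring round the update $\delta$ is fractional, not a $\pm1$ vector, so ``$\sum_{i\in S}\delta_i$'' is not literally a discrepancy of a subsystem of $\CS$; one needs a transference argument showing that the relevant convex body (intersection over $S \in \CS$ of slabs $\{y : |\sum_{i\in S} y_i| \le t\}$, lifted through the $v_i$'s) has Gaussian measure bounded below in terms of $\herdisc(\CS)$. The cleanest route is probably to first handle the scalar case $d=1$ — bounding $\cdisc(\CS,1)$ in terms of $\herdisc(\CS)$ up to a $\log\min\{|\CS|,T\}$ factor via iterated partial coloring / the Beck–Fiala–Spencer style argument — and then tensor in the $d$ dimensions using Banaszczyk's theorem as a black box, paying the extra $\sqrt{\log d + \log|\CS|}$. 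A secondary concern is bookkeeping the $\min\{|\CS|,T\}$: I would handle the two cases separately, using the trivial $O(\log T)$ round count in general and a ground-set reduction (merging elements that lie in exactly the same sets of $\CS$, of which there are at most $2^{|\CS|}$ but more usefully at most $T$ classes, and then bounding the number of rounds by $\log$ of the number of distinct classes) when $|\CS|$ is the smaller quantity.
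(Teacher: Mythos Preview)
Your proposal takes a genuinely different route from the paper, and the ``main obstacle'' you flag is in fact a real gap that your sketch does not close. The paper does \emph{not} iterate partial coloring. Instead it encodes the whole problem as a single $d|\CS|\times T$ matrix $A_\CS^D$ whose $j$th block of rows is $A_\CS D_j$ with $D_j=\diag(v_1(j),\ldots,v_T(j))$, so that $\cdisc_\CS(v_1,\ldots,v_T)=\disc(A_\CS^D)$. The key (and very short) observation is that if $A_\CS=BC$ is an optimal $\gamma_2$ factorization, then stacking $B$ block-diagonally against the stacked $CD_j$'s is a factorization of $A_\CS^D$ with the \emph{same} row and column norm bounds --- precisely because $\|v_i\|_2\le 1$. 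Hence $\gamma_2(A_\CS^D)\le\gamma_2(A_\CS)$. The theorem then drops out of the two directions of the Matou\v{s}ek--Nikolov--Talwar inequality: the upper direction gives $\disc(A_\CS^D)\le\gamma_2(A_\CS)\sqrt{\log(d|\CS|)}$, and the lower direction gives $\gamma_2(A_\CS)\le\herdisc(\CS)\cdot\log(\mathrm{rank}\,A_\CS)\le\herdisc(\CS)\cdot\log(\min\{|\CS|,T\})$. So the $\log(\min\{|\CS|,T\})$ factor is the $\gamma_2$--$\herdisc$ gap, not a round count, and there is no iteration at all.

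Your per-round claim ``incremental error $\le O(\herdisc(\CS)\sqrt{\log d+\log|\CS|})$'' is exactly the missing step, and I do not see how to get it without passing through $\gamma_2$ (or an equivalent convex proxy). Banaszczyk/Lovett--Meka partial coloring bounds the increment by a Gaussian-width quantity of the constraint body, which is a $\gamma_2$-type object, not $\herdisc$; converting back to $\herdisc$ costs the MNT $\log(\mathrm{rank})$ factor already inside each round, so summing over $O(\log T)$ rounds would lose an extra logarithm relative to the stated bound. The alternative you mention --- handle $d=1$ first and ``tensor in'' the $d$ dimensions --- is not a black-box move: a single coloring must work for all $d$ coordinates at once, and the only clean way to encode that is again the block factorization above. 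Your ground-set reduction for the $|\CS|<T$ case is also unnecessary once you go through $\gamma_2$, since $\mathrm{rank}(A_\CS)\le\min\{|\CS|,T\}$ automatically.
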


Note that the above also implies that $\cdisc(\mathcal{S},d)$ is bounded by the right hand side in the above theorem.

Our proof of Theorem~\ref{thm:general-setsystem} is via the $\gamma_2$-norm, which is used in~\cite{MNT14} as a tool to approximate $\herdisc(\mathcal{S})$. 
We show that since the vectors $v_1, \cdots, v_T$ are in  $\B_2^d$, the $\gamma_2$ norm of the left hand side (viewed as a $d |\mathcal{S}| \times T$ matrix) is at most $\gamma_2(A_{\mathcal{S}})$, where $A_{\mathcal{S}}$ is the incidence matrix of the set system $\mathcal{S}$. 
This allows us to  apply the connection between $\gamma_2$-norm and hereditary discrepancy in~\cite{MNT14} to prove Theorem~\ref{thm:general-setsystem}. More details on the proof can be found in Section~\ref{sec:comb_vec_bal}. 





\subsection{Technical Overview} \label{sec:overview}

\subsubsection{Smoothed Analysis of Prefix Discrepancy}

To achieve the $\sqrt{\log\!\log T}$ dependence in \Cref{thm:smoothed_upper_bound} in the smoothed setting, we group the input vectors into consecutive blocks of $n$ vectors, with a total of $T/n$ blocks, where $n = \poly(d, \log T)$.  Now
Banaszcyzk's result in Theorem~\ref{thm:Banaszczyk12}  implies that the prefix discrepancy within each block (viewed as an instance with $n$ input vectors) is $\Delta = O(\sqrt{\log d + \log n}) = O(\sqrt{\log d + \log\!\log T})$. However, the discrepancy for different blocks might add up arbitrarily. Our idea to get around this issue is to color the vectors block-by-block while using the current block to cancel any previously accumulated discrepancy.

\paragraph{Coloring Strategy for a Block.} We color the first block of $n$ vectors by using \Cref{thm:Banaszczyk12} and for the subsequent blocks, we adopt the following strategy. Suppose that we have assigned signs to the previous blocks and let $M \in \R^{d \times n}$ be the matrix whose columns are the current block of vectors that need to be colored. To find a coloring $x \in \pmone^n$ for a given block, we proceed in two phases: first we find a \emph{fractional} signing $x \in [-1,1]^n$ of the columns of $M$ so that 
\vspace{-\topsep}
\begin{enumerate} 
    \item We cancel out the discrepancy vector $w$ of the previous blocks, i.e., $Mx + w = 0$, and
    \item  The intermediate discrepancy $\| M_I x\|_\infty \leq O(\Delta)$ for every prefix interval $I \subseteq [n]$ and $M_I \in \R^{d \times n}$ is the matrix obtained from $M$ by zeroing out columns not in $I$.
\end{enumerate}

If we can find such a fractional solution, then a variant of the \emph{bit-by-bit rounding procedure} of~\cite{LSV86} allows us to round $x$ to a full coloring $x^* \in \{-1,1\}^n$ so that the prefix discrepancy for block $M$ due to the rounding is at most $\Delta:= \Theta(\sqrt{\log d + \log\!\log T})$  (Lemma~\ref{lem:rounding_in_block}).  
In particular, we show that if the previous discrepancy vector $w$ satisfies $\|w\|_\infty \leq \Delta$, then the discrepancy vector after block $M$ satisfies $\|w + Mx^*\|_\infty = \|M(x^* - x)\|_\infty \leq \Delta$, which allows us to proceed with coloring the next block inductively. The central part of the argument is to find  a fractional signing $x$ satisfying the above two properties for a given block, and we describe the ideas behind it next.

\paragraph{Primal-Dual Approach for Feasibility.} We need to show that for any vector $\|w\|_\infty \leq \Delta$ ($w$ is the discrepancy vector and the assumption on norm is the inductive hypothesis), with probability $1 - 1/\poly(T)$, we can find a fractional signing $x \in [-1,1]^n$ satisfying (a)~$Mx + w = 0$, and (b)~$\|M_I x\|_\infty \leq O(\Delta)$ for every prefix interval $I$.
These linear constraints are naturally captured by the feasibility of the following \emph{stochastic} linear program $\LP(\CI)$, where $\CI$ is the set of all prefixes of $[n]$ (see \Cref{subsubsec:feasibility}).
\begin{equation*}\tag*{$\LP(\CI)$}
    \begin{aligned} 
(Mx)_i & = -w_i &\quad \forall i\in [d]\\
 - 2\Delta  \leq (M_I x)_i & \leq 2\Delta &\quad \forall I \in \CI~,~ \forall i \in [d] \\
  -1 \leq x_j  & \leq 1 &\quad \forall j \in [n]
\end{aligned}
\end{equation*}

Our goal is to show that the above linear program is feasible with high probability over $M$. At a high-level this is reminiscent of the approach of \cite{BansalMeka-SODA19} who also use a stochastic LP to get cancellations, but the complexities are quite different: their LP is  simpler and they work in the \emph{completely random} setting as opposed to the more challenging smoothed-analysis setting as we do. Recall that for us, the columns of $M$ are generated in the smoothed analysis model: starting with an arbitrary matrix, we perturb the columns independently with random vectors whose covariance is lower bounded (in the PSD ordering) by $\epsilon^2 I_d$. Using a primal-dual approach, we derive the following sufficient condition for the feasibility of $\LP(\CI)$: For all dual variables $(y, \{\alpha_I\}_{I \in \CI})$ ($y, \alpha_I \in \R^d$) satisfying $\|y\|_1 = d$ and $\sum_{I\in \CI} \|\alpha_I\|_1 \le d/2$, we must have 
\begin{align}
    \label{eq:cond_intro} \|y^\top M - \sum_{I \in \mathcal{I}}  \alpha^\top_I M_{I}\|_1  ~\ge~ \Delta \cdot d  \enspace .
\end{align}
For intuition on why the above might be true, consider the case of a fixed choice of dual variables $(y, \{\alpha_I\}_{I \in \CI})$. 
Denoting $v_j$ as the $j$th column of $M$, the left hand side of \eqref{eq:cond_intro} can be written as $\sum_{j=1}^n  |(y - \sum_{I:j \in I}  \alpha_I)^\top v_j|$. The latter is a random variable (because of the perturbations in $v_j$'s) that is a sum of $n$ independent terms, each with variance at least $\epsilon^2 \cdot \|(y - \sum_{I:j \in I}  \alpha_I)\|_2^2$, while the right hand side of Equation \ref{eq:cond_intro} is only of order $\sqrt{\log n}$. 
Therefore, picking $n = \poly(d, \log T)$ to be large enough, Chernoff bound shows that the left hand side is larger than the right hand side with probability $1 - \exp(-n)$, for any fixed setting of the dual variables $(y, \{\alpha_I\}_{I \in \CI})$ that satisfies $\|y\|_1 = d$ and $\sum_{I\in \CI} \|\alpha_I\|_1 \le d/2$. 
However, there are $n$ prefix intervals in $\CI$ and each $\alpha_I$ needs to be discretized to multiples of $1/\poly(n,d)$, so the total number of choices of $(y, \{\alpha_I\}_{I \in \CI})$ is $\poly(n,d)^n \gg \exp(n)$. 
Thus, even though, inequality in Equation \ref{eq:cond_intro} holds with high probability for a fixed dual solution, we cannot use a union bound over all the choices of $(y, \{\alpha_I\}_{I \in \CI})$ to show that $\LP(\CI)$ is feasible with high probability.

\paragraph{Block Decomposition and Chaining.} To overcome the difficulty above, we work with a different family of intervals $I$ that we call a \emph{block decomposition} $\CIsub$. These intervals satisfy: 
\vspace{-\topsep}
\begin{enumerate}
    \item  Every prefix interval of $[n]$ can be written as a disjoint union of $I_1,I_2 \in \CIsub$ which allows us to deduce the feasibility of $\LP(\CI)$ from the feasibility of $\LP(\CIsub)$.
    \item There are only $o(n)$ ``long'' intervals in $\CIsub$. We are able to bypass the union bound by employing a two-step union bound, best thought of as a \emph{chaining} argument, to show that for any fixed choice of $(y, \{\alpha_I\}_{I\,\mathrm{long}})$, with probability at least $1 - \exp(-n)$, inequality \eqref{eq:cond_intro} holds for \textbf{all} choices of $\{\alpha_I\}_{I\,\mathrm{short}}$. We once again cannot do a union bound over all choices of $\{\alpha_I\}_{I\,\mathrm{short}}$, but use a more subtle argument based on the values of $\alpha_I$ to prove this claim (see Lemma \ref{lem:chaining}). Once we do so, we can take a naive union bound only over the $\exp(o(n))$ possible choices of $(y,\{\alpha_I\}_{I\,\mathrm{long}})$, thus showing that $\LP(\CIsub)$ is almost always feasible. The fractional signing $x \in [-1,1]^n$ can then be found by solving the corresponding LP feasibility problem.
\end{enumerate}



\subsubsection{Prefix Discrepancy for Trees and DAGs}

\paragraph{The upper bound.} 

The proof of \Cref{thm:DAGs_prefix} has two parts. First, we show that Banaszczyk's result in \Cref{thm:Banaszczyk12} can be generalized to prefixes of \emph{rooted trees} (which can be naturally viewed as DAGs by orienting the edges from the root to leaves). In particular we show that the same $O(\sqrt{\log d+ \log T})$ bound holds for prefixes of any rooted tree (\Cref{lem:prefix_trees}). This proof is inspired from Banaszczyk's original proof of \Cref{thm:Banaszczyk12}: we define intermediate auxiliary convex bodies inductively from the leaves to the root using a symmetrization operation for convex bodies defined in \cite{B12} (see \Cref{defn:ban_symmetrization}). We can show that all these intermediate bodies have large Gaussian measure and thus invoking the classic result of \cite{Banaszczyk-Journal98} gives us a coloring of the vectors with the above bound.

The second part is the crux of our proof, where we reduce the case of an arbitrary DAG $G$ to that of a rooted tree. This is where the notion of hereditary discrepancy of the set system $\prefix(G)$ comes in. In particular, we show:

\begin{restatable}[Reducing DAGs to Trees]{lemma}{DAGreduction}
\label{lem:reduction_DAGs_to_trees}
Given a DAG $G = ([T],E)$, there exists a (rooted) spanning tree  $\CT \subseteq G$ such that every directed path $\CP$ in $G$ has at most $4\cdot \herdisc(\prefix(G))$ non-tree edges. 
\end{restatable}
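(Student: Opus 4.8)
\emph{Proof plan.}
Write $h=\herdisc(\prefix(G))$ and let $r$ be the first vertex of a fixed topological order of $G$. For an (undirected) spanning tree $\CT$ whose edges are edges of $G$, and a vertex $v$, let $\ell_\CT(v)$ be the maximum, over directed paths of $G$ ending at $v$, of the number of edges of the path lying outside $\CT$. Processing vertices in topological order, $\ell_\CT(r)=0$ and $\ell_\CT(v)=\max_{u\in N^-(v)}\big(\ell_\CT(u)+\ind[\{u,v\}\notin\CT]\big)$, where $N^-(v)$ denotes the in-neighbours. Since every directed path of $G$ has at most $\max_v\ell_\CT(v)$ edges outside $\CT$, the claim reduces to exhibiting a spanning tree with $\max_v\ell_\CT(v)\le 4h$. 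One point to keep in mind: $\CT$ must be allowed to be an arbitrary spanning tree, not an out-arborescence rooted at $r$. For the ``binary recombination DAG'' (a binary in-tree of depth $L$ oriented towards a single sink, with the source $r$ joined to all $2^L$ leaves) one has $\herdisc(\prefix(G))=O(1)$, yet \emph{every} out-arborescence from $r$ has $\max_v\ell_\CT(v)=\Omega(L)$, while the in-arborescence towards the sink achieves $\max_v\ell_\CT(v)=1$. So $\CT$ will in general use edges against the orientation of $G$.

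Fix $\CT$ to be a spanning tree minimising $\max_v\ell_\CT(v)$, and suppose for contradiction that $\ell_\CT(v^\star)=L>4h$ for some $v^\star$. The plan is to produce a vertex subset $W\subseteq[T]$ with $\disc(\prefix(G)|_W)\ge L/4>h$, contradicting $h=\herdisc(\prefix(G))$. To build $W$: take a directed path $\CP$ from $r$ to $v^\star$ with exactly $L$ edges outside $\CT$. Each such edge $\{u,w\}$ (with $u\to w$) closes, when added to $\CT$, an undirected cycle of $G$ through $w$; minimality of $\CT$ forbids re-routing along these cycles to decrease $\max_v\ell$, which constrains the local structure at $w$, and iterating down $\CP$ yields a recursive ``branching'' structure of depth $L$. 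The key step is to refine this into a \emph{series} of $m=\lfloor L/2\rfloor$ internally-vertex-disjoint ``diamonds'' stacked along a single directed path of $G$, each diamond contributing one internal vertex $s_i$ that a directed path from $r$ can independently include or skip. Then restricting $\prefix(G)$ to $W=\{s_1,\dots,s_m\}$ yields (at least) the complete set system on $m$ elements, of discrepancy $\lceil m/2\rceil\ge L/4$, as required. The two factor-$2$ losses — passing to every other excluded edge to make the diamonds genuinely disjoint and in series, and the $\lceil m/2\rceil$ discrepancy of the complete set system — are what turn $L$ into $L/4$. For the additional $d=1$ statement: once $\CT$ has this property, any directed path of $G$ decomposes into at most $4h+1$ maximal $\CT$-subpaths (plus at most $4h$ unit edges), and each $\CT$-subpath is balanced up to $O(1)$ by the one-dimensional discrepancy bound for trees (the hereditary discrepancy of $\prefix$ of a rooted tree is $1$).

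The main obstacle is this refinement step. A pure binary \emph{branching} structure does not suffice — $\prefix$ of any rooted tree has hereditary discrepancy $1$ — so one must extract genuine \emph{recombination}: diamonds arranged \emph{in series} (nested diamonds again only give $O(1)$ discrepancy) whose detour vertices can be switched independently. Showing that a long directed path with many edges outside an \emph{optimal} $\CT$ is forced into such a series of diamonds, rather than into a tree-like cascade, is precisely where optimality must be exploited: whenever the required recombination is absent, one should be able to re-route $\CT$ across the witnessing cycles and lower some $\ell$-value, contradicting minimality. Carrying out this bookkeeping cleanly — extracting a long \emph{series} (not overlapping or nested) family of diamonds, and controlling the root-to-$s_1$ segment so that the restricted set system is exactly the complete one — is the technical heart of the argument.
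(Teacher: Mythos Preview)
Your proposal correctly identifies the right combinatorial obstruction --- what the paper calls a \emph{chain structure}: vertices $a_1,\dots,a_\ell$ with two internally vertex-disjoint directed $a_i\to a_{i+1}$ paths for each $i$, which indeed forces $\herdisc(\prefix)\ge\ell/4$ --- and you correctly see that mere branching or nesting of cycles gives nothing. But the proposal has a genuine gap precisely at its self-declared ``technical heart.'' You assert that optimality of $\CT$ forces the $L$ non-tree edges along $\CP$ into a \emph{series} of $\lfloor L/2\rfloor$ independent diamonds, because ``whenever the required recombination is absent, one should be able to re-route $\CT$ across the witnessing cycles and lower some $\ell$-value.'' That sentence is not an argument. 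Optimality is a \emph{global} minimum of $\max_v\ell_\CT(v)$; an edge swap across one of your fundamental cycles may lower $\ell$ at some vertices while leaving the global maximum unchanged (attained along a completely different path), and conversely a swap that helps $\CP$ may raise $\ell$ on paths you are not tracking. You give no mechanism for converting ``every optimal $\CT$ leaves $L$ non-tree edges on $\CP$'' into ``$\Theta(L)$ diamonds in series along a single directed path,'' and that extraction --- ruling out overlapping, nested, or branching cycle configurations via a controlled exchange argument --- is the whole difficulty, not bookkeeping.

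The paper does not argue from an optimal tree at all. It builds $\CT$ \emph{constructively} in reverse topological order, maintaining for every already-processed vertex $v$ the invariant $m_v(\CT)\le\ell_v$, where $m_v$ counts the max non-tree edges on any directed path \emph{from} $v$ and $\ell_v$ is the length of the longest chain structure in the sub-DAG rooted at $v$. When processing a new vertex $u$ with children $v_1,\dots,v_k$, one adds edges $(u,v_j)$ in decreasing order of $m_{v_j}$. If adding $(u,v_j)$ closes an undirected cycle $W$, let $a$ be the topological sink of $W$. Two cases: if $m_a=m_{v_j}$, the cycle $W$ \emph{is} a new diamond appended to a length-$\ell_a$ chain at $a$, so $\ell_u\ge\ell_a+1\ge m_{v_j}+1\ge m_u$ and one simply stops adding edges; otherwise some tree edge $(b_1,b_2)\in W$ has $m_{b_1}>m_{b_2}$, and a clean lemma (``Removing an Edge for Free'') shows that deleting $(b_1,b_2)$ from $\CT$ changes no $m_v$ whatsoever, so one swaps it for $(u,v_j)$ and continues. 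The chain structure is thus grown \emph{in lockstep} with the non-tree-edge count, so the series-of-diamonds extraction problem you face never arises.
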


Since the removed edges break each path $\mathcal{P} \in \prefix(G)$ into at most $O\big(\herdisc(\prefix(G))\big)$ sub-paths in the tree $\CT$, the discrepancy along  $\mathcal{P}$ is  at most $\herdisc(\prefix(G)) \cdot O(\sqrt{\log d + \log T})$ using the bound for prefixes for trees (since any sub-path in the tree can be written as a difference of two prefixes). Below we describe the implications and ideas behind the above combinatorial characterization of hereditary discrepancy for prefixes of a DAG.



\paragraph{Hereditary Discrepancy for DAGs.} 

 \Cref{lem:reduction_DAGs_to_trees} also gives us a combinatorial characterization of $\herdisc(\prefix(G))$ as the proximity of DAG $G$ to a tree $\CT$, where the ``distance'' is the maximum number of non-tree edges along any path in $G$ (see \Cref{cor:comb_char_DAG}). This characterization might be of independent interest. 


For the proof of \Cref{lem:reduction_DAGs_to_trees}, we construct such a tree $\CT$ by going through the vertices of the DAG in the reverse topological order and adding edges inductively. However, it's not immediately clear what induction hypothesis we want to maintain throughout the process and how the number of non-tree edges can be related to $\herdisc(\prefix(G))$. The key observation, the one that brings $\herdisc(\prefix(G))$ into the picture is that $\herdisc(\prefix(G))$ is large if and only if $G$ contains a long {\em chain structure} (see \Cref{defn:chain_structure} for a formal definition). 
This is a subgraph $C$ that looks like the DAG shown in \Cref{fig:chain}, and it has the critical property that the set family $\prefix(C)$ is of exponential in $\ell$ size where $\ell$ is the length of the chain, implying that hereditary discrepancy of $\prefix(C) = \Omega(\ell)$.\\

\begin{figure}[h]
\centering 
\includegraphics[height=1in]{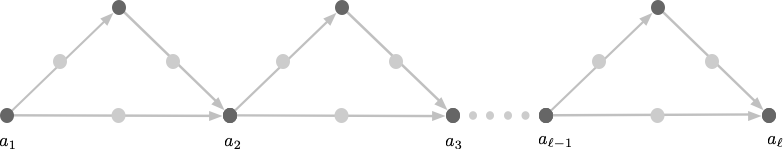}
\caption{\small An example of a chain structure of length $\ell-1$. Note that there are two vertex disjoint paths between $a_i$ and $a_{i+1}$ for each $i \in [\ell-1]$ and one of these contain at least one additional vertex apart from $a_i$ and $a_{i+1}$.}\label{fig:chain}
\end{figure}

 Using a careful induction, we are able to show that whenever during the construction of the tree $\CT$ in the reverse topological order, the maximum number of non-tree edges increases on some path, then the length of the longest chain and thus the hereditary discrepancy (of the prefixes of the sub-DAG) must also increase.

\paragraph{The lower bounds.} Both our lower bounds in \Cref{thm:prefix_lower_bound_trees,thm:trees_random_lower_bound} make use of a canonical construction. In particular, if one is considering prefixes of a complete binary tree with $T$ nodes, then there is a canonical construction of two-dimensional vectors $\{v_t\}_{t \in [T]}$ such that for any coloring $x \in \pmone^T$, there always exists a path so that the next vector on that path is always orthogonal to the current discrepancy vector.  Since the square of the $\ell_2$-norm of the discrepancy vector always increases as one goes down the path, this implies that the prefix discrepancy is at least $\Omega(\sqrt{h})$ where $h \approx \log T$ is the height of the tree. For our lower bound in the stochastic setting, we show that for any $\ell$-ary tree where $\ell=\polylog(T)$, if one chooses vectors $\{v_t\}_{t \in [T]}$ randomly, then with high probability, such a canonical instance is embedded in this random instance.



\newcommand{\Vol}{\mathsf{Vol}}

\section{Preliminaries} \label{sec:prelims}

\paragraph{Notation.}

Throughout this paper, we use $[k]$ to denote the set $\{1,2,\dotsc, k\}$ and $T \in \mathbb{Z}_{>0}$ to denote the number of vectors, which we assume to be $d$-dimensional. We use $\B_p^d$ to denote the unit $\ell_p$-ball in $\R^d$ and $\S^{d-1}$ to denote the unit Euclidean sphere in $\R^d$. We denote by $\gauss$ the $d$-dimensional standard Gaussian measure on $\BR^d$. For a random vector $v \in \R^d$, let $\mathsf{Cov}(v)$ denote the $d \times d$ covariance matrix of the distribution of $v$. For any two positive semidefinite  matrices $A$ and $B$, we use $A \succcurlyeq B$ to denote that $A - B$ is positive semidefinite. Given two DAGs $G_1$ and $G_2$ on the same vertex set, we write $G_1-G_2$ to denote the DAG which contains an edge $e$ iff $e \in G_1$ but $e\notin G_2$. We also write $G_2 \subseteq G_1$ if $G_2$ is a subgraph of $G_1$. 

\paragraph{Discrepancy.}

Let $\mathcal{F} =\{F_1, \cdots, F_m\}$ be a family of subsets over a ground set $[T]$, where $T \geq 1$ is a positive integer.
The discrepancy of $\mathcal{F}$ is defined as 
\begin{align*}
    \disc(\mathcal{F}) := \min_{x \in \{-1,1\}^T} \disc(\mathcal{F}, x),
\end{align*}
where $\disc(\mathcal{F}, x) = \max_{i \in [m]} |\sum_{j \in F_i} x_j|$. 
A vector $x \in \{-1,1\}^T$ is called a coloring (or signing) of $[T]$.

\begin{definition}[Hereditary discrepancy] \label{def:herdisc}
The hereditary discrepancy of $\mathcal{F}$ is defined as
\begin{align*}
    \herdisc(\mathcal{F}) := \max_{J \subseteq [T]} \disc(\mathcal{F}|_J) \enspace ,
\end{align*}
where $\mathcal{F}|_J := \{F_1 \cap J, \cdots, F_m \cap J\}$ is the restriction of the set system $\mathcal{F}$ to $J$.
\end{definition}

An important measure for approximating the hereditary discrepancy of a set system is the $\gamma_2$-norm (or the factorization norm), which is well-studied in mathematics. 

\begin{definition}[$\gamma_2$-norm]
Let $A \in \R^{m \times n}$ be a matrix. The $\gamma_2$-norm of $A$ is defined as 
\begin{align*}
    \gamma_2(A) := \min \big\{r(B) \cdot c(C) \mid A = BC \big\}  \enspace ,
\end{align*}
where $r(B)$ is the largest $\ell_2$-norm of the rows of $B$, and $c(C)$ is the largest $\ell_2$-norm of the columns of $C$, and the minimum is taken over all real matrices $B$ and $C$ such that $A = BC$. 
\end{definition}

The following relationship between $\gamma_2$-norm and $\herdisc$ was proved in~\cite{MNT14}.

\begin{lemma}[\cite{MNT14}]
\label{lemm:gamma2_herdisc}
Let $A \in \mathbb{R}^{m \times T}$ with rank $r$. Then, we have $\frac{\gamma_2(A)}{\log r} \leq \herdisc(A) \leq \sqrt{\log m} \cdot \gamma_2(A)$. 
\end{lemma}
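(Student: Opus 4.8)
The lemma splits into two inequalities, and I would attack them by completely different means: the right‑hand bound $\herdisc(A) \le \sqrt{\log m}\cdot \gamma_2(A)$ is essentially a black‑box consequence of Banaszczyk's vector balancing theorem, whereas the left‑hand bound $\gamma_2(A)/\log r \le \herdisc(A)$ is the substantive direction and the one where I expect the real work to be.

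\textbf{Upper bound.} Start from a factorization $A = BC$ that (nearly) attains the $\gamma_2$‑norm; rescaling $B \mapsto \lambda B$, $C \mapsto \lambda^{-1}C$ we may assume every column of $C$ lies in the Euclidean unit ball $\B_2^k$ and every row $b_i$ of $B$ has $\ell_2$‑norm at most $\gamma_2(A)$. Fix any column subset $J \subseteq [T]$. The restricted columns $\{c_j\}_{j \in J}$ are still unit‑ball vectors, so I would apply Banaszczyk's theorem \cite{Banaszczyk-Journal98} with the symmetric convex body $K := \{\, y : |\langle b_i, y\rangle| \le t \text{ for all } i \in [m]\,\}$ where $t := \Theta\!\big(\gamma_2(A)\sqrt{\log m}\big)$: since each $\langle b_i, g\rangle$ is a centred Gaussian of standard deviation at most $\gamma_2(A)$, a union bound over the $m$ rows gives $\gauss(K) \ge 1/2$, and hence there is a signing $x \in \pmone^{J}$ with $\sum_{j \in J} x_j c_j \in O(1)\cdot K$. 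Then $\|A|_J\, x\|_\infty = \max_i |\langle b_i,\, \sum_j x_j c_j\rangle| = O(t) = O\!\big(\gamma_2(A)\sqrt{\log m}\big)$. As $J$ was arbitrary this bounds $\herdisc(A)$, and tracking constants (or absorbing them for large $m$) yields the stated $\sqrt{\log m}$ factor.

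\textbf{Lower bound.} I would route this through the determinant lower bound of \cite{LSV86}: $\herdisc(A) \ge \tfrac12 \cdot \mathsf{DetLB}(A)$, where $\mathsf{DetLB}(A) = \max_k \max\{|\det N|^{1/k} : N \text{ is a } k\times k \text{ submatrix of } A\}$; since such an $N$ uses only $k$ columns, this is genuinely a bound on the hereditary discrepancy. So it suffices to exhibit a square submatrix of $A$ with normalized determinant $\Omega(\gamma_2(A)/\log r)$. For this I would use the SDP dual of the $\gamma_2$‑norm, $\gamma_2(A) = \max\{\, \|D^{1/2} A E^{1/2}\|_{S_1} : D, E \text{ diagonal PSD}, \Tr D = \Tr E = 1\,\}$ (the $S_1$‑norm being the sum of singular values), fix an optimal pair $(D,E)$, and set $M := D^{1/2} A E^{1/2}$, so that $\mathrm{rank}(M) \le r$ and $\|M\|_{S_1} = \gamma_2(A)$. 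Now I would (i) dyadically bucket the diagonal entries of $D$ and of $E$ into $O(\log r)$ scales and pigeonhole to a scale pair carrying an $\Omega(1/\log r)$ fraction of $\|M\|_{S_1}$, so that on the surviving row/column indices the weights are within a constant factor of each other, and (ii) run a restricted‑invertibility / greedy selection argument on that block to pull out a $k\times k$ submatrix $N$ of the original $A$ with $|\det N|^{1/k} \ge \sigma_{\min}(N) \ge \Omega(\gamma_2(A)/\log r)$ — the near‑uniform weights from step (i) are exactly what let us divide them out of the determinant with only a constant loss. Feeding $N$ into the determinant bound completes the argument.

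\textbf{Main obstacle.} The delicate point is step (ii) of the lower bound: converting the analytic fact "$\|D^{1/2}AE^{1/2}\|_{S_1}$ is large'' into the combinatorial fact "$A$ has a square submatrix with large determinant,'' while losing only a $\log(\mathrm{rank})$ factor rather than $\log(mn)$. This requires a sharp restricted‑invertibility statement (of Bourgain--Tzafriri / Rudelson flavour) tuned to exploit the effective rank and the bucketed dual weights, and it is precisely the technical core of \cite{MNT14}; the upper bound, by contrast, needs nothing beyond Banaszczyk's theorem and a Gaussian‑measure union bound.
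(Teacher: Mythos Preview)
The paper does not prove this lemma at all: it is stated in the preliminaries as a result quoted from \cite{MNT14} and is used only as a black box (in \Cref{lem:gamma2_upper_bound}, where just the right-hand inequality is invoked). So there is no ``paper's own proof'' to compare against.

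That said, your sketch is a faithful outline of the \cite{MNT14} argument itself: Banaszczyk's theorem applied to the columns of an optimal $\gamma_2$-factorization for the upper bound, and for the lower bound the chain $\herdisc \ge \tfrac12\,\mathsf{DetLB}$ from \cite{LSV86}, the trace-norm SDP dual of $\gamma_2$, dyadic bucketing of the dual weights, and a restricted-invertibility step to extract a well-conditioned square submatrix. You have correctly identified step~(ii) as the technical heart. One small caveat: saying you bucket the diagonal entries of $D$ and $E$ into ``$O(\log r)$ scales'' is a bit too quick --- there can be far more than $r$ nonzero entries, so one first has to argue that entries below a threshold (depending on $r$ and the target $S_1$-mass) can be discarded before the number of relevant scales becomes $O(\log r)$; this is exactly where the rank, rather than $\log(mT)$, enters. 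But that refinement is part of the ``delicate point'' you already flagged, and the overall plan is the right one.
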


\paragraph{Banaszczyk's Symmetrization.}

The following definition from~\cite{Banaszczyk-Journal98} is crucially used to obtain the bound in \Cref{thm:Banaszczyk12}.
Let $K \subseteq \mathbb{R}^d$ be a convex body and $u \in \mathbb{R}^d$ be a non-zero vector. 
Let $Z$ be the cylinder (possibly empty)  consisting of all lines $\ell$ parallel to $u$ such that the Euclidean length of the segment $K \cap \ell$ is greater than or equal to $2 \|u\|_2$. 
If $u = 0$, then $Z$ is defined to be equal to $K$.

\begin{definition}[Banaszczyk's symmetrization] \label{defn:ban_symmetrization}
Let $K \subseteq \R^d$ be a convex body, $u \in \R^d$ a vector, and $Z$ as defined above. We define $K * u$ as
\begin{align*}
    K * u := \big[(K + u) \cup (K - u) \big] \cap Z  \enspace .
\end{align*}
\end{definition}
In particular, it follows that $K*u \subseteq (K+u) \cup (K-u)$ and that if $K$ is symmetric , then so is $K * u$.
The following lemma is central to the proof of \Cref{thm:Banaszczyk12}.

\begin{lemma}[\cite{Banaszczyk-Journal98}]\label{lem:ban_symmetrization}
Let $K \in \R^d$ be a symmetric convex body with Gaussian measure $\gauss(K) \geq \frac12$, and let $u \in \frac{1}{5} \cdot \B_2^d$, then we have $\gauss(K * u) \geq \gauss(K)$. 
\end{lemma}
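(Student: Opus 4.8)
The plan is to reduce the statement to a one-dimensional inequality by slicing $K$ along the direction of $u$. We may assume $u \neq 0$, since $K * 0 = K$ by definition. Write $r = \|u\|_2 \le 1/5$ and $\hat u = u/r$, and decompose $\R^d = \R\hat u \oplus u^\perp$, so that $\gauss$ factors as $\mu_1 \otimes \mu_{d-1}$, where $\mu_1$ is the standard Gaussian measure on the line $\R\hat u$ (with arc-length coordinate $s$) and $\mu_{d-1}$ is the standard Gaussian measure on $u^\perp$. For $y \in u^\perp$ the fiber $\{s \in \R : s\hat u + y \in K\}$ is a closed interval (possibly empty), which I write $[\alpha_y, \beta_y]$, of length $L_y := \beta_y - \alpha_y \ge 0$. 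Unwinding \Cref{defn:ban_symmetrization} fiber by fiber, one checks that $K * u = \big(K + [-u, u]\big) \cap Z$ with $Z = \{y : L_y \ge 2r\} \times \R\hat u$, and hence that the fiber of $K * u$ above $y$ equals $[\alpha_y - r, \beta_y + r]$ when $L_y \ge 2r$ and is empty otherwise. Writing $G(I) := \mu_1(I)$ for an interval $I \subseteq \R$, Fubini then gives
\begin{align*}
 \gauss(K) &= \int_{u^\perp} G\big([\alpha_y, \beta_y]\big)\, d\mu_{d-1}(y), \\
 \gauss(K * u) &= \int_{\{L_y \ge 2r\}} G\big([\alpha_y - r, \beta_y + r]\big)\, d\mu_{d-1}(y),
\end{align*}
so that the lemma is equivalent to the inequality
\begin{equation*}
 \int_{\{L_y \ge 2r\}} \Big[ G\big([\alpha_y - r, \beta_y + r]\big) - G\big([\alpha_y, \beta_y]\big) \Big]\, d\mu_{d-1}(y) \ \ge\ \int_{\{L_y < 2r\}} G\big([\alpha_y, \beta_y]\big)\, d\mu_{d-1}(y). \tag{$\star$}
\end{equation*}
The left-hand side is the ``gain'' obtained by widening each thick fiber by $r$ at both ends, and the right-hand side is the ``loss'' from deleting the thin fibers.

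Next I would record the structural consequences of $K$ being convex and centrally symmetric, together with the hypothesis $\gauss(K) \ge 1/2$. Convexity makes $y \mapsto L_y$ concave on its support, and convexity plus symmetry forces $[\alpha_{-y}, \beta_{-y}] = -[\alpha_y, \beta_y]$ and $[\alpha_0, \beta_0] = [-L_0/2, L_0/2]$ with $L_0 = \max_y L_y$. Since among intervals of a given length the one centered at the origin has the largest Gaussian measure, $G\big([\alpha_y, \beta_y]\big) \le G\big([-L_0/2, L_0/2]\big)$ for every $y$, so $G\big([-L_0/2, L_0/2]\big) \ge \gauss(K) \ge 1/2$ and hence $L_0 \ge 2\,\Phi^{-1}(3/4) > 1 > 2r$. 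Thus the central fiber is comfortably thick, and more generally concavity of $L$ makes the ``thick set'' $\Lambda := \{y : L_y \ge 2r\}$ convex with the origin in its interior. On the other hand, every thin fiber satisfies $G\big([\alpha_y, \beta_y]\big) \le G\big([-r, r]\big) < 1/6$, so the loss is already small; what remains is to show it is dominated by the gain.

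The heart of the matter, and the step I expect to be the main obstacle, is proving ($\star$). A fiber-by-fiber comparison is doomed: a thin fiber sitting far out in $u^\perp$ can contribute more to the loss than a far-out (along $\hat u$) thick fiber contributes to the gain. The resolution is to exploit that a fiber far from the origin carries little Gaussian mass — so thin fibers are ``surrounded'' by thick ones — and to charge the loss of each thin $y$ to the thick fibers lying on the segment from the origin to $y$. By concavity of $L$ and $L_0 > 1$, that segment stays inside $\Lambda$ up to the parameter $t^\star = 1 - 2r/L_0 > 3/5$, and along this portion the fiber is being widened by $r$ on the end nearer the origin, contributing a gain that can be bounded below using monotonicity of the Gaussian density. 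Integrating these gains against $\mu_{d-1}$, via a radial change of variables projecting the thin set $\{0 < L_y < 2r\}$ onto $\partial \Lambda$ (whose Jacobian is controlled because $\Lambda$ is convex and contains a fixed ball), one obtains a bound of the form (gain) $\ge c \cdot$ (loss) for an absolute constant $c$, and the specific choice $r \le 1/5$ is exactly what makes $c \ge 1$. An alternative, more global route to ($\star$) would start from the fact that both $y \mapsto G\big([\alpha_y, \beta_y]\big)$ and the analogous fiber-measure of $K * u$ are log-concave on their supports (by Pr\'ekopa's theorem, since $K$ and $K * u$ are convex), but pinning down the right constant seems cleanest through the radial charging argument above.
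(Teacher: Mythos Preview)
The paper does not prove this lemma; it is quoted from Banaszczyk's original work \cite{Banaszczyk-Journal98} as a black box in the preliminaries, so there is no in-paper argument to compare against beyond the citation.

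Your Fubini reduction to fibers along $\hat u$ is the right framework and is essentially how Banaszczyk proceeds in the original paper. The fiber description of $K * u$ is correct (once $L_y \ge 2r$, the two shifted intervals $[\alpha_y \pm r, \beta_y \pm r]$ overlap and their union is $[\alpha_y - r, \beta_y + r]$), and the reformulation $(\star)$ is clean. The observation that $\gauss(K) \ge 1/2$ forces $L_0 \ge 2\Phi^{-1}(3/4) > 1$ is also correct and is exactly what makes the central slab comfortably thick.

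The gap is the final step. Your ``radial charging argument'' is a plausible outline but not a proof: you assert that charging each thin fiber $y$ to the thick fibers on the segment from $0$ to $y$ yields a bound of the form (gain) $\ge c \cdot$ (loss) with $c \ge 1$ precisely when $r \le 1/5$, but you carry out none of the estimates that would pin down the constant. In particular, the per-fiber gain $G([\alpha - r, \beta + r]) - G([\alpha, \beta])$ can be arbitrarily small when the fiber is already wide enough to capture nearly all of $\mu_1$, so any charging has to be done with care; and the Jacobian from your radial change of variables must be controlled jointly with the Gaussian weight in $u^\perp$. Banaszczyk's original argument does go through a one-dimensional inequality of the $(\star)$ type, but its proof is an explicit calculus computation (bounding $\Phi$ and its derivative), and the constant $1/5$ emerges from those numerics rather than from a soft structural argument. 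As written, your sketch does not close the loop on $(\star)$; to make it a proof you would need either to carry out the calculus in full or to invoke the relevant one-dimensional lemma from \cite{Banaszczyk-Journal98}.
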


\paragraph{Basic Probabilistic Lemmas.} Here we collect some basic probabilistic facts that will be used later.

\begin{proposition}\label{prop:cap}
Let $d \ge 2$, and $\eps \le 1/d$, and $X\in \R^d$ be sampled uniformly from the unit sphere $\S^{d-1}$. Then, for any fixed vector $u \in \S^{d-1}$, we have
  \[  \p[\|X - u\|_2 \leq \eps] ~\geq ~ \exp(-10 d \log (1/\eps))  \enspace . \]
\end{proposition}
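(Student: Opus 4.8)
The plan is to lower bound the probability that a uniformly random point $X \in \S^{d-1}$ lands in a spherical cap of angular radius roughly $\eps$ around a fixed $u \in \S^{d-1}$. First I would translate the event $\|X-u\|_2 \le \eps$ into an angular condition: since $\|X-u\|_2^2 = 2 - 2\langle X, u\rangle$, the event is equivalent to $\langle X, u\rangle \ge 1 - \eps^2/2$, i.e., $X$ lies in the cap $C = \{x \in \S^{d-1} : \langle x, u\rangle \ge 1 - \eps^2/2\}$. By rotational symmetry we may take $u = e_1$, so $\p[X \in C] = \sigma_{d-1}(C)/\sigma_{d-1}(\S^{d-1})$ where $\sigma_{d-1}$ is the surface measure.

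Next I would write the normalized cap measure via the standard formula in terms of the incomplete Beta function / an integral over the ``height'' coordinate. Concretely, for $t_0 = 1 - \eps^2/2$,
\[
\p[X \in C] = \frac{\int_{t_0}^{1} (1-t^2)^{(d-3)/2}\,dt}{\int_{-1}^{1} (1-t^2)^{(d-3)/2}\,dt}.
\]
The denominator is $B\!\left(\tfrac12, \tfrac{d-1}{2}\right) = O(1/\sqrt{d})$ by standard estimates, so it only contributes a $\poly(d)$ factor, which is harmless given the target bound $\exp(-10 d \log(1/\eps))$. For the numerator I would substitute $t = 1-s$ and note that on $s \in [0, \eps^2/2]$ we have $1 - t^2 = s(2-s) \ge s$, so $\int_{t_0}^1 (1-t^2)^{(d-3)/2}\,dt \ge \int_0^{\eps^2/2} s^{(d-3)/2}\,ds = \frac{(\eps^2/2)^{(d-1)/2}}{(d-1)/2}$ when $d \ge 3$. (For $d = 2$ the density exponent is negative and one instead bounds $1-t^2 = (1-t)(1+t) \le 2(1-t)$, giving a lower bound via $\int_0^{\eps^2/2} (2s)^{-1/2}\,ds = \sqrt{\eps^2}= \eps$; I would handle $d=2$ as a small separate case, or simply absorb it since $\eps^2 \ge \exp(-10 \cdot 2 \log(1/\eps))$ trivially.) Combining, for $d \ge 3$,
\[
\p[X \in C] \ \ge\ c\sqrt{d}\cdot \frac{2}{d-1}\left(\frac{\eps^2}{2}\right)^{(d-1)/2} \ \ge\ \left(\frac{\eps^2}{2}\right)^{(d-1)/2}\cdot \frac{1}{d},
\]
and taking logarithms, $\log \p[X \in C] \ge \frac{d-1}{2}\log(\eps^2/2) - \log d \ge -(d-1)\log(1/\eps) - \tfrac{d-1}{2}\log 2 - \log d \ge -10 d \log(1/\eps)$, where the last inequality uses $\eps \le 1/d \le 1/2$ so that $\log(1/\eps) \ge \log 2$ and the lower-order terms $\tfrac{d-1}{2}\log 2 + \log d$ are dominated by $9d\log(1/\eps)$ (since $\log d \le d \le d\log(1/\eps)$ and similarly for the $\log 2$ term). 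This finishes the bound.

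The only mildly delicate point is bookkeeping the constants so that everything fits under $10 d \log(1/\eps)$, and handling the low-dimensional edge case $d = 2$ where the surface-measure density is not integrable in the naive way; both are routine. There is no real obstacle here — the proposition is a quantitative but standard spherical-cap volume estimate, and the generous constant $10$ in the exponent leaves ample slack for the crude bounds above.
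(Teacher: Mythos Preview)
Your proof is correct and follows the same route as the paper: rewrite the event $\|X-u\|_2\le\eps$ as membership in a spherical cap and lower-bound its normalized surface measure. The only difference is that the paper outsources the cap-volume estimate to an external lemma (Lemma~4.1 of~\cite{MV10}), whereas you carry it out directly via the integral formula for the cap measure; your version is thus more self-contained, but the two arguments are essentially identical.
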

\begin{proof}
   Since $\|u\|_2=1$, the event $\|X - u\|_2 \leq \eps$ is equivalent to saying that $X^\top u \ge 1-\eps$. Thus, the probability of this event is exactly equal to the ratio  $\Vol(\CS)/\Vol(\S^{d-1})$ where $\Vol$ denotes the surface area and $\CS = \{ x \in \S^{d-1} \mid x^\top u \ge 1-\eps\}$ is the spherical cap. Using Lemma 4.1 in \cite{MV10}, we get that this ratio is at least $\eps^{10d} = \exp(-10 d \log (1/\eps))$.  
\end{proof}

\begin{proposition}\label{prop:rvec}
   Let $X \in 2\cdot \B_2^d$ be a random vector such that $\Cov(X)\succcurlyeq \eps^2 I$, then for any $u \in \R^d$,  
    \[ \BP\left[ |u^\top X| \ge \frac 12 \cdot \eps \|u\|_2\right] ~\ge~ \frac{\eps^2}{10} \enspace .\]
\end{proposition}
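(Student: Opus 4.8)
The plan is to reduce the statement to a one-dimensional truncated-second-moment argument. First I would normalize: the desired inequality is invariant under rescaling $u$ and is trivial when $u=0$, so we may assume $\|u\|_2=1$, in which case the target event is simply $\{|u^\top X|\ge \tfrac12\eps\}$. Set $Y:=u^\top X$, a real-valued random variable. Since $X\in 2\cdot\B_2^d$ almost surely, Cauchy--Schwarz gives $|Y|\le\|u\|_2\|X\|_2\le 2$ with probability one. For the lower bound on the spread of $Y$, I would use the hypothesis $\Cov(X)\succcurlyeq\eps^2 I$: it yields $u^\top\Cov(X)u\ge\eps^2\|u\|_2^2=\eps^2$, and since $u^\top\Cov(X)u=\mathrm{Var}(u^\top X)=\mathrm{Var}(Y)$, we obtain $\E[Y^2]\ge\mathrm{Var}(Y)\ge\eps^2$.

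Next I would run a Paley--Zygmund-type split on $\E[Y^2]$. Writing $p:=\BP[|Y|\ge\tfrac12\eps]$ and bounding $Y^2$ by $\tfrac14\eps^2$ on the event $\{|Y|<\tfrac12\eps\}$ and by $4$ on its complement (using $|Y|\le 2$ there), we get
\[
\eps^2 \;\le\; \E[Y^2] \;=\; \E\big[Y^2\,\ind_{|Y|<\eps/2}\big] + \E\big[Y^2\,\ind_{|Y|\ge\eps/2}\big] \;\le\; \tfrac14\eps^2 + 4p,
\]
so that $p\ge\tfrac{3}{16}\eps^2\ge\tfrac{1}{10}\eps^2$, which is precisely the claimed bound once we recall that $\tfrac12\eps=\tfrac12\eps\|u\|_2$ in the normalized case, and that rescaling $u$ restores the general statement.

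There is essentially no obstacle here; the only points that warrant a line of care are the reduction to $\|u\|_2=1$ (and the degenerate case $u=0$) and the observation that the one-dimensional variance $\mathrm{Var}(u^\top X)$ is exactly $u^\top\Cov(X)u$ and hence controlled by the PSD lower bound on $\Cov(X)$. Everything after that is the routine truncated second-moment computation displayed above, and the constant $\eps^2/10$ is comfortably loose (the argument actually gives $3\eps^2/16$).
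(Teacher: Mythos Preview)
Your proof is correct and follows essentially the same approach as the paper: normalize to $\|u\|_2=1$, lower bound $\E[(u^\top X)^2]\ge\eps^2$ via the covariance assumption, upper bound $|u^\top X|\le 2$, and conclude by a truncated second-moment (averaging) argument. In fact you spell out explicitly the ``straightforward averaging argument'' that the paper leaves to the reader, arriving at the slightly sharper constant $3\eps^2/16$.
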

\begin{proof}
    We may assume that $\|u\|_2=1$. Let us denote the mean and covariance of $X$ by $\bmu$ and $\bCov$. Note that 
    \[ \BE[|u^\top X|^2] ~=~ u^\top \BE[XX^\top] u ~=~ u^\top \bCov u + |u^\top \bmu|^2 ~\ge~ \eps^2  \enspace ,\]
     since $\bCov = \BE[XX^\top]-\bmu\bmu^\top$. Since $\|X\|_2 \le 2$ by assumption, $|u^\top X|\le 2$ always, and a straightforward averaging argument yields the statement of the proposition. 
\end{proof}

\section{Smoothed Analysis of Prefix Discrepancy}

In \Cref{subsec:smoothed_proof}  by showing an upper bound of $O(\sqrt{\log d + \log\!\log T})$ for prefix discrepancy in the smoothed setting. Recall  that in this setting the input vectors  $v_t= \overline{v}_t + \hat{v}_t$, where $\overline{v}_t$ are worst-case vectors and $\hat{v}_t$ are small random perturbations.


\smooth*

Next in \Cref{sec:smoothlb}, we show that the bound in the above theorem  cannot be improved even for the stochastic setting (which is a special case of smoothed setting), without further progress on Banaszczyk's result on prefix discrepancy for adversarial vectors (\Cref{thm:Banaszczyk12}). In particular, we show that if unit vectors $v_1, \cdots, v_T \in \S^{d-1}$ are drawn uniformly, then the bound from \Cref{thm:smoothed_upper_bound} is attained with high probability assuming the tightness of Banaszczyk's bound.



\begin{thm}[Conditional Lower Bound for Smoothed/Stochastic Setting]
\label{thm:stochastic_lower_bound}
Let $d \ge 2$ and $n\ge \log d$ be integers. Suppose that there is an adversarial configuration of unit vectors $w_1, \ldots, w_n \in \S^{d-1}$ satisfying $\predisc(w_1, \cdots, w_t) = \Omega(\sqrt{\log d + \log n})$.
Let $T = n \cdot \exp(100nd \log n)$ and let unit vectors $v_1, \ldots, v_T \in \S^{d-1}$ be drawn uniformly, then with probability at least $1  - e^{-\sqrt{T}}$, we have that $\predisc(v_1, \cdots, v_T) = \Omega(\sqrt{\log d + \log\!\log T})$.
 \end{thm}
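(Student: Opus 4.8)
The plan is to plant, inside the random sequence $v_1,\dots,v_T$, a contiguous block of $n$ vectors that is an approximate copy of the hypothesized hard configuration $w_1,\dots,w_n$, and then transfer its prefix-discrepancy lower bound to the whole instance. Two elementary observations drive this. First, any sub-interval sum of a signing is a difference of two prefix sums: for every $x\in\pmone^T$ and every window $W=\{s+1,\dots,s+n\}$ one has $\sum_{t=s+1}^{s+j}x_t v_t=\sum_{t\le s+j}x_t v_t-\sum_{t\le s}x_t v_t$, so $\max_{j\in[n]}\|\sum_{t=s+1}^{s+j}x_t v_t\|_\infty\le 2\max_{\tau\in[T]}\|\sum_{t\le\tau}x_t v_t\|_\infty$; thus a large ``windowed'' discrepancy forces a large global prefix discrepancy at the cost of a factor $2$. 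Second, if the block vectors satisfy $\|v_{s+t}-w_t\|_2\le\delta$ for every $t\in[n]$, then for every restricted signing $y=(x_{s+1},\dots,x_{s+n})\in\pmone^n$ the triangle inequality gives $\max_{j}\|\sum_{t\le j}y_t v_{s+t}\|_\infty\ge\max_{j}\|\sum_{t\le j}y_t w_t\|_\infty-n\delta\ge\Omega(\sqrt{\log d+\log n})-n\delta$, using the hypothesis $\predisc(w_1,\dots,w_n)=\Omega(\sqrt{\log d+\log n})$. Hence, choosing $\delta$ to be a sufficiently small inverse polynomial in $n$ (and at most $1/d$), $n\delta$ becomes negligible compared with $\sqrt{\log d+\log n}$, and on the event that some block is $\delta$-close, every signing of $v_1,\dots,v_T$ has prefix discrepancy $\Omega(\sqrt{\log d+\log n})$, whence $\predisc(v_1,\dots,v_T)=\Omega(\sqrt{\log d+\log n})$.

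It remains to show that with probability at least $1-e^{-\sqrt T}$ some length-$n$ block is $\delta$-close to $(w_1,\dots,w_n)$. Partition $[T]$ into the $N=T/n=\exp(100nd\log n)$ disjoint consecutive blocks of length $n$. Within a fixed block the $n$ vectors are i.i.d.\ uniform on $\S^{d-1}$, so by independence and \Cref{prop:cap} (applicable since $\delta\le 1/d$) the probability that the block is $\delta$-close is at least $p:=(\exp(-10d\log(1/\delta)))^{n}=\exp(-10nd\log(1/\delta))$. Distinct blocks use disjoint vectors, so these events are independent across blocks, and the probability that no block is $\delta$-close is at most $(1-p)^N\le e^{-pN}$. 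If $\delta$ is chosen so that in addition $\log(1/\delta)=O(\log n)$ — which is consistent with $\delta$ being a small inverse polynomial in $n$ — then $p\ge\exp(-c\,nd\log n)$ for some constant $c<50$, so $pN\ge\exp((100-c)\,nd\log n)\ge\sqrt T$ by the choice $T=n\exp(100nd\log n)$, and the failure probability is at most $e^{-\sqrt T}$.

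Combining the two parts, with probability at least $1-e^{-\sqrt T}$ some length-$n$ block of the random sequence is $\delta$-close to the hard configuration, and on that event $\predisc(v_1,\dots,v_T)=\Omega(\sqrt{\log d+\log n})$; this is exactly the stated bound, because $T=n\exp(100nd\log n)$ gives $\log T=\Theta(nd\log n)$, hence $\log\!\log T=\Theta(\log n+\log d+\log\!\log n)=\Theta(\log n+\log d)$, so $\log d+\log\!\log T=\Theta(\log d+\log n)$ and $\Omega(\sqrt{\log d+\log n})=\Omega(\sqrt{\log d+\log\!\log T})$. The main obstacle I anticipate is the parameter juggling in the second step: $\delta$ must be small enough that $n\delta\ll\sqrt{\log d+\log n}$ (so the perturbation does not destroy the adversarial lower bound) yet large enough that $\log(1/\delta)=O(\log n)$ (so that the per-block success probability $\exp(-10nd\log(1/\delta))$ does not fall below $n/\sqrt T\approx\exp(-50nd\log n)$), and one must verify that the slack built into the definition of $T$, together with the hypothesis $n\ge\log d$, leaves room for such a $\delta$; this book-keeping, rather than any conceptual step, is the crux.
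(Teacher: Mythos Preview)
Your proposal is correct and follows essentially the same route as the paper: partition into $T/n$ blocks, use \Cref{prop:cap} to lower-bound the probability that a fixed block is $\delta$-close to the hard configuration, observe independence across blocks, and transfer the block's prefix discrepancy to the full instance via the difference-of-two-prefixes identity. The paper simply fixes $\delta=1/n$ from the outset (so that $n\delta\le 1$ and $\log(1/\delta)=\log n$), which dissolves the parameter juggling you flag at the end; your more general treatment is fine but not needed.
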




\subsection{The Upper Bound}
\label{subsec:smoothed_proof}

To prove \Cref{thm:smoothed_upper_bound}, 
let us start by grouping the input vectors $v_1, \cdots, v_T$ into consecutive blocks of $n$ vectors, with a total of $T/n$ blocks (without loss of generality, assume here that $T/n$ is an integer).
We will set $n=\poly(d,\log T)$ with the exact parameters to be specified later. Let  $\Delta = \Delta(d,n) = \Theta(\sqrt{\log d + \log n})$ be twice the prefix discrepancy upper bound given by Banaszcyzk's Theorem~\ref{thm:Banaszczyk12} for coloring $n$ vectors.
Our strategy will work in rounds and in round $r \in [T/n]$ it will give signs to the vectors in the $r^{\text{th}}$ block. 
We maintain the following invariants at the end of round $r$:
\begin{description}
    \item[\textbf{Invariant 1}:] Discrepancy at the end of $r^{\text{th}}$ block is $\infnorm{\sum_{ t\le rn} x_t v_t} \le \Delta$
    \item[\textbf{Invariant 2}:] Discrepancy of any prefix of the first $r$ blocks is $\max_{\tau \in [rn]} \infnorm{\sum_{ t \le \tau} x_t v_t} \le 6\Delta$.
\end{description}
 The first condition in the invariant allows us to inductively proceed with coloring the next block of vectors while the second condition gives us the desired bound on the maximum discrepancy of any prefix.


The first round of our strategy gives signs to the first block of $n$ vectors using the result of \cite{B12}, which maintains both invariants. Each other round $1<r<T/n$ will have two phases: in the first phase we find a fractional signing of the next block of vectors and then round it to obtain a $\pmone$ signing. 

To describe the two phases in round $r$, let us use $M \in \R^{d \times n}$ to denote the matrix whose $n$ columns are the vectors being processed in round $r$, \emph{i.e.} $v_{(r-1)n+1}, \ldots, v_{rn}$. For any subset $I \subseteq [n]$, let us also write $M_I$ to denote the $d \times n$ matrix where we keep the $j^{\text{th}}$ column of $M$ if it occurs in $I$ and replace it with zeros otherwise. We also define $w := \sum_{t \le (r-1) n} x_t v_t$ to be the discrepancy vector at the end of round $r-1$. 

In the first phase, we find a fractional signing satisfying some properties that help in maintaining the invariants after round $r$.


%


\begin{lemma}[Fractional signing] \label{lem:block_induction}
For an arbitrary $\infnorm{w} \leq \Delta$,  with probability at least $1 - 1/\poly(T)$ there is a fractional signing $x \in [-1,1]^n$ 
satisfying:
\vspace{-\topsep}
\begin{enumerate}
    \item $Mx + w = 0$, and
    \item $\|M_I x\|_\infty \leq 4\Delta$ for every prefix interval $I \subseteq [n]$, \emph{i.e.}, $I \in \{[k] \mid k \le n\}$.
\end{enumerate}
\end{lemma}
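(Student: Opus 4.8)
The plan is to establish feasibility of the linear program $\LP(\CI)$ described in the technical overview via LP duality, then reduce feasibility over all prefix intervals to feasibility over the coarser block-decomposition family $\CIsub$, and finally control the dual over $\CIsub$ by a chaining argument. First I would formally set up $\LP(\CI)$: variables $x \in \R^n$, equality constraints $Mx = -w$, two-sided constraints $-2\Delta \le (M_I x)_i \le 2\Delta$ for each prefix interval $I$ and coordinate $i \in [d]$, and box constraints $-1 \le x_j \le 1$. (Note the statement allows slack $4\Delta$ in item 2, but the LP uses $2\Delta$ on prefix intervals; the extra factor of two is exactly what is lost when one decomposes a prefix into two sets of $\CIsub$.) By LP duality / Farkas' lemma, infeasibility of $\LP(\CI)$ yields dual multipliers $y \in \R^d$ for the equality constraints and $\alpha_I \in \R^d$ for the interval constraints such that, after normalizing so that the contribution of the box constraints dominates, one gets $\|y\|_1 = d$, $\sum_{I} \|\alpha_I\|_1 \le d/2$, and the violation inequality $\|y^\top M - \sum_{I} \alpha_I^\top M_I\|_1 < \Delta \cdot d$. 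So it suffices to show that with probability $1 - 1/\poly(T)$, inequality \eqref{eq:cond_intro} holds simultaneously for \emph{all} such dual vectors; I would prove this for the family $\CIsub$, using that every prefix $[k]$ splits into at most two members of $\CIsub$ so a dual solution on $\CI$ transfers to one on $\CIsub$ with the $\ell_1$ budget at most doubled (absorbed into the constants defining $\Delta$).

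Next I would analyze \eqref{eq:cond_intro} for a \emph{fixed} dual solution. Writing $c_j := y - \sum_{I : j \in I} \alpha_I \in \R^d$, the left side equals $\sum_{j=1}^n |c_j^\top v_j|$, where $v_j = \bv_j + \hv_j$. By \Cref{prop:rvec} applied to $X = \hv_j$ (or to $v_j$ shifted — the covariance lower bound $\Cov(\hv_j) \succcurlyeq \eps^2 I_d$ is what matters, and $\|v_j\|_2 \le 2$ holds since $\bv_j, \hv_j \in \B_2^d$), each term satisfies $|c_j^\top v_j| \ge \tfrac12 \eps \|c_j\|_2$ with probability at least $\eps^2/10$, independently across $j$. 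A Chernoff bound then shows $\sum_j |c_j^\top v_j| \gtrsim \eps^3 \sum_j \|c_j\|_2$ except with probability $\exp(-\Omega(\eps^2 n))$ — but this is not quite enough on its own, because I need a lower bound in terms of $\Delta d$, not $\sum_j \|c_j\|_2$, which could be small. Here I would use the equality-constraint normalization $\|y\|_1 = d$ together with $\sum_I \|\alpha_I\|_1 \le d/2$: since $c_1 = y - \sum_{I : 1 \in I}\alpha_I$ and more generally the $c_j$'s "see" all the mass of $y$ except where $\alpha_I$ cancels it, one shows $\sum_j \|c_j\|_2 \gtrsim d$ (the $\alpha$'s have total budget only $d/2 < d = \|y\|_1$, so they cannot kill the $y$-contribution on enough coordinates). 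Combining, the fixed-dual event fails with probability $\exp(-\Omega(\eps^2 n))$, and with $n = \poly(d, \log T)$ chosen large relative to $1/\eps^2 \le \poly(d,\log T)$ this is $\exp(-\poly(d,\log T))$.

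The main obstacle — and the heart of the argument — is that a naive union bound over all discretized dual solutions $(y, \{\alpha_I\}_{I \in \CIsub})$ fails: even after $\CIsub$ reduces the interval count, there are still up to $\Theta(n)$ "short" intervals and each $\alpha_I$ discretized to multiples of $1/\poly(n,d)$ contributes $\poly(n,d)$ choices, giving $\poly(n,d)^{\Theta(n)} \gg \exp(\Omega(\eps^2 n))$ possibilities. The resolution is the two-level (chaining) union bound flagged in the overview and deferred to \Cref{lem:chaining}: $\CIsub$ has only $o(n)$ "long" intervals, so one can afford a genuine union bound over the $\exp(o(n))$ discretized choices of $(y, \{\alpha_I\}_{I\,\mathrm{long}})$; then, conditioned on each such choice, one argues that \eqref{eq:cond_intro} holds for \emph{all} settings of the "short" $\alpha_I$'s simultaneously, not by union bound but by a monotonicity / sorting argument on the values $\alpha_I$ (the short intervals are nested or localized so their cumulative effect on the $c_j$'s is controlled by a bounded number of "breakpoints"). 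I would invoke \Cref{lem:chaining} for exactly this step. Putting the pieces together: $\LP(\CIsub)$, hence $\LP(\CI)$, is feasible with probability $1 - \exp(-\poly(d,\log T)) = 1 - 1/\poly(T)$ (using $\eps \ge 1/\poly(d,\log T)$ and $n$ a large enough polynomial), and any feasible $x$ is the desired fractional signing, with the $2\Delta \to 4\Delta$ slack accounting for the prefix-to-$\CIsub$ decomposition. Finally I would note $x$ can be found by solving the LP, which is polynomial-size.
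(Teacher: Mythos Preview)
Your overall architecture --- LP duality leading to the sufficient condition \eqref{eq:cond_intro}, the block decomposition $\CIsub$, and a union bound only over the $o(n)$ long intervals --- matches the paper. One minor slip: ``$\sum_j \|c_j\|_2 \gtrsim d$'' is too weak to beat $\Delta d$ on the right; the correct (and easy) bound is that \emph{each} $c_j$ has $\|c_j\|_1 \ge \|y\|_1 - \sum_I \|\alpha_I\|_1 \ge d/2$, hence $\|c_j\|_2 \ge \sqrt{d}/2$, giving a left side of order $\eps^3 n\sqrt{d} \gg \Delta d$ once $n = \poly(d,\log T)$ is large enough.

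The substantive gap is your account of why \Cref{lem:chaining} holds. It is not ``a monotonicity / sorting argument on the values $\alpha_I$'' controlled by ``breakpoints,'' and I do not see how such an argument would actually eliminate the union bound over the short $\alpha_I$'s. The paper's mechanism is a decoupling plus a budget count. Write $c_j = \zl{j} - \zs{j}$ with $\zl{j} := y - \sum_{I\,\mathrm{long}:\, j \in I} \alpha_I$ and $\zs{j} := \sum_{I\,\mathrm{short}:\, j \in I} \alpha_I$. The key is that the random event $\CE := \{\text{at least } \eps^2 n/20 \text{ indices } j \text{ have } |\zl{j}^\top v_j| \ge \eps\sqrt{d}/4\}$ depends only on the \emph{fixed} $(y,\{\alpha_I\}_{I\,\mathrm{long}})$, so its (Chernoff) probability $\ge 1 - e^{-c\eps^2 n}$ is already uniform in the short $\alpha_I$'s. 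One then argues \emph{deterministically}: call $j$ heavy if it lies in some short $I$ with $\|\alpha_I\|_1 \ge \eps/(100b)$; since the total $\ell_1$ budget is $d/2$ and each short interval has length at most $b$, at most $O(b^2 d/\eps) \ll \eps^2 n$ coordinates are heavy, for \emph{any} setting of the short $\alpha_I$'s. For light $j$ one has $\|\zs{j}\|_1 \le b \cdot \eps/(100b)$, so $|\zs{j}^\top v_j| \le \eps\sqrt{d}/50$ always. Hence under $\CE$, at least $\eps^2 n/25$ coordinates are simultaneously light and have large $|\zl{j}^\top v_j|$, each contributing $\Omega(\eps\sqrt{d})$ to $\sum_j |c_j^\top v_j|$. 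Nestedness or monotonicity of the short intervals plays no role --- only that each has length $\le b$ and each coordinate lies in at most $b$ of them.
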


The conditions above are linear constraints, so the lemma above shows that the corresponding linear program is feasible with high probability. The natural approach to showing this using strong duality encounters some obstacles (elaborated on later), so our proof relies on a stronger linear program where one imposes the second condition in the lemma for a different family of intervals. Using this stronger linear program and a careful chaining argument, we can  infer the existence of a feasible solution satisfying the conditions above with high probability.

The second phase rounds the fractional signing to a $\pm 1$ signing using the following lemma, which is similar to the bit-by-bit rounding procedure used by Lov{\'a}sz, Spencer, and Vesztergombi to establish the connection between hereditary and linear discrepancy \cite{LSV86}.


\begin{lemma}[Block Rounding] \label{lem:rounding_in_block}
Let $x \in [-1,1]^n$ be a fractional signing. There exists a signing $x^* \in \{-1,1\}^n$ such that $\infnorm{M_{I} (x^* - x)} \leq \Delta$ for any prefix interval $I \subseteq [n]$.
\end{lemma}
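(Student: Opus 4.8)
The plan is to round $x$ to $x^*$ by a dyadic ``bit-by-bit'' procedure in the spirit of Lov\'asz--Spencer--Vesztergombi, invoking Banaszczyk's \emph{prefix} bound (\Cref{thm:Banaszczyk12}) once at each dyadic scale. A minor wrinkle is that the target alphabet is $\pmone$ rather than $\{0,1\}$ --- naively rounding a half-integral coordinate downwards would land it on $0$ --- so I would first reparametrize. Set $y := \tfrac12(\bone + x) \in [0,1]^n$; then from any rounding $y^* \in \{0,1\}^n$ of $y$ we recover $x^* := 2y^* - \bone \in \pmone^n$ with $M_I(x^* - x) = 2\,M_I(y^* - y)$ for every $I$. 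Hence it suffices to find $y^* \in \{0,1\}^n$ with $\infnorm{M_I(y^* - y)} \le \Delta/2$ for every prefix interval $I$. By a routine limiting argument (using that $\{0,1\}^n$ is finite and the desired inequality is closed) one may assume the coordinates of $y$ are dyadic rationals with a common denominator $2^L$.

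I would then construct vectors $y = y^{(L)}, y^{(L-1)}, \dots, y^{(0)} = y^*$, where $y^{(\ell)}$ has every coordinate an integer multiple of $2^{-\ell}$, peeling off the least significant bit at each step. Call coordinate $j$ \emph{active at level $\ell$} if $y^{(\ell)}_j \in (0,1)$ and $2^\ell y^{(\ell)}_j$ is an odd integer; every other coordinate is already an integer multiple of $2^{-(\ell-1)}$ (in particular this covers $y^{(\ell)}_j \in \{0,1\}$, which then stays frozen), and I would leave it unchanged. For each active $j$ I would choose a sign $\delta^{(\ell)}_j \in \pmone$ and set $y^{(\ell-1)}_j := y^{(\ell)}_j + 2^{-\ell}\delta^{(\ell)}_j$; since the odd integer $2^\ell y^{(\ell)}_j$ lies strictly between $0$ and $2^\ell$, either sign keeps the coordinate in $[0,1]$, and the new value is an integer multiple of $2^{-(\ell-1)}$. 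Iterating down to $\ell = 0$ yields an integral vector in $[0,1]^n$, i.e.\ $y^{(0)} \in \{0,1\}^n$.

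The heart of the matter is the sign choice. Fix a level $\ell$ and let $A_\ell = \{j_1 < j_2 < \dots < j_k\}$ be its active coordinates. The key observation is that for any prefix interval $I = [m]$, the set $I \cap A_\ell$ is \emph{exactly a prefix} of the subsequence $(v_{j_1}, \dots, v_{j_k})$. Therefore, applying \Cref{thm:Banaszczyk12} to the $k \le n$ vectors $v_{j_1}, \dots, v_{j_k}$ produces signs $(\delta^{(\ell)}_{j_1}, \dots, \delta^{(\ell)}_{j_k})$ such that $\infnorm{\sum_{j \in I \cap A_\ell} \delta^{(\ell)}_j v_j} \le B$ for \emph{every} prefix interval $I$, where $B = O(\sqrt{\log d + \log n})$ is Banaszczyk's bound (recall $\Delta$ was chosen to be twice this bound, so $B = \Delta/2$). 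Since the only coordinates changed between levels $\ell$ and $\ell - 1$ are the active ones, summing the telescoping increments gives, for every prefix interval $I$,
\[
\infnorm{M_I(y^* - y)} \;=\; \infnorm{\, \sum_{\ell=1}^{L} 2^{-\ell} \sum_{j \in I \cap A_\ell} \delta^{(\ell)}_j v_j \,} \;\le\; \sum_{\ell=1}^{L} 2^{-\ell}\, B \;<\; B \;=\; \frac{\Delta}{2},
\]
whence $\infnorm{M_I(x^* - x)} = 2\,\infnorm{M_I(y^* - y)} < \Delta$, as required.

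I expect the main obstacle to be conceptual rather than computational: getting the level/active-set bookkeeping right and, above all, spotting that a prefix of $[n]$ intersected with the active set is again a prefix of the active \emph{subsequence}. That observation is exactly what allows one to invoke Banaszczyk's \emph{prefix} bound (rather than only a Koml\'os-type bound on the final sum) at each scale, and it is what makes the geometric series collapse. The reparametrization to avoid rounding onto $0$, the verification that each step stays inside the cube, and the summation itself are all routine. As the paper already notes for this lemma, the resulting argument is non-constructive since it calls \Cref{thm:Banaszczyk12}.
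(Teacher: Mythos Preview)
Your proposal is correct and follows essentially the same approach as the paper's proof: both reparametrize to $y = (x+\bone)/2 \in [0,1]^n$, perform the Lov\'asz--Spencer--Vesztergombi bit-by-bit rounding, invoke Banaszczyk's prefix bound (\Cref{thm:Banaszczyk12}) on the columns with a nonzero least significant bit at each scale, and sum the resulting geometric series to get $\infnorm{M_I(y^*-y)} \le \Delta/2$. Your write-up is in fact slightly more careful than the paper's about staying inside $[0,1]$ and freezing coordinates once they hit $\{0,1\}$, but the argument is otherwise identical.
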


We will prove \Cref{lem:block_induction} and \Cref{lem:rounding_in_block} in \Cref{subsubsec:feasibility} and \Cref{sec:rounding}, but first we show how they imply \Cref{thm:smoothed_upper_bound}.


\begin{proof}[Proof of \Cref{thm:smoothed_upper_bound}]
Assuming that the two invariants are maintained at the end of round $r-1$, we show that they hold after round $r$ as well. For the first invariant, the discrepancy vector at the end of round $r$ is $w + Mx^*$. Therefore, using that $\infnorm{w} \le \Delta$, the first condition of \Cref{lem:block_induction} and \Cref{lem:rounding_in_block}, we have
\begin{align*}
\left\|w + Mx^* \right\|_\infty ~=~  \|w + Mx + M(x^* - x)\|_\infty ~=~ \| M(x^* - x)\|_\infty ~\leq~ \Delta  \enspace .
\end{align*}



For the second invariant, for each prefix of the first $(r-1)$ blocks the discrepancy remains the same, so let us consider a prefix of the form $w+M_I x^*$ where $I$ is some prefix interval of $[n]$. In this case, using the second condition of \Cref{lem:block_induction}, we can bound
\begin{align*}
\| w + M_{I} x^* \|_\infty ~\leq~ \|w\|_\infty + \| M_{I} (x^* - x) \|_\infty + \|M_{I} x\|_\infty ~\leq~ 6 \Delta  \enspace . \altqedhere
\end{align*}
\end{proof}

\subsubsection{Proof of \Cref{lem:rounding_in_block}} \label{sec:rounding}

The proof of Lemma~\ref{lem:rounding_in_block} is a slight variant of the proof used in~\cite{LSV86} to upper bound linear discrepancy by hereditary discrepancy. 
We give it here for completeness. 

\begin{proof}[Proof of Lemma~\ref{lem:rounding_in_block}]
We first apply the coordinate-wise map $y = (x + 1)/2 \in [0,1]^n$ for the fractional signing $x \in [-1,1]^n$. 
Our goal is to show that there exists $y^* \in \{0,1\}^n$ such that for any prefix interval $I \subseteq [n]$, we have $\|M_I (y^* - y)\| \leq \Delta/2$. 
The lemma immediately follows by taking the inverse map $x^* = 2y^* - 1 \in \{-1,1\}^n$.

Write the binary expansion of each $y_i$ as $y_i = \sum_{j \geq 1} q_{i,j} 2^{-j}$, where $q_{i,j} \in \{0,1\}$ denotes the $j$-th bit of $y_i$ after the decimal point. 
Let $k$ be a positive integer, and assume for now that all the $y_i$'s have at most $k$ bits in their binary expansions (if the binary expansion of some $y_i$ has $k_i < k$ bits, we let $q_{i,k_i + 1} = \cdots = q_{i,k} = 0$).  
We shall eventually send $k \rightarrow \infty$ so the above assumption becomes without loss of generality. 
Consider the $k$th bit $q_{i,k}$ of each $y_i$. Let $M^{(k)}$ be
the sub-matrix of $M$ restricted to those columns $i$ for which $q_{i,k} = 1$. By applying Banaszczyk's result in \Cref{thm:Banaszczyk12} to the columns of the matrix $M^{(k)}$, there is a $\pm 1$ signing $\chi^{(k)}$ of the columns of $M^{(k)}$ such that the prefix discrepancy of $M^{(k)}$ is at most $\Delta/2$. Viewing $\chi^{(k)}$
as a vector in $\R^n$ with entries $\{-1,0,1\}$ (where the 0 entries correspond to the columns
not in $M^{(k)}$, we update the vector $y' = y +2^{-k} \chi^{(k)}$.
Now note that the $k$-th bit of each $y_i'$ is $0$, as $\chi^{(k)}_i \in \{-1,1\}$ if $q_{i,k} = 1$ and $0$ if $q_{i,k} = 0$. 
Moreover, for any prefix interval $I \subseteq [n]$, we have $\| M_I (y' - y)\|_\infty = 2^{-k} \cdot \| M_I \chi^{(k)} \| \leq 2^{-k} \cdot \Delta/2$.

We can now iterate this bit-rounding procedure to the bits $k-1, k-2, \cdots, 1$ (by treating $y'$ as the new vector $y$ with only $k-1$ bits).
This produces a $\{0,1\}$-vector $y^* = y + 2^{-k} \chi^{(k)} + 2^{-k+1} \chi^{(k-1)} + \cdots + 2^{-1} \chi^{(1)}$ such that
$\| M_I (y^* - y) \|_\infty \leq (2^{-k} + 2^{-k + 1} + \cdots + 2^{-1}) \cdot \Delta/2 \leq \Delta/2$ holds for any prefix interval $I \subseteq [n]$. 
As this rounding error does not depend on $k$, sending $k \rightarrow \infty$ completes the proof of the lemma. 
\end{proof}

\subsubsection{Proof of \Cref{lem:block_induction} via a Primal-dual Approach} \label{subsubsec:feasibility}

As mentioned before, for our analysis it will be useful to consider a stronger linear program which imposes constraints on a different family of intervals other than just the prefix intervals of $[n]$. With this in mind, for a family $\CI$ of intervals of $[n]$, we consider the following linear programming feasibility problem:
\begin{equation*}\tag*{$\LP(\CI)$}
    \begin{aligned} 
\text{max} \quad &\quad 0  &    \\
\text{s.t.} \quad  &(Mx)_i  = -w_i &\quad \forall i\in [d]\\
 &-2\Delta \leq (M_I x)_i  \leq 2\Delta &\quad \forall I \in \CI, i \in [d] \\
  &-1\leq x_j   \leq 1 &\quad \forall j \in [n]
\end{aligned}
\end{equation*}


To understand when $\LP(\CI)$ is feasible, we consider the corresponding dual linear program:
%
\begin{equation*}
    \begin{aligned}
    \text{min} \quad -y^\top w + 2\Delta \sum_{I \in \mathcal{I}}  (\alpha^+_I + \alpha^-_I)^{\top}\bone + (\gamma^{+} + \gamma^{-})^\top \bone  & \\
    \text{s.t.} \qquad\quad y^\top M + \sum_{I \in \mathcal{I}} (\alpha_I^{+}-\alpha_I^{-})^{\top} M_I + (\gamma^{+}-\gamma^{-})^\top &=\bzero \\
\alpha_I^+, \alpha_I^-, \gamma^+, \gamma^- & \geq \bzero  \qquad \forall I \in \mathcal{I} .
    \ 
\end{aligned}
\end{equation*}


By strong duality, the primal is infeasible iff the dual objective is strictly negative. Next, we simplify the dual program and derive an analytical condition which will imply the feasibility of $\LP(\CI)$. For this, we note that for any fixed value of the vector $\alpha_I := \alpha_I^{+}-\alpha_I^{-}$, with $\alpha_I^+$ and $\alpha_I^-$ non-negative, the least value of 
$(\alpha_I^{+}  + \alpha_I^{-})^\top \bone$ is $\|\alpha_I\|_1$. This holds similarly for $\gamma := \gamma^+ - \gamma^-$. So, the dual objective can be simplified as 
\[-y^\top w + 2\Delta \sum_{I \in \mathcal{I}} \|\alpha_I\|_1 + \|\gamma\|_1  \enspace .\]
By the dual constraint $\|\gamma\|_1 = \|y^\top M + \sum_{I \in \mathcal{I}} \alpha_I^\top M_I \|_1$, the primal is \emph{infeasible} if and only if there exist $y,\alpha_I \in \R^d$ for $I \in \mathcal{I}$ such that 
\[-y^\top w + 2\Delta \sum_{I \in \mathcal{I}} \|\alpha_I\|_1  + \|y^\top M + \sum_{I \in \mathcal{I}}  \alpha_I^\top M_I\|_1 ~<~ 0  \enspace .\]

Replacing $\alpha_I$ by $-\alpha_I$ and rearranging the above, we get that $\LP(\CI)$ is feasible if  for every $y, \{\alpha_I\}_{I \in \CI} \in \R^d$ the following holds
  \begin{equation}\label{eq:cond1}
    \|y^\top M - \sum_{I \in \mathcal{I}}  \alpha^\top_I M_{I}\|_1  ~\geq~ y^\top w  - 2\Delta\sum_{I \in \mathcal{I}} \|\alpha_I\|_1  \enspace .
\end{equation}

Using our assumption that $\infnorm{w} \le \Delta$, the above condition is satisfied if the left hand side in \eqref{eq:cond1} is at least $\Delta(\|y\|_1  - 2\sum_{I \in \mathcal{I}} \|\alpha_I\|_1)$. We may further restrict our attention to the case when $\|y\|_1  \geq 2\sum_{I \in \mathcal{I}} \|\alpha_I\|_1$ since otherwise the inequality holds trivially. Furthermore, by rescaling, we may assume that $\|y\|_1 = d$. Thus, denoting by $K \subseteq \R^{d + |\CI|d}$ the set of all vectors $(y, \{\alpha_I\}_{I \in \CI})$ satisfying $\|y\|_1 = d$ and $\sum_{I\in \CI} \|\alpha_I\|_1 \le d/2$, we arrive at the following sufficient condition for feasibility.

\begin{proposition}\label{prop:cond}
   If $\infnorm{w} \le \Delta$ and for every $(y, \{\alpha_I\}_{I \in \CI})  \in K$ we have
    \begin{equation}
    \label{eq:cond} \|y^\top M - \sum_{I \in \mathcal{I}}  \alpha^\top_I M_{I}\|_1  ~\geq~ \Delta \cdot d \enspace ,
\end{equation}
then \eqref{eq:cond1} holds as well, and thus $\LP(\CI)$ is feasible.
\end{proposition}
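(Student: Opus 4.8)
The plan is to read the proposition off the LP-duality computation already carried out above, adding only one elementary rescaling step. There are three ingredients. The first is strong LP duality: the primal feasibility problem $\LP(\CI)$ (formally, $\max 0$ subject to the stated constraints) has a dual --- the minimization displayed just above --- that is always feasible, since the all-zeros dual point satisfies every dual constraint with objective value $0$; hence $\LP(\CI)$ is \emph{infeasible} precisely when the dual objective can be driven strictly below $0$. The second is the $\ell_1$-simplification of the dual: for each fixed $\alpha_I := \alpha_I^+ - \alpha_I^-$ the quantity $(\alpha_I^+ + \alpha_I^-)^\top \bone$ is minimized, over nonnegative parts with that prescribed difference, at $\|\alpha_I\|_1$ (and likewise for $\gamma := \gamma^+ - \gamma^-$), while the dual equality constraint pins down $\gamma = -(y^\top M + \sum_{I \in \CI} \alpha_I^\top M_I)^\top$, so $\|\gamma\|_1 = \|y^\top M + \sum_{I \in \CI} \alpha_I^\top M_I\|_1$. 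Substituting these into the dual objective and then flipping the sign of each $\alpha_I$ yields that $\LP(\CI)$ is feasible if and only if \eqref{eq:cond1} holds for all $y, \{\alpha_I\}_{I \in \CI} \in \R^d$. These two steps are exactly the discussion preceding the proposition, which I would restate compactly in the proof.

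The third and only genuinely new ingredient is the deduction of \eqref{eq:cond1} from the two hypotheses $\infnorm{w} \le \Delta$ and \eqref{eq:cond}. Since $y^\top w \le \|y\|_1 \cdot \infnorm{w} \le \Delta \|y\|_1$, it suffices to show that
\[ \Big\| y^\top M - \sum_{I \in \CI} \alpha_I^\top M_I \Big\|_1 ~\ge~ \Delta\Big(\|y\|_1 - 2\sum_{I \in \CI} \|\alpha_I\|_1\Big) . \]
If $\|y\|_1 \le 2\sum_{I \in \CI} \|\alpha_I\|_1$ --- in particular if $y = \bzero$ --- the right-hand side is nonpositive and there is nothing to prove. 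Otherwise $y \ne \bzero$, and since both sides are positively homogeneous of degree one in $(y, \{\alpha_I\})$, I can rescale by the positive factor $d/\|y\|_1$ to reduce to the case $\|y\|_1 = d$; then $\sum_{I \in \CI} \|\alpha_I\|_1 \le d/2$, i.e.\ $(y, \{\alpha_I\}_{I \in \CI}) \in K$, so \eqref{eq:cond} gives that the left-hand side is at least $\Delta d = \Delta\|y\|_1 \ge \Delta(\|y\|_1 - 2\sum_{I \in \CI} \|\alpha_I\|_1)$. Combining this with the duality equivalence above completes the proof.

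I do not expect any real obstacle within this proposition: its content is entirely bookkeeping --- keeping the direction of strong duality straight (using that the dual is trivially feasible, so infeasibility of the primal forces a strictly negative dual value rather than mere unboundedness in the reverse direction) and correctly handling the $y = \bzero$ and rescaling edge cases in the homogeneity step. The substantive difficulty lies downstream of the proposition: one must still verify that the hypothesis \eqref{eq:cond} in fact holds with probability $1 - 1/\poly(T)$ over the smoothed matrix $M$, and it is there --- with the block-decomposition $\CIsub$ and the two-step chaining union bound --- that the real work happens.
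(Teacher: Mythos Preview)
Your proposal is correct and follows essentially the same route as the paper: the derivation of \eqref{eq:cond1} via strong duality and the $\ell_1$-simplification is exactly the discussion preceding the proposition, and your third step---bounding $y^\top w \le \Delta\|y\|_1$, discarding the trivial case $\|y\|_1 \le 2\sum_I \|\alpha_I\|_1$, and rescaling by positive homogeneity to land in $K$---matches the paper's argument line for line. Your explicit handling of the $y=\bzero$ edge case and the degree-one homogeneity is slightly more careful than the paper's terse ``by rescaling,'' but the content is identical.
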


Note that the left hand side of \eqref{eq:cond} is 
\begin{equation}\label{eqn:prefixsum}
    \ \sum_{j=1}^n  |(y - \sum_{I:j \in I}  \alpha_I)^\top v_j| ~=~ \sum_{j=1}^n |z_j^\top v_j| \enspace ,
\end{equation}
defining $z_j := y - \sum_{I: j\in I} \alpha_I$ where $z_j \in \R^d$ satisfy $\|z_j\|_1 \ge \|y\|_1 - \sum_{i \in \CI} \|\alpha_I\|_1 \ge d/2$ for every $j \in [n]$. If there is sufficient randomness in the vectors $v_j$, so that each term $|z_j^\top v_j|=1/\poly(d,\log T)$ with constant probability, by taking $n = \poly(d, \log T)$, we can expect that \eqref{eq:cond} holds (recall that $\Delta = O(\sqrt{\log(dn)})$ with probability $1-e^{-cn}$  for any given $(y, \{\alpha_I\}_{I \in \CI}) \in K$. 

Given this, it is natural to attempt to take $\CI$ to be $\CI_{\mathsf{prefix}}$, the set of all prefixes of $[n]$, as required in our target \Cref{lem:block_induction}. However, naively this requires us to perform a union bound over a set of vectors that is much larger than the $e^{-cn}$ probability of the bad event. In particular, we need to do  a union bound over all possible choices  $(y, \{\alpha_I\}_{I \in \CI}) \in K$. Since it suffices to discretize these vectors to have values which are integer multiples of $1/\poly(n,d)$, given that $|\CI_{\mathsf{prefix}}|=n$, the total number of choices is $\poly(n,d)^{nd} \gg e^{cn}$. It is unclear how to make this strategy work, so we work with a different family of intervals $\CIsub$ where the constraints not only allow us to infer the target constraints of \Cref{lem:block_induction} but a careful chaining argument allows a more refined union bound.

\paragraph{Improved Bounds using a Block Decomposition $\mathcal{I}_{\textsf{block}}$.} \label{subsubsec:sub-block_decomposition}
 
Let us partition $[n]$ into consecutive blocks of length $b=n^{0.1}$ (for convenience, we assume that $b$ divides $n$). Consider the following family $\CIsub$ of intervals of $[n]$ obtained by taking 
\begin{itemize}
    \item all prefix intervals of $[n]$ which end after an entire block, \emph{i.e.}, intervals of the form  $[ib]$ where $i \in [n/b]$, and
    \item all prefix intervals within each block, \emph{i.e.}, intervals of the form $\{(i-1)b+1,\ldots,(i-1)b+r\}$ where $i \in [b]$ and $r\in[n/b]$.
\end{itemize}

Note that any prefix interval $I \in \CI_{\textsf{prefix}}$ can be expressed as the union of two intervals $I_1, I_2  \in \CIsub$. Thus, if $\LP(\CIsub)$ is feasible we get that $Mx+w=0$ and also $\infnorm{M_Ix} \le \infnorm{M_{I_1}x} + \infnorm{M_{I_2}x} \le 4\Delta$, thus implying the constraints of \Cref{lem:block_induction}. 

To complete the proof, we show that $\LP(\CIsub)$ is feasible w.h.p. by bounding the left hand side of \eqref{eq:cond}. For this let us introduce some terminology: we say an interval $I \in \CIsub$ is $\emph{long}$ if $|I|\ge b$ and \emph{short} otherwise. Note that the size of the family $|\CIsub|=\Theta(n)$ but the number of long intervals is $2n/b = o(n)$ and furthermore, any coordinate $j \in [n]$ is only contained in $n/b$ short intervals. 

To bound \eqref{eq:cond}, we use a two-step union bound, best viewed as a chaining argument. In particular, we show the following lemma where we only consider $(y, \{\alpha_I\}) \in K$ as in \Cref{prop:cond}.

\begin{claim}\label{lem:chaining}
    Fix any $(y, \{\alpha_I\}_{I\,\mathrm{long}})$. Then, for $n=\poly(d, \log T)$ sufficiently large, the probability that \eqref{eq:cond} holds for \textbf{all} choices of $\{\alpha_I\}_{I\,\mathrm{short}}$ is at least $1-e^{-c \epsilon^2 n}$ for a universal constant $c>0$.
\end{claim}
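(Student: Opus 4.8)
The plan is to fix $(y, \{\alpha_I\}_{I\,\mathrm{long}})$ and then prove a statement that holds simultaneously for all $\{\alpha_I\}_{I\,\mathrm{short}}$ with $(y,\{\alpha_I\}) \in K$. Recall from \eqref{eqn:prefixsum} that the left-hand side of \eqref{eq:cond} equals $\sum_{j=1}^n |z_j^\top v_j|$ where $z_j = y - \sum_{I : j \in I}\alpha_I$. Split $z_j = u_j + s_j$ where $u_j := y - \sum_{I\,\mathrm{long}, j\in I}\alpha_I$ depends only on the fixed long part, and $s_j := -\sum_{I\,\mathrm{short}, j\in I}\alpha_I$ is the ``short'' contribution. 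The key structural fact I would exploit is locality: within a fixed block $B_i$ of length $b = n^{0.1}$, the vector $s_j$ for $j\in B_i$ is determined solely by the $b$ short intervals inside $B_i$, and $\sum_{I\,\mathrm{short}\subseteq B_i}\|\alpha_I\|_1$ is some ``budget'' $\beta_i$ with $\sum_i \beta_i \le d/2$. So I can treat the blocks essentially independently. First I would discretize: it suffices to prove \eqref{eq:cond} for $\{\alpha_I\}_{I\,\mathrm{short}}$ whose entries are integer multiples of $1/\poly(n,d)$, since rounding changes each $z_j$ by $O(1/\poly(n,d))\cdot d$ in $\ell_1$ and hence changes $\sum_j |z_j^\top v_j|$ by $o(1)$ (using $\|v_j\|_2 \le 1$ and $n$ terms).

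**Reducing to a per-coordinate / per-block Chernoff bound.**

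Within block $B_i$, I would not union-bound over all $\{\alpha_I\}_{I\,\mathrm{short}\subseteq B_i}$ directly (there are too many), but instead argue as follows. Write the contribution of block $B_i$ to the left-hand side as $\Phi_i := \sum_{j\in B_i}|z_j^\top v_j|$, with $z_j = u_j + s_j$. Condition on the event $E_i$ that for every short prefix interval $I = \{(i-1)b+1,\dots,(i-1)b+r\}$, the partial sum $\sum_{j\in I}$ of appropriately signed terms $|z_j^\top v_j|$ (or more precisely a lower-tail control on $\sum_{j\in I}|v_j^\top e|$ for all directions $e$ in a net) behaves well. Concretely: by \Cref{prop:rvec}, for any \emph{fixed} direction $e\in\R^d$, $\sum_{j\in B_i}|v_j^\top e|$ is a sum of $b$ independent terms each $\ge \tfrac12\eps\|e\|_2$ with probability $\ge \eps^2/10$, so by Chernoff it is $\ge \tfrac{\eps^3}{100}\cdot b\|e\|_2$ except with probability $e^{-c\eps^2 b}$. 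Take a net over directions $e$ of fine granularity ($\poly(n,d)$ net, size $\poly(n,d)^d = e^{O(d\log n)}$), and a net over the $O(b)$-many ``prefix cut points'' inside the block; since $b = n^{0.1} \gg d\log n$ for $n = \poly(d,\log T)$, a union bound over the block's net still leaves failure probability $e^{-c\eps^2 b/2}$ per block. On the good event, one gets a deterministic lower bound: for every choice of $\{\alpha_I\}_{I\,\mathrm{short}\subseteq B_i}$ (even adversarial), $\Phi_i \ge \tfrac{\eps^3}{200}\cdot b\cdot \max_{j\in B_i}\|z_j\|_2 \ge \tfrac{\eps^3}{200}\cdot b\cdot \tfrac{1}{\sqrt d}\max_j\|z_j\|_1$, and in particular $\sum_j\Phi_j \gtrsim \eps^3 \sum_j \|z_j\|_2$ — wait, more carefully, I would sum the per-block bounds and use $\sum_j \|z_j\|_1 \ge n\cdot d/2$ (since each $\|z_j\|_1 \ge d/2$) to conclude $\sum_j|z_j^\top v_j| \ge c'\eps^3 n\sqrt d \ge \Delta d$ once $n$ exceeds $\poly(d,\log T) \ge (\Delta\sqrt d / (c'\eps^3))$, using $\Delta = O(\sqrt{\log(dn)})$ and $\eps \ge 1/\poly(d,\log T)$.

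**The union bound over long parts and the main obstacle.**

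Summing the per-block failure probabilities over the $n/b$ blocks gives total failure probability $\le (n/b)\cdot e^{-c\eps^2 b/2} \le e^{-c\eps^2 n/3}$ (adjusting constants), which is the bound claimed in the statement — and crucially this is now a statement that holds for \emph{all} short $\alpha$'s at once, as required. Then \Cref{prop:cond} is invoked after a final, outer union bound over the discretized choices of $(y,\{\alpha_I\}_{I\,\mathrm{long}})$: there are only $2n/b = o(n)$ long intervals, so the number of such discretized tuples is $\poly(n,d)^{O(nd/b)} = e^{O((nd/b)\log n)} = e^{o(n)}$ (here I use $b = n^{0.1}$ and $d = \poly(\log T) \ll n^{0.9}$), which is dominated by $e^{c\eps^2 n/3}$ since $\eps \ge 1/\poly(d,\log T) = 1/\poly(\log n)$ forces $\eps^2 n \gg (nd/b)\log n$ for suitable polynomial choices. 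The main obstacle I anticipate is making the ``deterministic lower bound holds for all short $\alpha$ simultaneously'' step rigorous: one must verify that the good event (a net statement about the vectors $v_j$ in block $B_i$) genuinely implies $\Phi_i$ is large for every admissible $s_j$, including the worst case where $s_j$ partially cancels $u_j$; this requires that $\|z_j\|_2$ cannot be small for \emph{too many} $j\in B_i$, which follows because the short intervals inside a block are \emph{nested prefixes}, so $z_j$ changes by $\alpha_{I}$ as $j$ crosses each cut, and the total $\ell_1$ variation is the block budget $\beta_i$ — hence if the net controls partial sums at every cut point, telescoping pins down $\sum_{j\in B_i}|z_j^\top v_j|$ from below regardless of the $\alpha$'s. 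Getting the quantitative bookkeeping of this telescoping/nesting argument right, and confirming the parameter inequalities $d\log n \ll b \ll \eps^2 n$ close with $n = \poly(d,\log T)$, is where the real work lies.
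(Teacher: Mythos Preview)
Your approach has a genuine quantitative gap that breaks the argument. The claimed inequality
\[
(n/b)\cdot e^{-c\epsilon^2 b/2} \le e^{-c\epsilon^2 n/3}
\]
is simply false: with $b=n^{0.1}$, the left side is $n^{0.9}\exp(-c\epsilon^2 n^{0.1}/2)$, whose exponent is of order $\epsilon^2 n^{0.1}$, not $\epsilon^2 n$. A per-block Chernoff bound can never yield concentration better than $e^{-\Omega(\epsilon^2 b)}$, because each block only has $b$ independent samples; taking a union bound over blocks does not accumulate in the exponent. Since the outer union bound over discretized $(y,\{\alpha_I\}_{I\,\mathrm{long}})$ has $e^{\Omega(n^{0.9}\log n)}$ terms, you need the failure probability in the Claim to be $e^{-\omega(n^{0.9}\log n)}$; your per-block route gives at best $e^{-\Omega(\epsilon^2 n^{0.1})}$, which is far too weak regardless of how you tune the polynomial $n=\poly(d,\log T)$.

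The paper avoids this by doing a \emph{single} Chernoff over all $n$ coordinates, applied only to the long part. Writing $z_j = z_{\mathrm{long},j} - z_{\mathrm{short},j}$ with $z_{\mathrm{long},j}=y-\sum_{I\,\mathrm{long},\,j\in I}\alpha_I$ (constant within a block and depending only on the fixed data), the event $\mathcal{E}=\{|z_{\mathrm{long},j}^\top v_j|\ge \tfrac14\epsilon\sqrt d \text{ for at least }\epsilon^2 n/20 \text{ indices } j\}$ has probability $\ge 1-e^{-c\epsilon^2 n}$ by Chernoff over $n$ independent indicators. This event depends only on the long part and the randomness, so it handles all short $\alpha$'s at once. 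The short part is then controlled \emph{deterministically} by a heavy/light counting argument: call $j$ heavy if some short $I\ni j$ has $\|\alpha_I\|_1\ge \epsilon/(100b)$; the budget $\sum_I\|\alpha_I\|_1\le d/2$ forces at most $O(b^2 d/\epsilon)\le \epsilon^2 n/100$ heavy indices, and on light indices $\|z_{\mathrm{short},j}\|_1\le b\cdot \epsilon/(100b)=\epsilon/100$ so $|z_{\mathrm{short},j}^\top v_j|$ is negligible. No net over directions and no per-block telescoping is needed. Your net-over-directions idea is also problematic on its own---a bound on $\sum_{j\in B_i}|e^\top v_j|$ for fixed $e$ does not translate to $\sum_{j\in B_i}|z_j^\top v_j|$ when $z_j$ varies with $j$, and the ``telescoping'' fix you sketch is not made precise---but even if it were, the per-block concentration ceiling would still kill the argument.
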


Note that there are $\Theta(n)$ short intervals and  performing a naive union bound over all possible choices of $\{\alpha_I\}_{I \,\mathrm{short}}$ (discretized as before) is not enough to prove the above claim and our proof below will use a more refined argument. However, given the above claim, to show that \eqref{eq:cond} holds for all $(y, \{\alpha_I\}_{I \in \CIsub}) \in K$, we only need to do a union bound over all the possible choices of $(y, \{\alpha_I\}_{I \,\mathrm{long}})$ where the number of long intervals is $O(n/b)=O(n^{0.9})$.

More precisely, discretizing each vector to integer multiples of $1/\poly(n,d)$, the number of possible choices of $(y, \{\alpha_I\}_{I \,\mathrm{long}})$ is $(nd)^{O(n/b)}=e^{O(n^{0.91})}$ as $d \le n$. Since $\epsilon = 1/\poly(d, \log T)$, choosing $n=\poly(d,\log T)$ to be a large enough polynomial, we get that $\LP(\CIsub)$ is feasible with probability at least $1- e^{-\Omega(n^{0.05})} = 1-T^{-2}$.

To finish the proof, we now  prove \Cref{lem:chaining}.

\begin{proof}[Proof of \Cref{lem:chaining}]

For $j \in [n]$, let us define $\zl{j}:= y - \sum_{I: j \in I, I \,\mathrm{long} } \alpha_I$ and $\zs{j}:= \sum_{I: j \in I, I \,\mathrm{short} } \alpha_I$. For a fixed $(y, \{\alpha_I\}_{I \in \CI})$, the contribution of coordinate $j \in [n]$ to the LHS of \eqref{eq:cond} is  
\begin{align}\label{eqn:light}
    \   \Big| \Big(y-\sum_{I: j \in I} \alpha_I \Big)^\top v_j \Big| &~=~ \left|(\zl{j} - \zs{j})^\top v_j \right| ~\ge~ \min\left\{0, \left|\zl{j}^{\top}{v_j}\right| - \left|\zs{j}^{\top}{v_j}\right|\right\}  \enspace .
\end{align}

First we show that the contribution of $|\zl{j}^\top v_j|$ is typically large for most $j$. For this, we note that $\|\zl{j} \|_1 \geq \|y\|_1 - d/2 \ge d/2$, and so $\|\zl{j}\|_2 \geq \sqrt{d}/2$. By our assumption on the noise, we have $\|v_j\|_2 \leq 2$  and that $\Cov(v_j)\ge \eps^2 I_d$. Thus, \Cref{prop:rvec} implies that for any given $j \in [n]$,
\begin{align} \label{eq:large_coordinate}
    \p\Big[|\zl{j}^\top v_j| \geq \delta \Big] \geq \frac{\eps^2}{10}  \enspace ,
\end{align}
where $\delta = \frac{1}{4} \cdot \eps \sqrt{d}$.

To bound the contribution of $|\zs{j}^\top v_j|$, we define some terminology. Given a setting of $(y, \{\alpha_I\}_{I\in \CI})$, we say that a coordinate $j \in [n]$ is \emph{heavy} if it lies in at least one \emph{short} interval $I$ such that $\|\alpha_I\|_1 \ge \frac{\eps}{100b}$. We say that $j$ is {\em light} if it is not heavy. Since $\sum_I \|\alpha_I\|_1 \leq d/2$, there are at most $\frac{50bd}{\eps}$ intervals $I$ with $\|\alpha_I\|_1 \ge \frac{\eps}{100b}$. 
As every short interval contains at most $b$ coordinates, 
it follows that at most $\frac{50b^2d}{\eps}$ coordinates are heavy. Since $b=n^{0.1}$ and $\eps=1/\poly(d,\log T)$, choosing $n=\poly(d, 1/\eps) = \poly(d,\log T)$ to be big enough, we can guarantee that at most $50bd/\eps \leq \epsilon^2 n/ 100$ coordinates $j \in [n]$ are heavy.




If $j$ is light, then all short intervals $I$ that contain $j$ satisfy $\|\alpha_I\|_1 \le \eps/(100b)$. 
Note that there are at most $b$ short intervals that contain $j$. Thus, $\|\zs{j}\|_1 \le \eps/100$ and so, using Cauchy-Schwarz, we have 
\begin{align}\label{eqn:smallval}
|\zs{j}^\top v_j| & ~\le~ \|\zs{j}\|_2 \cdot \|v_j\|_2 ~\le~ \sqrt{d}~ \|\zs{j}\|_1 \cdot 2\le \frac{\eps \sqrt{d}}{50} ~\le~ \frac{\delta}{10}  \enspace .
\end{align}

Now comes the simple but important observation: the value $\zl{j}$ only depends on $(y, \{\alpha_I\}_{I \,\mathrm{long}})$, but not on any $\alpha_I$ where $I$ is a short interval. 
In particular, defining the event (which only depends on $(y, \{\alpha_I\}_{I \,\mathrm{long}})$)
\[\CE ~:=~ \CE(y, \{a_I\}_{I \,\mathrm{long}}) ~=~ \big\{\text{at least } \epsilon^2 n /20 \text{ coordinates } j \in [n] \text{ satisfy } |\zl{j}^\top v_j| \geq \delta \big\}  \enspace ,\] 
we know from \eqref{eq:large_coordinate} and the multiplicative form of Chernoff bound that $\p[ \CE ] \geq 1 - e^{-c \epsilon^2 n}$ for a universal constant $c>0$.

When the event $\CE$ is true, we claim that \eqref{eq:cond} always holds. In particular, under the event $\CE$, there are at least $\epsilon^2 n/20$ coordinates $j \in [n]$ with $|\zl{j}^\top v_j| \geq \delta$. Among these, at most $\epsilon^2 n/100$ coordinates $j$ are heavy, so overall at least $\epsilon^2 n/25$ coordinates satisfy $|\zl{j}^\top v_j| \geq \delta$ and are light. Using \eqref{eqn:light} and \eqref{eqn:smallval}, any such coordinate contributes at least $\delta/2$ to the left hand side of \eqref{eqn:light}. It follows immediately that the left hand side of \eqref{eq:cond} is $\Omega(\epsilon^2 n\delta) = \Omega(n\cdot \eps^3 \sqrt{d})$. Since $\epsilon = 1/\poly(d, \log T)$, taking $n=\poly(d,1/\eps, \log T)=\poly(d, \log T)$ suffices to ensure that this is much bigger than $\Delta d$. This completes the proof of \Cref{lem:chaining}. \qedhere

\end{proof}

\subsection{Proof of the lower bound}
\label{sec:smoothlb}

\begin{proof}[Proof of \Cref{thm:stochastic_lower_bound}]
The main idea is to partition the $T$ vectors into $T/n$ consecutive blocks $M_i$ for $i\in [T/n]$ and show that with probability at least $1 -e^{-\sqrt{T}}$, one of the blocks is very close to the adversarial instance $w_1, \cdots, w_n \in \R^d$ whose existence we assumed.

Since $\|w_t\|_2=1$ for each $t$, using \Cref{prop:cap}, we have that for $v_1, \cdots, v_n$ i.i.d. sampled uniformly at random from $\mathbb{S}^{d-1}$, the following holds
\begin{align*}
    \p \big[\|v_t - w_t\|_2 \leq 1/n, \forall t \in [n] \big] ~\geq~ \exp(- 10 n d \log n)  \enspace .
\end{align*}
Note that when $\| v_t - w_t\|_2 \leq 1/n$, the $d \times n$ matrix formed by the vectors $v_1, \cdots, v_n$ also has prefix discrepancy $\Omega(\sqrt{\log d+ \log n})$. Thus, each block has prefix discrepancy $\Omega(\sqrt{\log d+ \log n})$ with probability at least $\exp(-10 nd \log n)$. Since we have $T/n = \exp(100nd\log n)$ blocks, the probability that none of the blocks has prefix discrepancy $\Omega(\sqrt{\log d+ \log n})$ is upper bounded by
\begin{align*}
    \left(1 - \exp(-10 nd \log n) \right)^{T/n}~ =~ \left(1-\left(n/T\right)^{0.1}\right)^{T/n} ~\leq~ e^{-(T/n)^{0.9}} ~\le~ e^{-\sqrt{T}} \enspace .
\end{align*}
Whenever there exists a block $M_i$ with prefix discrepancy at least $\Omega(\sqrt{\log d+\log n})$, then the prefix discrepancy for all the input vectors in blocks $M_1, \cdots, M_i$ is also $\Omega(\sqrt{\log d+ \log n})$ as each prefix in $M_i$ can be written as the difference between two prefixes of $M_1, \cdots, M_i$. This finishes the proof of the theorem. 
\end{proof}




\section{Prefix Discrepancy for Trees and DAGs}




In this section, we prove our results about prefix discrepancy for DAGs. Our first result, whose proof is given in \Cref{sec:dagup}, is \Cref{thm:DAGs_prefix} which gives an upper bound on prefix discrepancy for DAGs.

\dag*

Next in \Cref{sec:dag-lb}, we prove our lower bounds for prefix discrepancy of DAGs. We first show that the $\sqrt{\log T}$  in \Cref{thm:DAGs_prefix} cannot be improved in general, even for rooted-trees.

\treeslb*

One might wonder if our improvement to $\sqrt{\log d + \log\!\log T}$ in the smoothed setting (\Cref{thm:smoothed_upper_bound}) could also be generalized to DAGs. 
We show that such an improvement is impossible, even for trees in the completely stochastic setting. 

\treeslbrandom*

The proof of  \Cref{thm:prefix_lower_bound_trees} is given in \Cref{subsec:prefix_lower_bound_trees} and the proof of the lower bound in the random setting is given in \Cref{subsec:trees_random_lower_bound}.






\subsection{Upper Bounds}
\label{sec:dagup}

\ignore{
It is easy to see that for any rooted tree $G$, we have that
\begin{equation}\label{eq:treesone}
    \herdisc(\prefix(G)) \le \Delta(\prefix(G), 1) = \sup_{v_1, \dots, v_T \in [-1,1]} \pathdisc_G(v_1,\ldots,v_T) = 1,
\end{equation} 
by using a simple greedy procedure. Also, note that $\herdisc(\prefix(G))\ge 1$. 
}

The proof of \Cref{thm:DAGs_prefix} has two parts --- first generalizing Banaszczyk's result in \Cref{thm:Banaszczyk12} to prefixes of rooted trees (which can be naturally viewed as DAGs by orienting the edges from root to leaves), and second, reducing the case of an arbitrary DAG to a rooted tree.

For the first part, we prove the following for input vectors in $\R^d$.


\begin{lemma}[Prefix Discrepancy for Trees] \label{lem:prefix_trees}
Let $\CT$ be a given $T$ node rooted tree and dimension $d\ge 2$. For any $v_1, \ldots, v_T \in \B_2^d$, we have that $\pathdisc_\CT(v_1, \ldots, v_T) = O(\sqrt{\log dT})$. 
More generally, given any convex body $K \in \R^d$ with Gaussian measure $\gauss(K) \ge 1 - 1/(2T)$, there exists a signing $x \in \pmone^T$ such that for every $S \in \prefix(\CT)$, we have that $\sum_{t \in S} x_t v_t \in O(1) \cdot K$. 
\end{lemma}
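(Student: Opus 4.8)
The plan is to mimic Banaszczyk's induction but to carry it out over the tree structure rather than over a path. First I would set up the induction over the nodes of $\CT$ processed in reverse topological order (leaves first, root last). For each node $t$, I would define an auxiliary symmetric convex body $K_t \subseteq \R^d$ inductively: if $t$ is a leaf, set $K_t := K$; if $t$ has children $c_1, \ldots, c_m$ in $\CT$, the natural candidate is to take some intersection of the bodies $K_{c_i} * v_{c_i}$ (using Banaszczyk's symmetrization from \Cref{defn:ban_symmetrization}) — i.e. roughly $K_t := \bigcap_{i} (K_{c_i} * v_{c_i})$, possibly after rescaling the vectors $v_t$ by a constant factor $1/5$ so that \Cref{lem:ban_symmetrization} applies. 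The point of the symmetrization operation is exactly that it tracks, via the cylinder $Z$, the set of accumulated discrepancies that are safe to continue from; intersecting over children handles the branching.

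The key step is the Gaussian-measure bookkeeping. I would show by induction that $\gauss(K_t) \ge 1 - (\text{number of descendants of } t)/(2T) \ge 1/2$ for every node $t$. The base case is the hypothesis $\gauss(K) \ge 1 - 1/(2T)$. For the inductive step, \Cref{lem:ban_symmetrization} gives $\gauss(K_{c_i} * v_{c_i}) \ge \gauss(K_{c_i})$ provided $\|v_{c_i}\|_2 \le 1/5$ and $\gauss(K_{c_i}) \ge 1/2$; then a union bound over the children accounts for the measure lost by intersecting: $\gauss\big(\bigcap_i (K_{c_i} * v_{c_i})\big) \ge 1 - \sum_i (1 - \gauss(K_{c_i}))$. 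Since the descendant counts of distinct children are disjoint, these deficits add up to at most (number of descendants of $t$)$/(2T)$, closing the induction. In particular $\gauss(K_{\mathrm{root}}) \ge 1/2 > 0$, so $K_{\mathrm{root}}$ is nonempty and we can pick a point $p \in K_{\mathrm{root}}$.

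Next I would extract the signing. Starting from the root with the chosen point $p \in K_{\mathrm{root}}$, I walk down the tree: the definition $K_t = \bigcap_i (K_{c_i} * v_{c_i}) \subseteq (K_{c_i} + v_{c_i}) \cup (K_{c_i} - v_{c_i})$ (up to the $1/5$ rescaling) means that from any feasible point at $t$ one can choose a sign $x_{c_i} \in \{\pm 1\}$ and move to a feasible point of $K_{c_i}$ along each child edge. Iterating down every root-to-node path, the telescoping sum shows that the running signed sum $\sum_{t \in S} x_t v_t$ stays inside $O(1) \cdot K$ for every $S \in \prefix(\CT)$ — the $O(1)$ absorbing the $1/5$ rescaling factor and the fact that prefixes of a tree are exactly root-initial paths. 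Finally, for the stated quantitative bound $O(\sqrt{\log dT})$, I would instantiate $K$ as a suitable scaling of the $\ell_\infty$-ball (or a slab body) whose Gaussian measure is $\ge 1 - 1/(2T)$; the standard Gaussian tail estimate shows that $O(\sqrt{\log(dT)}) \cdot \B_\infty^d$ has Gaussian measure $\ge 1 - 1/(2T)$, giving $\pathdisc_\CT = O(\sqrt{\log dT})$.

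The main obstacle I anticipate is getting the intersection-over-children step to interact correctly with Banaszczyk's symmetrization: the symmetrization $K * u$ is defined relative to a \emph{single} body, and one must check that intersecting the symmetrized children's bodies still yields a body from which the downward sign-selection argument is valid — i.e. that membership in $K_t$ really does certify that \emph{both} a valid continuation exists along every child edge simultaneously, and that the cylinder constraints $Z$ do not interfere across branches. A related subtlety is the precise constant: one may need to process the tree so that at each branching the relevant vectors have norm $\le 1/5$ and verify that the constant does not degrade with the depth of the tree (it should not, since the loss is only in the measure, not the scale). Handling these points carefully — rather than the Gaussian tail computation, which is routine — is where the real work lies.
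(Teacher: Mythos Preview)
Your proposal is essentially the paper's proof: define the bodies $K_t$ bottom-up via Banaszczyk's symmetrization over the children, track the Gaussian measure by a union bound over the subtree sizes, and then extract signs top-down. Two small points need tightening. First, at each internal node the paper takes $K_t := K \cap \bigl(\bigcap_i (K_{c_i} * v_{c_i})\bigr)$, i.e.\ it also intersects with $K$; without this extra intersection your bodies $K_t$ need not be contained in $O(1)\cdot K$, so the claim that \emph{intermediate} prefix sums lie in $O(1)\cdot K$ does not follow from your telescoping argument (only the leaf sums would). The extra $K$ in the intersection costs exactly one more $\delta$ in the union bound, which is precisely the ``$+1$'' that turns $\sum_i n_{c_i}$ into $n_t$. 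Second, at the root you cannot choose an arbitrary $p \in K_{\mathrm{root}}$: you need $x_r v_r \in K_{\mathrm{root}}$ for some sign $x_r$, which one gets by applying \Cref{lem:ban_symmetrization} once more (since $K_{\mathrm{root}} * v_r$ has Gaussian measure $\ge 1/2$, it is symmetric and nonempty, hence contains $0$, so $\pm v_r \in K_{\mathrm{root}}$). With these two fixes your outline matches the paper exactly.
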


\begin{remark}\label{rem:done-case}
   For the $d=1$ case, it is easy to see that for any rooted tree $\CT$, we have
\begin{equation}\label{eq:treesone}
    \herdisc(\prefix(G)) \le \Delta(\prefix(G), 1) = \sup_{v_1, \dots, v_T \in [-1,1]} \pathdisc_G(v_1,\ldots,v_T) = 1,
\end{equation} 
by using a simple greedy procedure. Also, note that $\herdisc(\prefix(\CT))\ge 1$. 
\end{remark} 

For the second part, let $G = ([T],E)$ be an arbitrary DAG and recall that $\prefix(G)$ denotes the set of all paths in $G$ starting from the topologically ordered root. We show that one can remove edges from $G$ to obtain a tree $\calT \subseteq G$ such that every path in $\prefix(G)$ contains at most $O(\herdisc(\prefix(G)))$ non-tree edges. 

\DAGreduction*

The bound in \Cref{lem:reduction_DAGs_to_trees} is the best possible one can hope for in terms of hereditary discrepancy. In particular, for any tree $\CT$ on the same vertex set $[T]$, there exists a directed path $\mathcal{P} \in \prefix(G)$ such that $\CP$ has at least $\herdisc(\prefix(G))/2$ non-tree edges. 
To see this, we assume, for the purpose of contradiction, that there is a tree $\CT$ with vertex set $[T]$ such that any path $\mathcal{P} \in \prefix(G)$ contains strictly less than $\herdisc(\prefix(G))/2$ non-tree edges. Then any path $\mathcal{P}\in \prefix(G)$ can be written as the disjoint union of less than $\herdisc(\prefix(G))/2$ paths whose edges lie entirely in $\CT$.
Now, restricted to any subset of $[T]$ as in Definition~\ref{def:herdisc}, since $\herdisc(\prefix(\CT)) = 1$, there is a signing such that every path (not necessarily starting from the root) in $\CT$ has discrepancy at most $2$. The discrepancy along any path $\mathcal{P} \in \prefix(G)$ is thus strictly smaller than $\herdisc(\prefix(G))$, contradicting the definition of hereditary discrepancy.


Therefore, \Cref{lem:reduction_DAGs_to_trees} also gives a combinatorial characterization (up to a constant factor) of the hereditary discrepancy for $\prefix(G)$ as follows.

\begin{corollary}[Combinatorial Characterization of $\herdisc(\prefix(G))$] \label{cor:comb_char_DAG} For any DAG $G=([T],E)$, we have
\begin{align*}
    \herdisc(\prefix(G)) = \Theta(1) \cdot \min_{\text{tree }\CT } \max_{\mathcal{P} \in \prefix(G)} |\mathcal{P} \cap (G - \CT)|,
\end{align*} 
where the minimum ranges over all trees $\CT$ on the vertex set $[T]$. 
\end{corollary}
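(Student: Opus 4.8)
The plan is to obtain the corollary directly from \Cref{lem:reduction_DAGs_to_trees} together with a short matching lower bound. Write $k^{\ast}(G) := \min_{\CT}\max_{\CP\in\prefix(G)}|\CP\cap(G-\CT)|$ for the quantity on the right-hand side, where the minimum is over all rooted trees $\CT$ on the vertex set $[T]$. For the easy direction, \Cref{lem:reduction_DAGs_to_trees} produces a (rooted) spanning tree $\CT\subseteq G$ for which every $\CP\in\prefix(G)$ uses at most $4\,\herdisc(\prefix(G))$ non-tree edges; plugging this particular $\CT$ into the minimum defining $k^{\ast}(G)$ gives $k^{\ast}(G)\le 4\,\herdisc(\prefix(G))$.

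For the reverse inequality I would fix an arbitrary rooted tree $\CT$ on $[T]$, set $k:=\max_{\CP\in\prefix(G)}|\CP\cap(G-\CT)|$, and show $\herdisc(\prefix(G))\le 2(k+1)$. Let $J\subseteq[T]$ be arbitrary. By \Cref{rem:done-case} we have $\herdisc(\prefix(\CT))=1$, so there is a signing $x\in\pmone^{J}$ such that every root-to-vertex path $R$ of $\CT$ satisfies $\bigl|\sum_{t\in R\cap J}x_t\bigr|\le 1$. Consequently, for any two nested root-paths $R'\subseteq R$ of $\CT$ (equivalently, for the ancestor-to-descendant path $R\setminus R'$ of $\CT$) we get $\bigl|\sum_{t\in(R\setminus R')\cap J}x_t\bigr|\le 2$, since $(R\setminus R')\cap J=(R\cap J)\setminus(R'\cap J)$ with $R'\cap J\subseteq R\cap J$. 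Now take any $\CP\in\prefix(G)$. Deleting its at most $k$ non-tree edges splits $\CP$ into at most $k+1$ sub-paths, each of which has all its edges in $\CT$ and is therefore an ancestor-to-descendant path of $\CT$; moreover the vertex sets of these sub-paths partition $V(\CP)$, so $\CP\cap J$ is the disjoint union of their restrictions to $J$. Summing and using the per-piece bound of $2$ yields $\bigl|\sum_{t\in\CP\cap J}x_t\bigr|\le 2(k+1)$, hence $\disc(\prefix(G)|_J)\le 2(k+1)$. As $J$ was arbitrary, $\herdisc(\prefix(G))\le 2(k+1)$, and optimizing over $\CT$ gives $\herdisc(\prefix(G))\le 2\bigl(k^{\ast}(G)+1\bigr)$.

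Finally I would combine the two bounds. If $G$ is itself a (rooted) tree there is nothing to prove, as then $\herdisc(\prefix(G))=1$ and $k^{\ast}(G)=0$; otherwise no rooted tree can contain all edges of $G$, so $k^{\ast}(G)\ge 1$, and then $k^{\ast}(G)+1\le 2k^{\ast}(G)$, giving $\herdisc(\prefix(G))\le 4\,k^{\ast}(G)$. Together with $k^{\ast}(G)\le 4\,\herdisc(\prefix(G))$ this shows $k^{\ast}(G)=\Theta\bigl(\herdisc(\prefix(G))\bigr)$, which is exactly the claimed identity.

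As for difficulty: the genuinely hard ingredient — building a spanning tree in which \emph{every} directed path carries only $O(\herdisc(\prefix(G)))$ non-tree edges — is \Cref{lem:reduction_DAGs_to_trees}, which I am taking as given. Relative to that, the corollary needs no new idea; the only point requiring care is the lower-bound direction, namely reducing an arbitrary tree path to a difference of two nested root-prefixes and checking that restricting to a subset $J$ preserves the disjointness of the pieces, so that the per-piece discrepancy bound simply adds up.
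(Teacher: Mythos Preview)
Your proposal is correct and follows essentially the same route as the paper: one direction is immediate from \Cref{lem:reduction_DAGs_to_trees}, and for the other you (like the paper) decompose each $\CP\in\prefix(G)$ into at most $k+1$ tree sub-paths, invoke $\herdisc(\prefix(\CT))=1$ to bound each sub-path's discrepancy by $2$ as a difference of two nested root-prefixes, and sum. The only differences are cosmetic --- you phrase the lower bound directly rather than by contradiction, and you are a bit more explicit about the $+1$ in ``$k$ non-tree edges yield $k+1$ pieces'' and the degenerate case $k^\ast(G)=0$ --- but the argument is the paper's.
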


\Cref{thm:DAGs_prefix} then follows immediately by combining \Cref{lem:reduction_DAGs_to_trees} with the prefix discrepancy bound for trees given in \Cref{lem:prefix_trees} and \cref{eq:treesone}.

\begin{proof}[Proof of \Cref{thm:DAGs_prefix}]
By \Cref{lem:reduction_DAGs_to_trees}, 
there exists a (rooted) tree $\CT \subseteq G$ with the same set of vertices such that for any directed path $\mathcal{P}$ in $G$, the number of non-tree edges is $O(\herdisc(\prefix(G)))$. We then use \Cref{lem:prefix_trees} to find a coloring $x \in \{-1,1\}^{T}$ for the vectors $\{v_t\}_{t \in [T]}$ such that each path in $\CT$ has discrepancy at most $O(\sqrt{\log d + \log T})$. Now, any directed path $\mathcal{P}$ of the DAG $G$ can be decomposed into at most  $O(\herdisc(\prefix(G)))$ disjoint sub-paths in $\CT$, and each of these sub-paths has discrepancy $O(\sqrt{\log d+ \log T})$ under coloring $x$ (since the discrepancy of any sub-path can be written as a difference of discrepancy of two prefixes). It thus follows that the discrepancy of $\mathcal{P}$ under coloring $x$ is at most $O(\herdisc(\prefix(G)) \cdot \sqrt{\log d + \log T})$. 

The second statement also follows immediately by using Eq.~\eqref{eq:treesone} instead of \Cref{lem:prefix_trees} above. This completes the proof of \Cref{thm:DAGs_prefix}. 
\end{proof}

\subsubsection{Rooted Trees}
\label{subsec:trees_prefix}

In this section, we prove \Cref{lem:prefix_trees}.

\begin{proof}[Proof of \Cref{lem:prefix_trees}]
We note that the more general statement regarding convex bodies directly implies $\pathdisc_\CT(v_1, \ldots, v_T) = O(\sqrt{\log dT})$ for any given vectors $v_1, \cdots, v_T \in \B_2^d$. This follows by taking the convex body $K$ to be $O(\sqrt{\log dT}) \cdot \B_\infty^d$ for which the  Gaussian measure $\gauss(K) \ge 1 - 1/(2T)$. 

Next, we shall prove the general statement for any arbitrary convex body $K$ with $\gauss(K) = 1-\delta$ where $\delta \le 1/(2T)$. For this, let us define a convex body $K_t$ for each node $t \in [T]$ from bottom up as follows. 
For any leaf $t$, we put $K_t := K$. 
For any intermediate node $t$, let $j_1, \cdots, j_\ell$ be its children and define 
\begin{align*}
    K_t := K \cap \left(\cap_{k \in [\ell]} (K_{j_k} * v_{j_k}) \right) ,
\end{align*}
where $K_{j_k} * v_{j_k} \subseteq (K_{j_k} + v_{j_k}) \cup (K_{j_k} - v_{j_k})$ is the symmetrization given in \Cref{defn:ban_symmetrization}. 
Note that for any vector $w \in K_t$ (think of $w$ as the discrepancy vector from the root to node $t$), there exist signs $x_{j_1}, \cdots, x_{j_\ell} \in \pmone^\ell$ such that $w + x_{j_s}v_{j_s} \in K_{j_s}$, for any $s \in [\ell]$.
In other words, once we can find signs for the path from the root $r$ to the node $t$ such that the discrepancy vector lies in $K_t$, we can also inductively find signs for the children $j_1, \cdots, j_\ell$ such that the discrepancy vector from the root $r$ to to these nodes lie in $K_{j_1}, \cdots, K_{j_\ell}$ respectively. 
Recall from \Cref{lem:ban_symmetrization} that $\gauss(K_{j_k} * v_{j_k}) \geq \gauss(K_{j_k})$ whenever $\gauss(K_{j_k}) \geq 1/2$ and $\| v_{j_k}\|_2 \leq 1/5$.

For each intermediate node $t \in \CT$, we let $n_t$ be the number of nodes in the subtree rooted at $t$. 
We prove inductively that 
\begin{align} \label{eq:induction_hypothesis}
    \gauss(K_t) \geq 1 - n_t \delta  \enspace . 
\end{align}

For every leaf node $t$, we have $n_t = 1$ so \eqref{eq:induction_hypothesis} follows from the assumption that $\gauss(K) \geq 1-\delta$. 
Let $t$ be an intermediate node with children $j_1, \cdots, j_\ell$ and suppose \eqref{eq:induction_hypothesis} is established for each child of $t$. 
We thus have that $\gauss(K_{j_k}) \geq 1 - n_{j_k} \delta$ for each $k \in [\ell]$. 
Thus by union bound, we have
\begin{align*}
\gauss(K_t) ~\geq~ 1 - (\delta + \sum_{k \in [\ell]} n_{j_k} \delta) ~=~ 1 - n_t \delta  \enspace . 
\end{align*}
This proves the induction hypothesis \eqref{eq:induction_hypothesis}. 
In particular, the root node $r$ has $\gauss(K_r) \geq 1/2$, thus one can scale the length of vectors down by a constant factor so that $v_r \in K_r$. 
The rest of the signs can be found from the root to the leaves. 
\end{proof}



\newcommand{\tv}{m_v}
\newcommand{\ta}{m_a}
\newcommand{\tb}{m_b}
\newcommand{\tr}{m_r}
\newcommand{\tvarg}[1]{m_{#1}}

\subsubsection{General DAGs}

\label{subsec:DAGs_prefix}


In this subsection, we give the reduction from DAGs to trees and prove \Cref{lem:reduction_DAGs_to_trees}. We will need the following definition.

\begin{definition}[Number of Non-tree Edges]
Given a DAG $G$, let $\CT \subseteq G$ be a tree (or a forest). For any node $v \in G$, we define $\tv(\CT)$ to be the maximum number of edges on any path starting from node $v$ that are are not in the tree $\CT$. 
\end{definition}

In particular, if $r$ is the root of the DAG, then $\tr(\CT)$ is the maximum number of non-tree edges (with respect to $\CT$) on any path in $\prefix(G)$. Our goal would be to find a tree $\CT \subseteq G$ so that $\tr(\CT)$ is upper bounded by a combinatorial measure that we show gives a lower bound on hereditary discrepancy for prefixes of DAGs. For this we define: 
\begin{definition}[Chain structure]
\label{defn:chain_structure}
We say a DAG $C$ is a chain structure if it contains the following: 
there exist nodes $a_1, a_2, \cdots, a_\ell$ such that for any $i \in [\ell-1]$, there are two vertex disjoint paths $\mathcal{P}_i^+$ and $\mathcal{P}_i^-$ from $a_i$ to $a_{i+1}$ such that $|\mathcal{P}_i^+|$ contains at least one node other than $a_i$ and $a_{i+1}$. The length of the chain is defined as $\ell_C := \ell-1$. 
\end{definition}

See \Cref{fig:chain} for an example. The length of a chain structure gives a lower bound for $\herdisc(\prefix(C))$:
\begin{claim}
For any chain structure $C$, we have $\herdisc(\prefix(C)) \geq \ell_C/4$.
\end{claim}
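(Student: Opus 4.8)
The plan is to exhibit a single restriction $J$ of the ground set of $\prefix(C)$ on which the restricted set system already contains the \emph{complete} set system $2^J$ with $|J| = \ell_C$. Since the complete set system on $m$ points has discrepancy at least $\lceil m/2 \rceil$ (for any colouring one of its two colour classes is a set of size $\ge m/2$), this would give $\herdisc(\prefix(C)) \ge \disc(\prefix(C)|_J) \ge \lceil \ell_C/2 \rceil \ge \ell_C/4$, which is in fact slightly stronger than needed.

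First I would fix a topological order of $C$, writing $\pi(v)$ for the position of a vertex $v$, and record the positional facts that drive everything. Because there is a directed path from $a_i$ to $a_{i+1}$ for each $i$, we have $\pi(a_1) < \pi(a_2) < \cdots < \pi(a_\ell)$; and because any internal vertex $u$ of $\mathcal{P}_i^+$ or $\mathcal{P}_i^-$ is reachable from $a_i$ and reaches $a_{i+1}$, we have $\pi(a_i) < \pi(u) < \pi(a_{i+1})$. Hence the internal vertices of the $i$-th gadget $\{\mathcal{P}_i^+,\mathcal{P}_i^-\}$ lie in the open interval $(\pi(a_i),\pi(a_{i+1}))$, and these intervals are pairwise disjoint and avoid every $\pi(a_k)$. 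I would also note that $a_1$ is the unique source of $C$ (each internal vertex of each gadget path has an in-edge from its predecessor on that path), so $a_1$ is the ``topologically ordered root'' appearing in the definition of $\prefix(C)$.

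Next, for each $i \in [\ell-1]$ I would pick an internal vertex $b_i$ of $\mathcal{P}_i^+$, which exists by the definition of a chain structure, and set $J := \{b_1,\dots,b_{\ell-1}\}$. The positional facts show the $b_i$ are pairwise distinct, are distinct from every $a_k$, and --- crucially --- that $b_i$ lies on $\mathcal{P}_i^+$ but on no other gadget path (not on $\mathcal{P}_i^-$ since $\mathcal{P}_i^+,\mathcal{P}_i^-$ are internally vertex-disjoint, and not on $\mathcal{P}_j^{\pm}$ for $j \ne i$ by the interval-disjointness). Now for each $S \subseteq [\ell-1]$ let $\mathcal{Q}_S$ be the directed path from $a_1$ to $a_\ell$ obtained by concatenating $\mathcal{P}_i^+$ for $i \in S$ with $\mathcal{P}_i^-$ for $i \notin S$; the positional facts guarantee that no vertex repeats, so $\mathcal{Q}_S$ is a genuine directed path starting at the root $a_1$, and hence its vertex set belongs to $\prefix(C)$. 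Since $b_i \in \mathcal{Q}_S$ iff $\mathcal{Q}_S$ uses $\mathcal{P}_i^+$ iff $i \in S$, we get $\mathcal{Q}_S \cap J = \{b_i : i\in S\}$, so $\prefix(C)|_J \supseteq 2^J$ and the chain of inequalities from the first paragraph goes through.

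I expect the only real obstacle to be the topological bookkeeping in the third step: showing that each chosen vertex $b_i$ occurs on exactly one of the $2(\ell-1)$ gadget paths and on none of the ``backbone'' vertices $a_k$, which is exactly what makes the map $S \mapsto \mathcal{Q}_S \cap J$ a bijection onto $2^J$ and lets us read off the discrepancy lower bound. Everything else --- the existence of $\mathcal{Q}_S$ as a simple directed path, its membership in $\prefix(C)$, and the discrepancy of the complete set system --- is routine.
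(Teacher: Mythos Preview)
Your proof is correct and takes a genuinely different route from the paper. The paper includes \emph{both} the backbone vertices $a_1,\dots,a_\ell$ and the intermediate vertices $b_1,\dots,b_{\ell-1}$ in the restriction $J$, and then argues by a case split on the coloring: either the path using all $\mathcal{P}_i^-$ already has large discrepancy (coming from the $a_i$'s), or one of the colour classes among the $b_i$'s has size at least $\ell_C/2$, and routing through exactly those $b_i$'s beats the contribution of the $a_i$'s by $\ell_C/4$. Your approach instead drops the $a_i$'s from $J$ entirely, observes that the restricted family then contains all of $2^J$, and invokes the trivial lower bound $\lceil |J|/2\rceil$ for the complete set system. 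This is cleaner, avoids the case analysis, and yields the sharper constant $\ell_C/2$ rather than $\ell_C/4$. The paper's approach, on the other hand, makes the role of the $a_i$'s (and the ``doubling'' at each link) slightly more visible, which is perhaps closer in spirit to how the chain structure is used later in the tree-building argument. Both are valid; yours is the shorter path to the stated claim.
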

\begin{proof}
We find a subset of vertices $J$ for which the discrepancy of the set system $\prefix(C)$ projected on $J$ is lower bounded by the above: we add the vertices $a_1, \cdots, a_\ell$ to $J$, and add exactly one intermediate node $b_i$ from each $\mathcal{P}_i^+$ to $J$. 

Now, consider any coloring $x_i \in \{-1,1\}^\ell$ for the $a_i$'s and $y_i \in \{-1,1\}^{\ell-1}$ for the $b_i$'s.
We may assume that $|\sum_{i \in [\ell]} x_i | < \ell_C/4$, as otherwise the path from $a_1$ to $a_\ell$ along all the paths $\mathcal{P}_i^-$ already has discrepancy at least $\ell_C/4$. 
Now there exists a color $\eps \in \pmone$ such that $\sum_{i \in [\ell-1]: y_i = \eps} y_i \geq \ell_C/2$. Then the path from $a_1$ to $a_\ell$ that passes through all the $b_i$'s with color $\eps$ and otherwise takes $\mathcal{P}_i^-$ has discrepancy at least $\ell_C/2 - \ell_C/4 = \ell_C/4$ when projected to $J$. This proves the claim. 
\end{proof}

We now prove \Cref{lem:reduction_DAGs_to_trees} with the above setup.

\begin{proof}[Proof of \Cref{lem:reduction_DAGs_to_trees}]
Let $\ell_r$ be the length of the longest chain that is a subgraph of the DAG $G$ with root $r$. Then, it suffices to show that there is a tree $\CT \subseteq G$ such that $m_r \le \ell_r \le 4\herdisc(\prefix(G))$. We prove this inductively and for the induction, we define $\ell_v$ to be the length of the longest chain in the sub-DAG of $G$ rooted at $v$. We will show the following claim which immediately implies the statement of \Cref{lem:reduction_DAGs_to_trees} by taking $v$ to be the root $r$.

\begin{claim}
\label{lem:tree_removed_edges}
Let $G$ be a DAG. Then, there exists a tree $\CT \subseteq G$ such that $\tv(\CT) \leq \ell_v$ for any node $v \in G$. 
\end{claim}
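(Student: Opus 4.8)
I will build $\CT$ by processing the vertices of $G$ in \emph{reverse} topological order, so that when a vertex $v$ is processed all of its out-neighbours $N^+(v)$ have already been processed. I maintain a forest, and when $v$ is processed I add to the current forest the edge $v\to u$ for \emph{every} out-neighbour $u\in N^+(v)$ that is at that moment the root of its tree. Each vertex $u$ thereby acquires a parent exactly once (namely at the moment the first of its in-neighbours gets processed), so the edge set stays a forest; moreover one checks by a short reachability argument that once $v$ has been processed, the restriction of the forest to the sub-DAG rooted at $v$ is already a tree rooted at $v$. Applying this with $v$ the global root $r$ (or, if $G$ has several sources, with the component of $r$; the Claim only asks for a tree, so a forest suffices) produces $\CT\subseteq G$.

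The engine of the induction is the following observation: once $v$ has been processed, the quantity $m_v(\CT)$ is already determined, and it satisfies
\[ m_v(\CT) \;=\; \max_{u\in N^+(v)}\Big(\mathbf{1}\big[u\text{ is not a child of }v\text{ in }\CT\big] \;+\; m_u(\CT)\Big), \]
with the convention that the maximum is $0$ when $v$ is a sink, because any path leaving $v$ starts either with a tree edge $v\to u$ (cost $0$, then at most $m_u$ further non-tree edges) or with a non-tree edge $v\to u$ (cost $1$, then at most $m_u$), and every tree edge relevant to $m_v$ lies inside the sub-DAG rooted at $v$ and is fixed by the time $v$ is processed.

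I then prove $m_v(\CT)\le \ell_v$ for all $v$ by induction in the processing order. The base case (a sink) is $0\le 0$. For the inductive step, note first that $\ell_u\le\ell_v$ for every $u\in N^+(v)$, since any chain structure inside the sub-DAG rooted at $u$ also sits inside the sub-DAG rooted at $v$. Suppose $m_v(\CT)>\ell_v$. By the recursion there is $u\in N^+(v)$ with $u$ \emph{not} a child of $v$ and $m_u(\CT)=\ell_v$; combined with $m_u\le\ell_u\le\ell_v$ this forces $m_u(\CT)=\ell_u=\ell_v$. Since $v$ did not adopt $u$, at the moment $v$ was processed $u$ already had a parent $p(u)\ne v$ — in fact $p(u)$ is the first in-neighbour of $u$ to be processed, which gives a ``second entrance'' into $u$ disjoint from the edge $v\to u$. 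Using (i) the edge $v\to u$, (ii) this second entrance into $u$, and (iii) a chain structure of length $\ell_v$ inside the sub-DAG rooted at $u$ (guaranteed by $m_u=\ell_v$), I prepend one new chain step ending at $u$: the edge $v\to u$ is internally empty, hence automatically vertex-disjoint from the second route into $u$, so it serves as one of the two parallel paths required by Definition~\ref{defn:chain_structure}. This yields a chain structure of length $\ell_v+1$ inside the sub-DAG rooted at $v$, contradicting the definition of $\ell_v$. Taking $v=r$ and invoking the Claim above (applied to the longest chain in $G$, so that $\herdisc(\prefix(G))\ge \ell_r/4$) gives $m_r(\CT)\le\ell_r\le 4\,\herdisc(\prefix(G))$, which is exactly \Cref{lem:reduction_DAGs_to_trees}.

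\textbf{Where the difficulty lies.} The delicate point is the chain-extraction step. Two things must be argued carefully: first, that the ``second entrance'' into $u$ witnessed by $p(u)$ actually lies \emph{inside} the sub-DAG rooted at $v$ (equivalently, that there is an alternative directed path from $v$ into $u$ avoiding the edge $v\to u$), which requires understanding precisely how the reverse-topological order interacts with the adopt-all rule and may force either a judicious choice of the topological order or strengthening the induction hypothesis to carry a concrete witness \emph{path} (not just the number $m_v$), so that each non-tree edge along the witness can be individually converted into a chain step; and second, that prepending this new step to the chain already present inside the sub-DAG of $u$ genuinely produces a valid chain structure of length $\ell_v+1$ without disturbing the earlier steps. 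Making these two points precise is, I expect, exactly the ``careful induction'' alluded to in the overview, and is the main work of the proof.
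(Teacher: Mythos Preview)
Your approach has a genuine gap at exactly the point you flag as delicate, and it is not merely that the chain-extraction argument is hard to make precise: with your ``adopt-all-current-roots'' rule the tree you build simply does not satisfy $m_v(\CT)\le\ell_v$ for all $v$.

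Take the four-vertex DAG with edges $r\to v$, $r\to p$, $v\to u$, $p\to u$. Processing in reverse topological order $u,p,v,r$, your rule makes $p$ the parent of $u$; when $v$ is processed, $u$ is no longer a root, so $(v,u)$ is left out of $\CT$. Then $m_v(\CT)=1$ (the path $v\to u$ uses one non-tree edge), but the sub-DAG rooted at $v$ is the single edge $v\to u$, which contains no chain structure, so $\ell_v=0$. Swapping the order of $v$ and $p$ just moves the failure to $p$; no choice of topological order helps, and no strengthening of the hypothesis can help either, because the tree itself is wrong. The ``second entrance'' $p(u)=p$ you want to invoke is not in the sub-DAG rooted at $v$ at all, and there is nothing inside that sub-DAG with which to manufacture a chain step.

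The paper avoids this by \emph{not} requiring each vertex to have a unique tree parent: its $\CT$ is only an undirected spanning tree of $G$, so in the example above both $(v,u)$ and $(p,u)$ may be kept (the undirected graph $p\,\text{--}\,u\,\text{--}\,v$ is still acyclic), giving $m_v=m_p=0$. In general, when processing a vertex the paper tries to add edges to \emph{all} children of maximal $m$-value; if adding one creates an undirected cycle, it either (Case~1) stops and uses that cycle to witness a new chain step, so $\ell$ at the current vertex genuinely increases, or (Case~2) locates a tree edge $(b_1,b_2)$ on the cycle with $m_{b_1}>m_{b_2}$ and removes it at zero cost (\Cref{claim:remove_free}) before adding the new edge. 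This flexibility---multiple incoming tree edges plus the edge-swap step---is the missing ingredient.
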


\ignore{
\begin{proof}[Proof of \Cref{claim:remove_free}]
Consider any node $v \in G$ and let $\mathcal{P}_v$ be a path starting from $v$ with $e \notin \mathcal{P}_v$ that contains the maximum number of edges in $G - \CT$, and $\mathcal{Q}_v$ be a path starting from $v$ with $e \in \mathcal{Q}_v$ that contains the maximum number of edges in $G - \CT$. Note that since $\ta(\CT) > \tb(\CT)$, we must have $t(\mathcal{P}_v, \CT) > t(\mathcal{Q}_v, \CT)$ because the path $\mathcal{Q}_v$ is forced to take the edge $e = (a,b)$ in $\CT$ which  decreases the $m$ value (otherwise, it can take a different path starting from node $a$ that contains $\ta(\CT)$ removed edges). 
For a path $\mathcal{P}$ in $G$, let's denote $m(\mathcal{P}, \CT) =  = |\mathcal{P} \cap (G-\CT)|$ to be the number of non-tree edges of $\mathcal{P}$ with respect to $\CT$. It follows that $\tv(\CT) = \max\{t(\mathcal{P}_v, \CT), t(\mathcal{Q}_v, \CT)\}$ and that $\tv(\CT-e) = \max\{t(\mathcal{P}_v, \CT), t(\mathcal{Q}_v, \CT)+ 1\}$. This implies that $\tv(\CT-e) = \tv(\CT)$ and finishes the proof of the claim. 
\end{proof}}

\begin{proof}[Proof of \Cref{lem:tree_removed_edges}] We start by giving an algorithm for building the tree $\CT$, and as the algorithm proceeds, we prove inductively that $\tv(\CT) \leq \ell_v$ for any node $v$ already considered by the algorithm.  

\medskip
\noindent \textbf{Tree building algorithm.}  
Start with $\CT = \emptyset$. 
We visit the nodes in the DAG in the reverse topological order. 
Suppose we are looking at a node $u$ and let $v_1, \cdots, v_k$ be its children. 
Let $G_u$ be the sub-DAG rooted at $u$ and $\CT_u$ be the edges already added to the tree by the algorithm (not including the edges $(u, v_1), \cdots, (u, v_k)$). The vertex set of $G_u$ and $\CT_u$ is exactly the set of vertices already visited by the algorithm (including node $u$ which we add to the vertex set of $\CT_u$).
For any node $v \in G_u$, we shall overload the notation $\tv(\CT_u)$ to denote the maximum number of edges in $G_u - \CT_u$ contained in any path starting from node $v$.
We assume as induction hypothesis that $\tv(\CT_u) \leq \ell_v$ for any node $v \in G_u - \{u\}$. 
Now we are going to add some of the edges in $(u,v_1), \cdots, (u, v_k)$ to the tree $\CT_u$ and possibly remove some edges from $\CT_u$ to obtain a new tree $\CT'_u$ such that $\tv(\CT'_u) \leq \ell_v$ for any node $v \in G_u$ (including the vertex $u$). 
If this could be done, then the algorithm can proceed inductively in the reverse topological order. 

Reorder the vertices $v_1, \cdots, v_k$ such that $\tvarg{v_1}(\CT_u) = \tvarg{v_2}(\CT_u) = \cdots = \tvarg{v_i}(\CT_u) > \tvarg{v_{i+1}}(\CT_u) \geq \cdots \geq \tvarg{v_k}(\CT_u)$. 
We add to tree $\CT_u$ one by one the edges in the order of $(u,v_1), \cdots, (u,v_i)$ (we are never going to add the edges $(u, v_{i'})$ for any $i' > i$ because they have smaller $t$ values and therefore don't matter).
If adding these edges to $\CT_u$ do not form a cycle, then we are done as the induction hypothesis immediately holds for $u$ in this case: $\tvarg{u}(\CT'_u) = \tvarg{v_1}(\CT'_u) \leq \ell_{v_1} \leq \ell_u$. 

Suppose when adding $(u,v_j)$, where $j \leq i$, we first encounter a cycle $W$ (in the undirected sense) formed by edges of $\CT_u$ and $(u,v_1), \cdots, (u,v_j)$.
The cycle $W$ must contain the edge $(u,v_j)$ as well as $(u,v_{j'})$ for some $j' < j$. Let node $a \in W$ be the vertex with the biggest topological order in the sub-graph given by the cycle $W$, i.e. the out-degree of $a$ in the cycle $W$ is $0$. Note that we must have $\ta(\CT_u) \leq \tvarg{v_j}(\CT_u)$ because $a$ is a successor of $v_j$ in $G_u$. 
We consider two cases:


\smallskip
\noindent \textbf{Case (1): $\ta(\CT_u) = \tvarg{v_j}(\CT_u)$.}  In this case, our algorithm can simply stop adding any of the edges in $(u,v_j), \cdots, (u,v_i)$. Denote the tree $\CT'_u$ as the tree $\CT_u$ by adding the edges $(u,v_1), \cdots, (u,v_{j-1})$. Note that $\tv(\CT'_u) = \tv(\CT_u)$ for any node $v \in G_u - \{u\}$, so we only need to prove that $\tvarg{u}(\CT'_u) \leq \ell_u$. 
To this end, we first note that $\tvarg{u}(\CT'_u) \leq \tvarg{v_j}(\CT'_u) + 1 = \tvarg{v_j}(\CT_u) + 1$. 
Now the key observation is that $\ell_u \geq \ell_a + 1$, as the cycle $W$ together with a chain of length $\ell_a$ starting from node $a$ forms a new chain with length $\ell_a + 1$. 
It then follows that 
\begin{align*}
    \ell_u ~\geq~ \ell_a + 1 ~\geq~ \ta(\CT_u) + 1 ~=~ \tvarg{v_j}(\CT_u) + 1 ~\geq~ \tvarg{u}(\CT'_u) \enspace . 
\end{align*}

In other words, in this case we could afford to increase the $m$ value by one at the node $u$ as the length of the chain at $u$ also increases.

\smallskip
\noindent \textbf{Case (2): $\ta(\CT_u) < \tvarg{v_j}(\CT_u)$.}
In this case, there must exist an edge $(b_1, b_2) \in W$ such that $\tvarg{b_1}(\CT_u) > \tvarg{b_2}(\CT_u)$. We claim that we can remove edge $(b_1,b_2)$ from $\CT_u$ without changing $\tv(\CT_u)$ for any node $v \in \CT_u$. 


\begin{proposition}[Removing an Edge for Free] \label{claim:remove_free}
Let $G$ be a DAG and $\CT \subseteq G$. Suppose edge $e = (a, b) \in \CT$ and $\ta(\CT) > \tb(\CT)$, then $\tv(\CT-e) = \tv(\CT)$ for every node $v \in G$. 
\end{proposition}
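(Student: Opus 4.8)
The plan is to prove the two inequalities $\tv(\CT-e) \ge \tv(\CT)$ and $\tv(\CT-e) \le \tv(\CT)$ separately, for every node $v$. The first is immediate: deleting $e$ from $\CT$ only converts the tree edge $e$ into a non-tree edge, so $G-(\CT-e) \supseteq G-\CT$, and hence for every directed path $\mathcal{P}$ starting at $v$ the number of its non-tree edges with respect to $\CT-e$ is at least the number with respect to $\CT$; applying this to a path that realizes $\tv(\CT)$ gives $\tv(\CT-e)\ge \tv(\CT)$.

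For the reverse inequality I would fix $v$ and take a directed path $\mathcal{P}$ from $v$ maximizing the number of non-tree edges with respect to $\CT-e$. If $\mathcal{P}$ avoids $e$, then the two non-tree-edge counts of $\mathcal{P}$ coincide, so $\tv(\CT-e) = |\mathcal{P}\cap(G-\CT)| \le \tv(\CT)$ and we are done. Otherwise, since $\mathcal{P}$ is a simple path in a DAG it traverses $e=(a,b)$ exactly once, so I can write $\mathcal{P} = \mathcal{P}_1 \cdot e \cdot \mathcal{P}_2$, where $\mathcal{P}_1$ is a directed path from $v$ to $a$ and $\mathcal{P}_2$ is a directed path starting at $b$ (either one possibly trivial), and neither sub-path uses $e$. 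Writing $c_1 = |\mathcal{P}_1\cap(G-\CT)|$ and $c_2 = |\mathcal{P}_2\cap(G-\CT)|$, this gives $\tv(\CT-e) = c_1 + 1 + c_2$, and it remains to exhibit a single directed path from $v$ with at least $c_1+1+c_2$ non-tree edges with respect to $\CT$.

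The key step is a concatenation argument. Since $\mathcal{P}_2$ starts at $b$ we have $c_2 \le \tb(\CT)$, and the hypothesis $\ta(\CT) > \tb(\CT)$ then yields $\ta(\CT)\ge c_2+1$. Let $\mathcal{Q}$ be a directed path from $a$ realizing $\ta(\CT)$ non-tree edges with respect to $\CT$. The point where acyclicity enters: in a DAG every vertex of $\mathcal{P}_1$ is $a$ or a strict predecessor of $a$, while every vertex of $\mathcal{Q}$ is $a$ or a strict successor of $a$, and these two vertex sets intersect only in $a$ (otherwise $G$ would contain a cycle); hence $\mathcal{P}_1 \cdot \mathcal{Q}$ is again a simple directed path, now from $v$, with exactly $c_1 + \ta(\CT)$ non-tree edges with respect to $\CT$ (using that $\mathcal{P}_1$ does not use $e$). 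Therefore $\tv(\CT) \ge c_1 + \ta(\CT) \ge c_1 + c_2 + 1 = \tv(\CT-e)$, as desired. I expect the only step needing care to be this vertex-disjointness claim for $\mathcal{P}_1$ and $\mathcal{Q}$ — it is exactly where the DAG hypothesis is essential and is false for general digraphs — while everything else is bookkeeping; it is also worth noting in passing that $\CT-e$ is a forest rather than a tree, which is harmless since $\tvarg{\cdot}$ and the whole argument only depend on $\CT-e$ being an edge subset of $G$.
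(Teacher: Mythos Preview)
Your proof is correct and follows essentially the same approach as the paper: both arguments split according to whether a maximizing path uses $e$, and in the case it does, both replace the suffix after $a$ by a path from $a$ realizing $\ta(\CT)$, using $\ta(\CT) > \tb(\CT) \ge c_2$ to gain the extra non-tree edge. Your write-up is in fact slightly more explicit than the paper's about why the concatenation $\mathcal{P}_1 \cdot \mathcal{Q}$ is a simple directed path (the vertex-disjointness via acyclicity), a point the paper leaves implicit.
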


We defer the proof of the above proposition and continue with the proof of correctness of the algorithm. Note that after removing $(b_1,b_2)$, we can safely add edge $(u,v_j)$ to $\CT_u - (b_1,b_2)$ without introducing any cycle. 
Phrased differently, the occasion of $\ta(\CT_u) < \tvarg{v_j}(\CT_u)$ is an indication that we can further simplify $\CT_u$ by removing one edge without increasing the $m$ value of any nodes in it, and doing so avoids having a cycle after the edge $(u,v_j)$ is added. Note that adding any edge $(u,v_j)$ does not change $\tvarg{v_{j'}}(\CT_u)$ for any $j' \in [k]$, since node $v_{j'}$ has a bigger topological order than $u$ and thus the DAG property implies that any path starting from node $v_{j'}$ would never pass through the node $u$.  
If we are always in the case (2), we eventually reach a tree $\CT'_u$ such that $\tvarg{u}(\CT'_u) = \tvarg{v_i}(\CT'_u) = \tvarg{v_i}(\CT_u)$.
As we have $\ell_u \geq \ell_{v_i} \geq \tvarg{v_i}(\CT_u)$, we conclude that $\ell_u \geq \tvarg{u}(\CT'_u)$. 

Taking $\CT'_u$ to be the the new tree $\CT$, we can inductively continue the tree building algorithm since we have now satisfied the invariant for the sub-DAG rooted at $u$. This finishes the proof of \Cref{lem:tree_removed_edges}. \qedhere


\end{proof}

 We now prove \Cref{claim:remove_free}.
\begin{proof}[Proof of \Cref{claim:remove_free}]
Consider any node $v \in G$ and let $\mathcal{P}_v$ be a path starting from $v$ with $e \notin \mathcal{P}_v$ that contains the maximum number of edges in $G - \CT$, and $\mathcal{Q}_v$ be a path starting from $v$ with $e \in \mathcal{Q}_v$ that contains the maximum number of edges in $G - \CT$ (set $\mathcal{Q}_v = \emptyset$ if such a path does not exist).  For a path $\mathcal{P}$ in $G$, let us denote $m(\mathcal{P}, \CT)  = |\mathcal{P} \cap (G-\CT)|$ to be the number of non-tree edges of $\mathcal{P}$ with respect to $\CT$. It follows that $\tv(\CT) = \max\{m(\mathcal{P}_v, \CT), m(\mathcal{Q}_v, \CT)\}$ and that $\tv(\CT-e) = \max\{m(\mathcal{P}_v, \CT), m(\mathcal{Q}_v, \CT)+ 1\}$. 
In the remainder of the proof, we show that $m(\mathcal{P}_v, \CT) > m(\mathcal{Q}_v, \CT)$, which would immediately imply $\tv(\CT-e) = \tv(\CT)$. 
The intuition here is that since $\ta(\CT) > \tb(\CT)$, the path $\mathcal{Q}_v$ is forced to take the edge $e = (a,b)$ in $\CT$ which  decreases the $m$ value (otherwise, it can take a different path starting from node $a$ that contains $\ta(\CT)$ removed edges). 

To be more precise, we first prove that $m(\mathcal{P}_a, \CT) > m(\mathcal{Q}_a, \CT)$. Note that $m(\mathcal{Q}_a, \CT) = m(\mathcal{P}_b, \CT) = \tb(\CT) < \ta(\CT)$. Since $\ta(\CT) =  \max\{m(\mathcal{P}_a, \CT), m(\mathcal{Q}_a, \CT)\}$, we have $m(\mathcal{P}_a, \CT) = \ta(\CT) > m(\mathcal{Q}_a, \CT)$. 
Now for any vertex $v \in G$, let $\mathcal{Q}_{v \rightarrow a}$ be the sub-path of $\mathcal{Q}_v$ from $v$ to $a$.
Consider the path $\mathcal{Q}' := \mathcal{Q}_{v \rightarrow a} \cup \mathcal{P}_a$ starting from vertex $v$, we then have 
\begin{align*}
    m(\mathcal{Q}', \CT) 
    &~=~ m(\mathcal{Q}_{v \rightarrow a}, \CT) + m(\mathcal{P}_a, \CT) ~~>~~ m(\mathcal{Q}_{v \rightarrow a}, \CT) + m(\mathcal{Q}_a, \CT) \\
    & ~\geq~ m(\mathcal{Q}_{v \rightarrow a}, \CT) + m(\mathcal{Q}_v - \mathcal{Q}_{v \rightarrow a}, \CT) ~~=~~ m(\mathcal{Q}_v, \CT) \enspace .
\end{align*}
It then follows that  $m(\mathcal{P}_v, \CT) \geq m(\mathcal{Q}', \CT) > m(\mathcal{Q}_v, \CT)$. 
This completes the proof of the proposition.
\end{proof}
This finishes the proof of \Cref{lem:tree_removed_edges} and also \Cref{lem:reduction_DAGs_to_trees}.
\end{proof}

\subsection{Lower Bounds}
\label{sec:dag-lb}

For the proof of the lower bounds for prefixes in the case of rooted trees, we will need the following definition.

\begin{definition}[Canonical Coloring for Vertices of a Complete Binary Tree]
\label{defn:canonical_coloring}
Let $\mathcal{T}=([T],E)$ be a complete binary tree with root $r$. For each node $t \in \mathcal{T}$, let $\mathcal{P}_t \subseteq [T]$ be the unique path from $r$ to $t$.
We define the canonical coloring for a vertex $t$ as the coloring $y_t \in \pmone^{\mathcal{P}_t \setminus \{t\}}$ which assigns a node $j \in \mathcal{P}_t \setminus \{t\}$ the color $x_t^+(j) = -1$ if $\mathcal{P}_t$ goes to $j$'s left child, and $x_t^+(j) = +1$ otherwise. 
\end{definition}

\ignore{
\begin{definition}[Canonical coloring]
\label{defn:canonical_coloring}
Let $\mathcal{T} = ([T],E)$ be a complete binary tree with root $r$.
For each node $t \in \mathcal{T}$, let $\mathcal{P}_t \subseteq [T]$ be the unique path from $r$ to $t$.
We define the positive canonical coloring $x_t^+ \in \pmone^{\mathcal{P}_t}$ for $\mathcal{P}_t$ as the one which assigns color $x_t^+(t) = +1$ to $t$, and for any $j \in \mathcal{P}_t \setminus \{t\}$ assigns color $x_t^+(j) = -1$ if $\mathcal{P}_t$ goes to $j$'s left child, and $x_t^+(j) = +1$ otherwise. 
The negative canonical coloring $x_t^- \in \pmone^{\mathcal{P}_t}$ is defined almost identically as $x_t^+$ except that $x_t^-(t) = -1$. 
\end{definition}}

We say that any given coloring $x \in \pmone^T$ agrees with the canonical coloring $y_u$ for the vertex $u$ if $x$ assigns the same sign as $y_u$ to each node of $\mathcal{P}_u \setminus \{u\}$. Furthermore, for any given coloring $x \in \pmone^{T}$, there always exists a leaf node $u$ in the tree such that $x$ agrees with the canonical coloring $y_u$ for the vertex $u$. To find such a leaf $u$, one simply follows the coloring $x$ from the root to the leaves, moving to the left child if $x$ colors the node $-1$, and the right child otherwise. 

\subsubsection{Adversarial Lower Bound for Trees}

\label{subsec:prefix_lower_bound_trees}

In this subsection, we prove \Cref{thm:prefix_lower_bound_trees}. 

\begin{proof}[Proof of \Cref{thm:prefix_lower_bound_trees}]
Without loss of generality, we assume that $T = 2^h-1$. Consider a $T$-node complete binary tree $\mathcal{T}$ with depth $h$. We construct the vectors $v_t \in \B_2^2$ at each node $t \in \mathcal{T}$ inductively in the following way. We start by picking an arbitrary unit vector $v_r $ for the root $r$. 
Then for any node $t \in \mathcal{T}$, let $j^-$ and $j^+$ be the left and right children of $t$. 
Let $d_t^+ = \sum_{i \in \mathcal{P}_t \setminus \{t\}} y_t(i)v_i + v_t$ and $d_t^- = \sum_{i \in \mathcal{P}_t \setminus \{t\}} y_t(i)v_i - v_t$ where $y_t$ is the canonical coloring for vertex $t$. The vectors $v_{j^-}$ and $v_{j^+}$ are then picked as arbitrary unit vectors orthogonal to $d_t^-$ and $d_t^+$ respectively. 

Let $x \in \pmone^T$ be a given coloring and $u$ be the leaf node such that $x$ agrees with the canonical coloring $y_u$ for the vertex $u$. Then, on the path $\mathcal{P}_u \setminus \{u\}$, every vector is orthogonal to the previous discrepancy vector under coloring $x$, and thus the discrepancy vector along $\mathcal{P}_u \setminus \{u\}$ has $\ell_2$ norm at least $\sqrt{h-1}$. Since the vectors are $2$-dimensional, the $\ell_\infty$ norm of the discrepancy vector along $\mathcal{P}_u \setminus \{u\}$ is at least $\Omega(\sqrt{h}) = \Omega(\sqrt{\log T})$. 
This completes the proof of \Cref{thm:prefix_lower_bound_trees}.
\end{proof}


\subsubsection{Stochastic Lower Bound for Trees}
\label{subsec:trees_random_lower_bound}

In this subsection, we prove \Cref{thm:trees_random_lower_bound}. 
The idea is to show that the lower bound example in \Cref{thm:prefix_lower_bound_trees} (with approximate orthogonality to the canonical discrepancy vectors) appears with high probability in an $\ell$-ary tree, where $\ell = \poly(\log T)$ to be specified later and the vectors at the vertices are sampled i.i.d. from the unit sphere in $\R^2$.

\begin{proof}[Proof of \Cref{thm:trees_random_lower_bound}]
We consider a $T$-node complete $\ell$-ary tree $\mathcal{T}'$ with depth $h$ and root $r$. Note that $\ell$ and $h$ satisfies that $\sum_{i=0}^{h-1} \ell^i = \ell^{\Theta(h)} = T$. The vectors $v_t \in \R^2$ for each node $t$ of the tree are picked i.i.d. from $\S^1$. For each node $t \in \mathcal{T}$, let $\mathcal{P}_t \subseteq [T]$ be the unique path from $r$ to $t$ and let $d_t^+ = \sum_{i \in \mathcal{P}_t \setminus \{t\}} y_t(i)v_i + v_t$ and $d_t^- = \sum_{i \in \mathcal{P}_t \setminus \{t\}} y_t(i)v_i - v_t$ where $y_t$ is the canonical coloring for vertex $t$.

Our goal is to show that w.h.p. there exists a complete binary subtree $\mathcal{T}' \subseteq \mathcal{T}$ with depth $h$ such that for any non-leaf node $t \in \mathcal{T}'$, its left (resp. right) child $j^-$ (resp. $j^+$) and negative (resp. positive) canonical discrepancy vector $d_t^-$ (resp. $d_t^+$) satisfies $|\langle v_{j^-}, d_t^- \rangle| \leq 1/4$ (resp. $|\langle v_{j^+}, d_t^+ \rangle| \leq 1/4$). If the above event occurs, then the discrepancy of the prefixes of the tree must be $\Omega(\sqrt{h})$. To see this, take any coloring $x \in \{-1,1\}^{|\mathcal{T}'|}$, and consider the leaf $u \in \mathcal{T}'$ such that $x$ agrees with the canonical coloring for the leaf $u$. 
Since $v_{j^-}$ (resp. $v_{j^+}$) is almost orthogonal to $d_t^-$ (resp. $d_t^+$) for any internal node $t$ on the path to $u$, the square of the $\ell_2$-discrepancy for the coloring $x$ along the root-leaf path $\mathcal{P}_u$ increases by at least $\Omega(1)$ after each node (except maybe for the leaf $u$). As the dimension $d=2$, the $\ell_\infty$-discrepancy for the coloring $x$ along the root-leaf path $\mathcal{P}_u$ is also $\Omega(\sqrt{h})$.

It thus remains to show that such a complete binary tree $\mathcal{T}'$ exists w.h.p. We find such a tree $\mathcal{T}'$ inductively and start by adding the root $r$ to $\mathcal{T}'$. For any node $t$ that was just added to $\mathcal{T}'$, we hope to find left and right children $j^-$ and $j^+$ among the first and second $\ell/2$ children of $t$ such that both $|\langle v_{j^-}, d_t^- \rangle| \leq 1/4$ and $|\langle v_{j^+}, d_t^+ \rangle| \leq 1/4$ hold.
For any child $j$ of $t$, since $\| d_t^- \|_2 \leq \sqrt{h}$, we have
\begin{align*}
    \p_{v_j \sim \S^1} \Big[|\langle v_j, d_t^- \rangle| \leq 1/4 \Big] ~\geq~ \frac{1}{10 \sqrt{h}}  \enspace . 
\end{align*}
Thus the probability that none of the first $\ell/2$ children satisfies this property is thus upper bounded by
\begin{align*}
(1 - 1/(10\sqrt{h}))^{\ell/2} ~\leq~ \exp(- \Theta(\ell / \sqrt{h})) \enspace . 
\end{align*}
Thus taking $\ell = C \cdot \sqrt{h} \cdot \log T$ for a sufficiently large constant $C$ suffices to guarantee that the above probability is at most $1/T^2$. 
One could therefore afford to do a union bound over all the nodes in the process of building the complete binary tree $\mathcal{T}'$. 
To satisfy $\ell^{\Theta(h)} = T$, we need to set $h = \log T/\log \log T$. 
Thus the prefix discrepancy on $\mathcal{T}'$ is thus at least $\Omega(\sqrt{h}) = \Omega(\sqrt{\log T / \log \log T})$. 
This completes the proof of  \Cref{thm:trees_random_lower_bound}. 
\end{proof}


\section{Combinatorial Vector Balancing}
\label{sec:comb_vec_bal}




In this section we prove our result for combinatorial vector balancing --- given a set family $\CS$ on the  ground set $[T]$ and input vectors $v_1, \ldots, v_T \in \B_2^d$,  the goal is to find a signing $x \in \pmone^T$ that minimizes
\[ \cdisc_\CS(v_1, \cdots, v_T) := \min_{x \in \{-1,+1\}^n} \max_{S \in \mathcal{S}} \Big\| \sum_{i \in S} x_i v_i \Big\|_\infty.\]

\combvecbal* 


To prove \Cref{thm:general-setsystem}, we bound $\cdisc_\CS(v_1, \cdots, v_T)$ in terms of $\gamma_2(A_\mathcal{S})$, where $A_\mathcal{S}\in \{0,1\}^{|\mathcal{S}| \times T}$ is the incidence matrix of the set system $\mathcal{S}$ (meaning that $(A_{S})_i = 1$ iff $i \in S$). \Cref{thm:general-setsystem} then immediately follows by using this bound and relating $\gamma_2$ norm to hereditary discrepancy via \Cref{lemm:gamma2_herdisc}.


\begin{lemma}
\label{lem:gamma2_upper_bound}
In the setting of \Cref{thm:general-setsystem}, we have that $\cdisc_\CS(v_1, \cdots, v_T) \leq \gamma_2(A_\mathcal{S}) \cdot \sqrt{\log d + \log |\mathcal{S}|} $. Moreover, the upper bound is constructive in the sense that there is a polynomial time algorithm to  find such a coloring. 
\end{lemma}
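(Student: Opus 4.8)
The plan is to reduce combinatorial vector balancing to ordinary matrix discrepancy and then invoke \Cref{lemm:gamma2_herdisc}. First I would encode the instance as a single matrix $\widetilde A \in \R^{d|\CS| \times T}$ with rows indexed by pairs $(S,\ell)$, $S \in \CS$, $\ell \in [d]$, and entries $\widetilde A_{(S,\ell),i} = \mathbf{1}[i \in S]\,(v_i)_\ell$. Then $(\widetilde A x)_{(S,\ell)} = \big(\sum_{i \in S} x_i v_i\big)_\ell$ for every $x \in \pmone^T$, so $\cdisc_\CS(v_1,\dots,v_T) = \disc(\widetilde A)$, and taking $J = [T]$ in the definition of hereditary discrepancy gives $\disc(\widetilde A) \le \herdisc(\widetilde A)$. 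Since $\widetilde A$ has $m = d|\CS|$ rows, \Cref{lemm:gamma2_herdisc} yields $\herdisc(\widetilde A) \le \sqrt{\log(d|\CS|)}\cdot \gamma_2(\widetilde A) = \sqrt{\log d + \log|\CS|}\cdot\gamma_2(\widetilde A)$. Thus the whole statement reduces to the single claim $\gamma_2(\widetilde A) \le \gamma_2(A_\CS)$.

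To prove this claim I would tensor an optimal factorization of $A_\CS$ with the ambient vectors. Take $A_\CS = BC$ with $B \in \R^{|\CS|\times k}$, $C \in \R^{k\times T}$ attaining $\gamma_2(A_\CS)$, rescaled so that $r(B) = c(C) = \sqrt{\gamma_2(A_\CS)}$; let $b_S \in \R^k$ be the $S$-th row of $B$ and $c_i \in \R^k$ the $i$-th column of $C$, so $\mathbf{1}[i \in S] = (A_\CS)_{S,i} = \langle b_S, c_i \rangle$. Now define $B'$ with rows $b_S \otimes e_\ell \in \R^{kd}$ and $C'$ with columns $c_i \otimes v_i \in \R^{kd}$. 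Since $\langle b_S \otimes e_\ell,\, c_i \otimes v_i \rangle = \langle b_S, c_i\rangle\langle e_\ell, v_i\rangle = \mathbf{1}[i \in S]\,(v_i)_\ell$, we get $\widetilde A = B'C'$. Moreover $r(B') = \max_{S,\ell}\|b_S\|_2\|e_\ell\|_2 = r(B)$, and --- this is exactly where $v_i \in \B_2^d$ enters --- $c(C') = \max_i \|c_i\|_2\|v_i\|_2 \le \max_i\|c_i\|_2 = c(C)$. Hence $\gamma_2(\widetilde A) \le r(B')c(C') \le r(B)c(C) = \gamma_2(A_\CS)$, and combining with the previous paragraph gives $\cdisc_\CS(v_1,\dots,v_T) \le \gamma_2(A_\CS)\cdot\sqrt{\log d + \log|\CS|}$.

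For the constructive statement I would observe that an (approximately) optimal factorization $A_\CS = BC$, and hence $B', C'$, is computable in polynomial time since $\gamma_2$ is defined by a semidefinite program, and that \Cref{lemm:gamma2_herdisc} has an algorithmic counterpart: given a factorization of $\widetilde A$ one can produce, in polynomial time, a coloring $x \in \pmone^T$ attaining $\|\widetilde A x\|_\infty = O(\sqrt{\log(d|\CS|)}\,\gamma_2(\widetilde A))$, for instance via the algorithmic partial-coloring method or the Gram--Schmidt walk applied to the columns $c_i \otimes v_i$ tested against the vectors $b_S \otimes e_\ell$. Feeding $\widetilde A$ into this routine yields the claimed coloring.

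I expect the only genuinely non-routine step to be the tensoring identity $\gamma_2(\widetilde A) \le \gamma_2(A_\CS)$: one has to pick the right encoding of the instance, recognize that tensoring a row with a standard basis vector leaves $r(B)$ unchanged while tensoring a column with a vector in $\B_2^d$ can only shrink $c(C)$, and keep index conventions straight so that $B'C'$ really equals $\widetilde A$. Everything else --- rewriting $\cdisc_\CS$ as $\disc(\widetilde A)$, the trivial bound $\disc \le \herdisc$, and the appeal to \Cref{lemm:gamma2_herdisc} --- is bookkeeping; the one point to double-check in the algorithmic part is that the known constructive form of \Cref{lemm:gamma2_herdisc} does not cost more than a constant factor over the existential bound.
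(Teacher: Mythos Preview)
Your proposal is correct and follows essentially the same approach as the paper. The paper encodes the instance as the stacked matrix $A_\CS^D$ with blocks $A_\CS D_j$ where $D_j = \diag(v_1(j),\dots,v_T(j))$ and factorizes it as a block-diagonal matrix of copies of $B$ times the stacked matrix of blocks $CD_j$; this is exactly your tensor factorization $B'C'$ with rows $b_S\otimes e_\ell$ and columns $c_i\otimes v_i$, written in block form rather than tensor notation, and the key observation that $\|v_i\|_2\le 1$ keeps the column norms bounded by $c(C)$ is identical.
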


\begin{proof}[Proof of \Cref{lem:gamma2_upper_bound}]
Let the optimal factorization for $\gamma_2(A_{\mathcal{S}})$ be $A_\CS = BC$ where $\gamma_2(A_{\mathcal{S}}) = r(B) \cdot c(C)$ with $r(B)$ being the maximum $\ell_2$ norm among all rows of $B$ and $c(C)$ being the maximum $\ell_2$ norm among all columns of $C$. 

Denote the $j$th coordinate of vector $v_t \in \R^d$ as $v_t(j)$.
For each $j \in [d]$, we define  a matrix $D_j \in \R^{T \times T}$ as $D_j := \diag(v_1(j), \cdots, v_T(j))$.
Note that 
\begin{align*}
    \cdisc_\CS(v_1, \ldots, v_T) ~= ~ \min_{x \in \pmone^T} \max_{S \in \mathcal{S}, j \in [d]} \Big | \sum_{t \in S} x_t v_t(j) \Big | 
    ~= ~ \disc(A_\mathcal{S}^{D})  \enspace ,
\end{align*}
where the right hand side is the standard notion of discrepancy of a matrix\footnote{For a matrix $A \in \R^{d\times T}$, its discrepancy is defined as $\disc(A) = \min_{x\in \pmone^T} \|Ax\|_{\infty}$.}and the matrix $A_\mathcal{S}^{D} \in \R^{d|\mathcal{S}| \times T}$ is defined as
\begin{align*}
A_\mathcal{S}^{D} = 
\left( 
\begin{matrix}
A_\mathcal{S} D_1 \\
A_\mathcal{S} D_2 \\
\vdots\\
A_\mathcal{S} D_d 
\end{matrix}
\right) .
\end{align*}
Now a factorization of $A_\mathcal{S}^D$ is given by
\begin{align*}
    A_\mathcal{S}^D = 
    \left( 
    \begin{matrix}
    B &  &  & \\
      & B&  & \\
      &  & \ddots & \\
    & & & B 
    \end{matrix}
    \right)  \cdot
    \left( 
    \begin{matrix}
    C D_1 \\
    C D_2 \\
    \vdots \\
    C D_d 
    \end{matrix}
    \right)  \enspace .
\end{align*}
Note that the maximum $\ell_2$-norm of the rows of the first matrix is the same as $r(B)$, and the maximum $\ell_2$-norm of the columns of the second matrix is at most $c(C)$, as $\|v_i\|_2 \leq 1$ for each vector $i \in [T]$. 
It follows that $\gamma_2(A_\mathcal{S}^D) \leq r(B) \cdot c(C) = \gamma_2(A_\mathcal{S})$. 
The lemma then follows immediately from the second inequality in \Cref{lemm:gamma2_herdisc}. 
\end{proof}




\section{Open Problems and Future Directions} \label{sec:open}

We conclude by posing some open problems and  directions for future work. 

\paragraph{Strong  Koml\'os Conjecture.} 
The first conjecture concerns whether a stronger version of Koml\'os conjecture holds for prefix discrepancy.

\begin{open}\label{open:prefix}
    Given $T$ adversarial input vectors $v_1, \ldots, v_T \in \B_2^d$, does there always exist a signing $x \in \pmone^T$ such that $\predisc(v_1, \cdots, v_T) = O(1)$?
\end{open}

The $O(\sqrt{\log d + \log T})$ bound from \Cref{thm:Banaszczyk12} is also the best known result for the above problem. As pointed out before, our \Cref{thm:smoothed_upper_bound} implies that if there is a lower bound instance for the above problem matching Banaszczyk's bound, then such an instance must be sensitive to small perturbations.

Related questions  about prefix discrepancy  have also been posed before  with respect to different norms. 
\begin{open}[\cite{Spencer86Prefix}]\label{open:spencer}
    Given $T$ adversarial input vectors $v_1, \ldots, v_T \in \B_\infty^d$, does there always exist a signing $x \in \pmone^T$ such that $\max_{\tau \in [T]}  \big\|\sum_{t \le \tau} {x_t v_t} \big\|_\infty = O(\sqrt{d})$?
\end{open}

\begin{conjecture}[\cite{B12}]\label{open:ban}
     Given $T$ adversarial input vectors $v_1, \ldots, v_T \in \B_2^d$, does there always exist a signing $x \in \pmone^T$ such that $\max_{\tau \in [T]}  \big\|\sum_{t \le \tau} {x_t v_t} \big\|_2 = O(\sqrt{d})$?
\end{conjecture}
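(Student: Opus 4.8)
Since this is stated as an open conjecture, what follows is a plan of attack rather than a proof, and I should say up front that I do not expect it to close without a genuinely new idea. The natural engine is the convex-body method of Banaszczyk underlying \Cref{thm:Banaszczyk12} and \Cref{lem:prefix_trees}. By (the general form of) \Cref{lem:prefix_trees} applied to $\mathsf{Path}_T$ viewed as a rooted tree, it suffices to exhibit a symmetric convex body $K\subseteq\R^d$ with $\gauss(K)\ge 1-1/(2T)$ and then conclude that there is a signing with $\sum_{t\le\tau}x_tv_t\in O(1)\cdot K$ for every prefix. Taking $K=r\cdot\B_2^d$: since $\|g\|_2^2$ for $g\sim\gauss$ is a $\chi^2_d$ variable concentrated near $d$, the choice $r=\Theta(\sqrt{d+\log T})$ satisfies $\gauss(r\cdot\B_2^d)\ge 1-1/(2T)$, so this route already gives a signing with $\max_{\tau}\|\sum_{t\le\tau}x_tv_t\|_2=O(\sqrt{d+\log T})$ --- in particular the conjectured $O(\sqrt d)$ bound for \emph{all} $T\le\exp(O(d))$. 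The whole difficulty is thus confined to the regime $T\gg\exp(d)$, where the $\sqrt{\log T}$ term dominates and where, unlike for the non-prefix Koml\'os problem, one cannot shrink the number of vectors below $d$ with the standard linear-algebra trick.

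In that regime I see two plausible directions. The first is a \emph{dimension/length reduction for prefixes}: preprocess $v_1,\dots,v_T$ --- by merging consecutive vectors, splitting long runs, or re-encoding --- so that the number of ``essentially distinct'' prefix constraints drops to $\exp(O(d))$ while the prefix discrepancy is distorted by only $O(1)$, after which the convex-body argument above closes. The second is to replace the one-shot Gaussian-measure union bound (which is exactly what forces the radius $r$ to be $\Omega(\sqrt{\log T})$) by an \emph{adaptive, blocked} argument in the spirit of \Cref{sec:overview}: process the vectors in blocks, and within each block use \Cref{thm:Banaszczyk12} both for the within-block prefixes and to cancel the running discrepancy vector $w$ so that $\|w\|_2$ stays $O(\sqrt d)$ forever. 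In the smoothed setting this is precisely what \Cref{lem:block_induction} achieves, and it works because perturbed block vectors have covariance $\succcurlyeq\eps^2 I_d$ and so span all of $\R^d$ richly; adversarial block vectors carry no such guarantee --- a block may lie in a low-dimensional subspace, so it cannot cancel the component of $w$ outside that subspace.

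I expect this last point to be the main obstacle, and it is arguably the reason the conjecture is open: one needs a rebalancing scheme whose success does not depend on the geometric spread of the input. A sensible intermediate target is to first settle \Cref{open:prefix} (the $\ell_\infty$, $O(1)$ version), which implies this conjecture at once since $\|\cdot\|_2\le\sqrt d\,\|\cdot\|_\infty$. Failing that, one might aim for a conditional statement --- e.g. that any instance forcing every signing to have some prefix with $\ell_2$-norm exceeding $C\sqrt d$ must concentrate all but an $O(\sqrt d)$ portion of its mass in a subspace of dimension $\poly(d)$ --- which, together with \Cref{thm:smoothed_upper_bound} showing such instances are fragile under perturbation, would at least pin down what a worst case could look like. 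Absent structural input of this kind, I do not see how to avoid the $\sqrt{\log T}$ loss with present techniques.
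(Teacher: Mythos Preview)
You have correctly recognized that this is an open conjecture, not a theorem the paper proves. The paper presents \Cref{open:ban} in its ``Open Problems'' section and offers no proof; it only remarks that the best known bound is $O(\sqrt{d}+\sqrt{\log T})$, which follows from \Cref{thm:Banaszczyk12}, and that an affirmative answer to \Cref{open:prefix} would settle it. Your write-up recovers exactly these two observations --- the convex-body argument with $K=r\cdot\B_2^d$ yielding $O(\sqrt{d+\log T})$, and the implication from the $\ell_\infty$ strong Koml\'os question via $\|\cdot\|_2\le\sqrt{d}\,\|\cdot\|_\infty$ --- so your assessment of the state of play matches the paper's.

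Beyond that, your proposed lines of attack (prefix-level dimension reduction, and a blocked rebalancing scheme in the spirit of \Cref{lem:block_induction}) are reasonable heuristics, and you are right to flag the core obstruction: adversarial blocks can be confined to low-dimensional subspaces, so cancellation of the running discrepancy vector $w$ cannot be guaranteed the way it is in the smoothed analysis. That is precisely why the smoothed argument does not transfer. Since there is no proof in the paper to compare against, there is nothing further to adjudicate; your proposal is an honest and accurate account of why the problem is open and where the difficulty lies.
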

 Observe that an affirmative answer to \Cref{open:prefix} would give a positive answer to both these questions.
 %
 For $T=O(d)$, Spencer  \cite{Spencer86Prefix} showed a bound of $O(\sqrt{d})$ for \Cref{open:spencer}. The best known bound for \Cref{open:ban} is $O(\sqrt{d} + \sqrt{\log T})$ implied by \Cref{thm:Banaszczyk12}.

Another related open problem is regarding prefix discrepancy in the Beck-Fiala setting --- here, the input vectors are $s$ sparse and one is interested in a discrepancy bound in terms of the sparsity $s$. For the Beck-Fiala problem (where one is only interested in the final signed sum vector instead of all prefixes), the classical result of \cite{BeckFiala-DAM81} shows that the discrepancy is at most $O(s)$. Can we obtain an analogus result for all prefixes?
The best known results of \cite{BaranyGrinberg81, B12} imply a bound of $\min\{O(d),O(\sqrt{s\log(dT)})\}$. We note that a positive answer to \Cref{open:prefix} would imply an $O(\sqrt{s})$ bound on prefix discrepancy.

\paragraph{Smoothed Analysis and Combinatorial Vector Balancing.}
An interesting direction is to  improve the best known bounds for classical discrepancy problems  in the   smoothed analysis setting, e.g., the Koml\'os or Beck-Fiala conjectures. Since smoothed  settings captures stochastic settings, such improvements will be very powerful.

The next question concerns the combinatorial vector balancing problem introduced in this paper. We prove bounds in this paper that are tight up to polylogarithmic factor. But as the same bound holds for DAGs and paths (as in Banaszczyk's result), it is possible that it might hold in this more general setting as well. 

\begin{open}
\label{ques:relate-highd}
Can we show that for any general set-system $\mathcal{S}$ on $T$ elements, 
 \[
 \cdisc(\CS,d) \leq \herdisc(\mathcal{S})  \cdot O(\sqrt{\log d + \log T}) \enspace ?\]
\end{open}

\paragraph{Algorithmic Guarantees.} 
Lastly, it remains a very interesting problem in most of the cases above to make the corresponding results algorithmic, i.e., can we find a coloring achieving the bound in polynomial time. In  particular, an intriguing open question is to design an algorithm that achieves Banaszczyk's bound  for all prefixes given in  \Cref{thm:Banaszczyk12}.
If one only cares about the final discrepancy, then  \cite{BansalDGL18} gives an algorithm that matches this bound (also see~\cite{BansalG17} for a related algorithmic result). An algorithmic version of \Cref{thm:Banaszczyk12} would make several other discrepancy results algorithmic, including some of the results in this paper, the bound for Tusn\'ady's problem given in \cite{Nikolov-Mathematika19}, and the various applications of the Steinitz problem, as asked in \cite{BansalDGL18}. More generally, one can even hope for a polynomial-time algorithm for the  non-constructive result of  Banaszczyk for the prefix discrepancy setting~\cite[Lemma~2.5]{B12}, where one cares about convex bodies with large enough Gaussian measure. 

\ignore{
\begin{itemize}
    \item Strong Komlos conjectures
    
    \item Combinatorial Vec Balancing vs Hered discrepancy
    
    \item Smoothed analysis for well known discrepancy conjectures. Note that such smoothed guarantees are much stronger than those for stochastic inputs.
    
    \item Beck-Fiala for prefixes, can we get f(sparsity)?
    
    \item Make Banaczyk's prefix and Sasho's Tusnady result constructive.
\end{itemize}

General conjecture. Lower bound of $\sqrt{\log T}$ even for $d=2$. What about $d=1$?}

{\small
\bibliographystyle{alpha}
\bibliography{fullbib}
}
\end{document}